\renewcommand\nonnegatives{\reals_+} 
\newcommand\pf{PF\@\xspace}
\newcommand\ffbsi{FFBS\@\xspace}  
\newcommand\acf{ACF\@\xspace}
\newcommand\pgas{PGAS\@\xspace}
\newcommand\pgbs{PGBS\@\xspace}
\newcommand\mcem{MCEM\@\xspace}
\newcommand\saem{SAEM\@\xspace}
\newcommand\TV{\operatorname{TV}}
\newcommand{\I}{\mathbbm{1}}
\newcommand\half{\textstyle\frac{1}{2}} 
\newcommand\gibbsnum[1]{\emph{(#1)}}
\newcommand\normsum[1]{\sum_#1}    
\newcommand\xx{\mathbf{x}}
\renewcommand\aa{\mathbf{a}}
\newcommand\extendedspace{\Omega}
\newcommand\tDist[2]{\gamma_{#1,#2}}
\newcommand\ntDist[2]{\bar\gamma_{#1,#2}}
\newcommand\etDist[2]{\widehat\gamma_{#1,#2}^\Np} 
\newcommand\normconst[2]{Z_{#1,#2}}
\newcommand\M[2]{M_{#1,#2}}
\renewcommand\r[2]{r_{#1,#2}}
\newcommand\wf[2]{W_{#1,#2}}
\newcommand\kPGAS[1][\theta]{P_{#1}^\Np}
\newcommand\kPGBS[1][\theta]{P_{\mathsf{BS},#1}^\Np}
\newcommand\ergoR[1][\theta]{R_{#1}}
\newcommand\ergorho[1][\theta]{\rho_{#1}}
\newcommand\ergoepsilon[1][\theta]{\varepsilon_{#1}}
\newcommand\ergokappa[1][\theta]{\kappa_{#1}}
\newcommand\steplength{\alpha}
\newcommand\sampletime{dt}
\newcommand\populationsize{\mathcal{P}}
\DeclareMathOperator\logit{logit}
\newcommand\asprob{\rho}
\newcommand\asprobtrunc[1][\ell]{\widehat\rho_{#1}}
\newcommand\KLD{D_\mathsf{KLD}}
\newcommand\DTV{D_\mathsf{TV}}
\newlength\Papproxplotheight
\newlength\boxplotheight
\newtheoremstyle{ex}
  { }
  { }
  { }
  { }
  {\bfseries}
  { }
  { }
  {\thmname{#1}\thmnumber{ #2}:\thmnote{ #3}}
\theoremstyle{ex}
\theoremstyle{remark}
\theoremstyle{plain}
\newtheorem{definition}{Definition}
\newtheorem{theorem}{Theorem}
\newtheorem{lemma}{Lemma}
\newtheorem{proposition}{Proposition}
\newtheorem{proc}{Procedure}
\newcounter{asmp}
\def\asmpfont{\upshape}
\newenvironment{asmp}{\refstepcounter{asmp}\par\trivlist
   \item[\hskip \labelsep{\bfseries(A\theasmp)}]\asmpfont}
   {\endtrivlist}
\title{Particle Gibbs with Ancestor Sampling}
\author{
  Fredrik Lindsten\\
  Div. of Automatic Control\\
  Link{\"o}ping University\\
  \texttt{lindsten@isy.liu.se} \\
  \and
  Michael I. Jordan\\
  Depts. of EECS and Statistics\\
  University of California, Berkeley\\
  \texttt{jordan@cs.berkeley.edu}
  \and
  Thomas B. Sch\"on \\
  Dept. of Information Technology\\
  Uppsala University\\
  \texttt{thomas.schon@it.uu.se} \\
}
\begin{document}
\maketitle

\begin{abstract}
Particle Markov chain Monte Carlo (\pmcmc) is a systematic way of combining 
the two main tools used for Monte Carlo statistical inference: sequential 
Monte Carlo (\smc) and Markov chain Monte Carlo (\mcmc).
We present a novel \pmcmc algorithm that we refer to as \emph{particle 
Gibbs with ancestor sampling} (\pgas).  \pgas provides the data analyst
with an off-the-shelf class of Markov kernels that can be used to simulate
the typically high-dimensional and highly autocorrelated state trajectory 
in a state-space model.
The \emph{ancestor sampling} procedure enables fast mixing of the \pgas kernel
even when using seemingly few particles in the underlying \smc sampler. 
This is important as it can significantly reduce the computational burden
that is typically associated with using \smc.
\pgas is conceptually similar to the existing \emph{\pg with backward 
simulation} (\pgbs) procedure.  Instead of using separate forward and 
backward sweeps as in \pgbs, however, we achieve the same effect in a 
single forward sweep.
This makes \pgas well suited for addressing inference problems not only
in state-space models, but also in models with more complex dependencies, 
such as non-Markovian, Bayesian nonparametric, and general probabilistic 
graphical models.
\end{abstract}

\section{Introduction}\label{sec:intro}%
Monte Carlo methods are one of the standard tools for inference in statistical 
models as they, among other things, provide a systematic approach to the problem
of computing Bayesian posterior probabilities. Sequential Monte Carlo (\smc) 
\cite{DoucetJ:2011,DelMoralDJ:2006} and Markov chain Monte Carlo (\mcmc) 
\cite{RobertC:2004,Liu:2001} methods in particular have found application to 
a wide range of data analysis problems involving complex, high-dimensional
models.  These include state-space models (\ssm{s}) which are used in the 
context of time series and dynamical systems modeling in a wide range of 
scientific fields.  The strong assumptions of linearity and Gaussianity that 
were originally invoked for \ssm{s} have indeed been weakened by decades of 
research on \smc and \mcmc.

These methods have not, however, led to a substantial weakening of a further strong assumption, that of
Markovianity. It remains a major challenge to develop efficient inference algorithms for
models containing a latent stochastic process which, in contrast with the state process in an \ssm, is non-Markovian.
Such non-Markovian latent variable models arise in various settings, either from direct modeling or
via a transformation or marginalization of an \ssm. We discuss this further in Section~\ref{sec:nonmarkov}; see also \cite[Section~4]{LindstenS:2013}.

In this paper we present a new tool in the family of Monte Carlo methods which is particularly useful
for inference in \ssm{s} and, importantly, in non-Markovian latent variable models.
However, the proposed method is by no means limited to these model classes.
We work within the framework of particle \mcmc (\pmcmc) \cite{AndrieuDH:2010} which is
a systematic way of combining \smc and \mcmc, exploiting the strengths of both techniques.
More specifically, \pmcmc samplers make use of \smc to construct efficient, high-dimensional \mcmc kernels with certain
invariance properties. These kernels can then be used as off-the-shelf components in \mcmc algorithms and other inference
strategies relying on Markov kernels, such as Markovian stochastic approximation methods.
\pmcmc has, in a relatively short period of time, found many applications in areas such as hydrology \cite{VrugtBDS:2013},
finance \cite{PittSGK:2012}, systems biology \cite{GolightlyW:2011}, and epidemiology \cite{RasmussenRK:2011}, to mention a few.

Our method builds on the particle Gibbs (\pg) sampler proposed by \cite{AndrieuDH:2010}.
In \pg, the aforementioned Markov kernel is constructed by running an \smc sampler
in which one particle trajectory is set deterministically to 
a reference trajectory that is specified \emph{a priori}.
After a complete run of the \smc algorithm, a new trajectory is obtained by selecting one of the particle trajectories
with probabilities given by their importance weights. The effect of the reference trajectory
is that the resulting Markov kernel leaves its target distribution invariant, regardless of the number
of particles used in the underlying \smc algorithm.

However, \pg suffers from a serious drawback, which is that the mixing of the
Markov kernel can be very poor when there is path degeneracy in the underlying \smc sampler~\cite{LindstenS:2013,ChopinS:2013}.
Unfortunately, path degeneracy is inevitable for high-dimensional problems, which significantly reduces
the applicability of \pg. This problem has been addressed in the generic setting of \ssm{s} by adding a backward simulation 
step to the \pg sampler, yielding a method denoted as \emph{\pg with backward
simulation} (\pgbs) \cite{WhiteleyAD:2010,LindstenS:2012}.  It has been found that this considerably improves
mixing, making the method much more robust to a small number of particles
as well as growth in the size of the data \cite{LindstenS:2013,ChopinS:2013,WhiteleyAD:2010,LindstenS:2012}.

Unfortunately, however, the application of backward simulation is problematic
for models with more intricate dependencies than in \ssm{s}, such as non-Markovian latent variable models.
The reason is that we need to consider complete
trajectories of the latent process during the backward simulation pass
(see Section~\ref{sec:nonmarkov} for details). The method proposed in this paper,
which we refer to as \emph{particle Gibbs with ancestor sampling} (\pgas),
is geared toward this issue. \pgas alleviates the problem with path degeneracy by modifying the original \pg kernel with
a so called ancestor sampling step, thereby achieving the same effect as backward sampling, but without an explicit backward pass.

The \pgas Markov kernel is constructed in Section~\ref{sec:pgas}, extending the preliminary work that we have
previously published in \cite{LindstenJS:2012}.
It is also illustrated how ancestor sampling
can be used to mitigate the problems with path degeneracy which deteriorates the performance of \pg. In Section~\ref{sec:justification}
we establish the theoretical validity of the \pgas approach, including a novel uniform ergodicity result.
We then show specifically how \pgas can be used for inference and learning of \ssm{s} and of non-Markovian latent variable models in
Sections~\ref{sec:ssm}~and~\ref{sec:nonmarkov}, respectively.
The \pgas algorithm is then illustrated on several numerical examples in Section~\ref{sec:eval}.
As part of our development, we also propose a truncation strategy specifically for non-Markovian models.
This is a generic method that is also applicable to \pgbs, but, as we show in the simulation study in Section~\ref{sec:eval}, the effect of the
truncation error is much less severe for \pgas than for \pgbs.  Indeed, we obtain up to an order of magnitude increase
in accuracy in using \pgas when compared to \pgbs in this study.
Finally, in Section~\ref{sec:discussion} we conclude and 
point out possible directions for future work.

\section{Sequential Monte Carlo}\label{sec:smc}%
Let $\tDist{\theta}{t}(x_{1:t})$, for $t = \range{1}{\T}$, be a sequence of unnormalized densities\footnote{With respect to some dominating measure which we denote simply as $dx_{1:t}$.} on the measurable space $(\setX^t, \sigmaX^t)$,
parameterized by $\theta \in \Theta$.
Let $\ntDist{\theta}{t}(x_{1:t})$ be the corresponding
normalized probability densities:
\begin{align}
  \ntDist{\theta}{t}(x_{1:t}) = \frac{ \tDist{\theta}{t}(x_{1:t}) }{ \normconst{\theta}{t} },
\end{align}
where $\normconst{\theta}{t} = \int \ntDist{\theta}{t}(x_{1:t})\,dx_{1:t}$
and where it is assumed that $\normconst{\theta}{t} > 0, \forall\theta\in\Theta$.
For instance, in the (important) special case of an \ssm we have $\ntDist{\theta}{t}(x_{1:t}) = p_\theta(x_{1:t} \mid y_{1:t})$,
$\tDist{\theta}{t}(x_{1:t}) = p_\theta(x_{1:t}, y_{1:t})$, and $\normconst{\theta}{t} = p_\theta(y_{1:t})$.
We discuss this special case in more detail in Section~\ref{sec:ssm}.

To draw inference about the latent variables $x_{1:\T}$, as well as to enable learning of the model parameter $\theta$,
a useful approach is to construct a Monte Carlo algorithm to draw samples from $\ntDist{\theta}{\T}(x_{1:\T})$. The sequential nature
of the problem suggests the use of \smc methods; in particular, particle filters 
(\pf{s})~\cite{DoucetJ:2011,DelMoralDJ:2006,PittS:1999}.

We start by reviewing a standard \smc sampler, which will be used to construct the \pgas algorithm in the
consecutive section. We will refer to the index variable $t$ as time, but in general it might not have any temporal meaning.
Let $\{x_{1:t-1}^i, w_{t-1}^i\}_{i=1}^\Np$ be a weighted particle system targeting $\ntDist{\theta}{t-1}(x_{1:t-1})$.
That is, the weighted particles define an empirical point-mass approximation of the target distribution given by
\begin{align}
  \etDist{\theta}{t-1}(dx_{1:t-1}) = \sum_{i=1}^\Np \frac{ w_{t-1}^i}{ \sum_l w_{t-1}^l } \delta_{x_{1:t-1}^i}(dx_{1:t-1}).
\end{align}
This particle system is propagated to time $t$ by sampling $\{a_t^i, x_t^i\}_{i=1}^\Np$ independently from a proposal kernel,
\begin{align}
  \label{eq:smc:joint_proposal}
  \M{\theta}{t}(a_t, x_t) = \frac{w_{t-1}^{a_t} }{\normsum{l} w_{t-1}^l } \r{\theta}{t}(x_t \mid x_{1:t-1}^{a_t}).
\end{align}
Note that $\M{\theta}{t}$ depends on the complete particle system up to time $t-1$, $\{x_{1:t-1}^i, w_{t-1}^i\}_{i=1}^\Np$,
but for notational convenience we shall not make that dependence explicit.
Here, $a_t^i$ is the index of the ancestor particle of $x_t^i$.
In this formulation, the resampling step is implicit and corresponds to sampling these \emph{ancestor indices}.
When we write $x_{1:t}^i$ we refer to the ancestral path of particle $x_t^i$. That is, the particle trajectory is defined recursively as
\begin{align}
  \label{eq:smc:trajectory}
  x_{1:t}^i  = (x_{1:t-1}^{a_t^i}, x_t^i).
\end{align}
Once we have generated $\Np$ ancestor indices and particles from the proposal kernel \eqref{eq:smc:joint_proposal},
the particles are weighted according to $w_t^i = \wf{\theta}{t}(x_{1:t}^i)$ where the weight function is given by
\begin{align}
  \label{eq:smc:weightfunction}
  \wf{\theta}{t}(x_{1:t}) = \frac{\tDist{\theta}{t}(x_{1:t})}{ \tDist{\theta}{t-1}(x_{1:t-1}) \r{\theta}{t}(x_t \mid x_{1:t-1})},
\end{align}
for $t \geq 2$. The procedure is initialized by sampling from a proposal density $x_1^i \sim \r{\theta}{1}(x_1)$ and assigning importance weights
$w_1^i = \wf{\theta}{1}(x_1^i)$ with $\wf{\theta}{1}(x_1) = \tDist{\theta}1(x_1)/\r{\theta}{1}(x_1)$.
The \smc sampler is summarized in Algorithm~\ref{alg:smc:smc}.

\begin{algorithm}
  \caption{Sequential Monte Carlo (each step is for $i = \range{1}{\Np}$)}
  \label{alg:smc:smc}
  \begin{algorithmic}[1]
    \STATE Draw $x_1^i \sim \r{\theta}{1}(x_1)$. 
    \STATE Set $w_1^i = \wf{\theta}{1}(x_1^i)$. 
    \FOR{$t = 2$ \TO $\T$}
    \STATE Draw $\{ a_t^i, x_t^i \} \sim \M{\theta}{t}( a_t, x_t )$. 
    \STATE Set $x_{1:t}^i = ( x_{1:t-1}^{a_t^i}, x_t^i )$. 
    \STATE Set $w_t^i = \wf{\theta}{t}(x_{1:t}^{i})$. 
    \ENDFOR
  \end{algorithmic}
\end{algorithm}

It is interesting to note that the joint law of all the random variables generated by Algorithm~\ref{alg:smc:smc}
can be written down explicitly. Let
\begin{align*}
\xx_t &= \crange{x_t^1}{x_t^\Np} &&\text{and } & \aa_t &= \crange{a_t^1}{a_t^\Np},
\end{align*}
refer to all the particles and ancestor indices, respectively, generated at time~$t$ of the algorithm.
It follows that the \smc sampler generates a collection of random variables
$\{\xx_{1:\T}, \aa_{2:\T}\} \in \setX^{\Np\T} \times \crange{1}{N}^{\Np(\T-1)}$.
Furthermore, $\{a_t^i, x_t^i\}_{i=1}^\Np$ are drawn independently (conditionally on the particle system
generated up to time $t-1$) from the proposal kernel $\M{\theta}{t}$, and similarly at time $t=1$. Hence, the joint probability density function
(with respect to a natural product of $dx$ and counting measure) of these variables is given by
\begin{align}
  \psi_\theta(\xx_{1:\T}, \aa_{2:\T}) \triangleq \prod_{i = 1}^\Np \r{\theta}{1}(x_1^i) \prod_{t = 2}^{\T} \prod_{i = 1}^\Np \M{\theta}{t}(a_t^i, x_t^i).
\end{align}

\section{The \pgas kernel}\label{sec:pgas}%
We now turn to the construction of \pgas, a family of Markov kernels on the space of trajectories $(\setX^\T, \sigmaX^\T)$.
We will provide an algorithm for generating samples from these Markov kernels, which are
thus defined implicitly by the algorithm.

\subsection{Particle Gibbs}
Before stating the \pgas algorithm, we review the main ideas of the \pg algorithm 
of \cite{AndrieuDH:2010} and we then turn to our proposed modification of 
this algorithm via the introduction of an \emph{ancestor sampling} step.

\pg is based on an \smc sampler, akin to a standard
\pf, but with the difference that one particle trajectory is specified \emph{a priori}.
This path, denoted as $x_{1:\T}^\prime = \prange{x_1^\prime}{x_\T^\prime}$, serves
as a reference trajectory.
Informally, it can be thought of as guiding the simulated particles to a relevant region of the state space.
After a complete pass of the \smc algorithm, a trajectory $x_{1:\T}^\star$ is sampled from among the
particle trajectories. That is, we draw $x_{1:\T}^\star$ with $\Prb(x_{1:\T}^\star = x_{1:\T}^i) \propto w_\T^i$.
This procedure thus maps $x_{1:\T}^\prime$ to a probability distribution on $\sigmaX^\T$, implicitly defining
a Markov kernel on $(\setX^\T, \sigmaX^\T)$.

In a standard \pf, the samples $\{a_t^i, x_t^i\}$ are drawn independently from the proposal kernel \eqref{eq:smc:joint_proposal}
for $i = \range{1}{\Np}$. When sampling from the \pg kernel, however,
we condition on the event that the reference trajectory $x_{1:\T}^\prime$ is retained throughout the sampling procedure.
To accomplish this, we sample according to \eqref{eq:smc:joint_proposal}
only for $i=\range{1}{\Np-1}$. The $\Np$th particle and its ancestor index are then set deterministically as $x_t^\Np = x_t^\prime$ and $a_t^\Np = \Np$.
This implies that after a complete pass of the algorithm, the $\Np$th particle path coincides with the reference trajectory,
\ie, $x_{1:\T}^\Np = x_{1:\T}^\prime$.

The fact that $x_{1:\T}^\prime$ is used as a reference trajectory in the \smc sampler implies an invariance
property of the \pg kernel which is of key relevance. More precisely, as show by \cite[Theorem~5]{AndrieuDH:2010},
for any number of particles $\Np \geq 1$ and for any $\theta \in \Theta$, the PG kernel leaves the exact target distribution $\ntDist{\theta}{\T}$ invariant.
We return to this invariance property below, when it is shown to hold also for the proposed \pgas kernel.

\subsection{Ancestor sampling}
As noted above, the \pg algorithm keeps the reference trajectory $x_{1:\T}^\prime$ intact throughout the
sampling procedure. While this results in a Markov kernel which leaves $\ntDist{\theta}{\T}$
invariant, it has been recognized that the mixing properties of this kernel can be very poor due to path degeneracy \cite{LindstenS:2013,ChopinS:2013}.

To address this fundamental problem we now turn to our new procedure, \pgas.
The idea is to sample a new value for the index variable $a_t^\Np$ in an \emph{ancestor sampling} step.
While this is a small modification of the algorithm, the improvement in mixing can be quite considerable;
see Section~\ref{sec:pgas:degeneracy} and the numerical evaluation in Section~\ref{sec:eval}.
The ancestor sampling step is implemented as follows.

At time $t \geq 2$, we consider the part of the reference trajectory $x_{t:\T}^\prime$ ranging from the current time $t$ to the
final time point $\T$. The task is to artificially assign a history to this partial path. This is done by connecting $x_{t:\T}^\prime$
to one of the particles $\{ x_{1:t-1}^i \}_{i=1}^\Np$. Recall that the ancestry
of a particle is encoded via the corresponding ancestor index. Hence, 
we can connect the partial reference path to one of the particles $\{ x_{1:t-1}^i \}_{i=1}^\Np$ by assigning
a value to the variable $a_t^\Np \in \crange{1}{\Np}$. To do this, first we compute the weights
\begin{align}
  \label{eq:pgas:as_weights}
  \widetilde w_{t-1\mid\T}^i \triangleq w_{t-1}^i \frac{\tDist{\theta}{\T}( (x_{1:t-1}^i, x_{t:\T}^\prime ) )}{ \tDist{\theta}{t-1} (x_{1:t-1}^i) }
\end{align}
for $i = \range{1}{\Np}$. Here, $(x_{1:t-1}^i, x_{t:\T}^\prime )$ refers to the point in $\setX^\T$ formed by concatenating
the two partial trajectories. Then, we sample $a_t^\Np$ with $\Prb(a_t^\Np = i) \propto \widetilde w_{t-1\mid \T}^{i}$.
The expression above can be understood as an application of Bayes' theorem, where the importance weight $w_{t-1}^i$ is
the prior probability of the particle $x_{1:t-1}^i$ and the ratio between the target densities in \eqref{eq:pgas:as_weights}
can be seen as the likelihood that $x_{t:\T}^\prime$ originated from $x_{1:t-1}^i$.
A formal argument for why \eqref{eq:pgas:as_weights} provides the correct ancestor sampling distribution,
in order to retain the invariance properties of the kernel,
is detailed in the proof of Theorem~\ref{thm:pgas:invariance} in Section~\ref{sec:justification}.

The sampling procedure outlined above is summarized in Algorithm~\ref{alg:pgas:pgas} and
the family of \pgas kernels is formally defined below.
Note that the only difference between \pg and \pgas is on line~\ref{row:atN} of Algorithm~\ref{alg:pgas:pgas} (where, for \pg, we
would simply set $a_t^\Np = \Np$). However, as we shall see, the effect of this small modification
on the mixing of the kernel is quite significant.

\begin{definition}[\pgas kernels]\label{def:pgas}
  For any $\Np \geq 1$ and any $\theta \in \Theta$, Algorithm~\ref{alg:pgas:pgas}
  maps $x_{1:\T}^\prime$ stochastically into $x_{1:\T}^\star$, thus implicitly defining a Markov kernel $\kPGAS$ on $(\setX^\T, \sigmaX^\T)$.
  The family of Markov kernels $\{ \kPGAS : \theta \in \Theta \}$, indexed by $\Np \geq 1$, is referred to as the
  \pgas family of kernels.
\end{definition}

\begin{algorithm}
  \caption{\pgas Markov kernel}
  \label{alg:pgas:pgas}
  \begin{algorithmic}[1]
    \REQUIRE Reference trajectory $x_{1:\T}^\prime \in \setX^\T$.
    \STATE \label{row:x1}Draw $x_1^i \sim \r{\theta}{1}(x_1)$ for $i = \range{1}{\Np-1}$. 
    \STATE Set $x_1^{\Np} = x_1^\prime$.
    \STATE Set $w_1^i = \wf{\theta}{1}(x_1^i)$ for $i = \range{1}{\Np}$.
    \FOR{$t = 2$ \TO $\T$}
    \STATE \label{row:atxt} Draw $\{ a_t^i, x_t^i \} \sim \M{\theta}{t}( a_t, x_t )$ for  $i =\range{1}{\Np-1}$.
    \STATE Set $x_t^{\Np} = x_{t}^\prime$.
    \STATE Compute $\{ \widetilde w_{t-1\mid\T}^i \}_{i=1}^\Np$ according to \eqref{eq:pgas:as_weights}.
    \STATE \label{row:atN} Draw $a_t^{\Np}$ with $\Prb(a_t^{\Np} = i) \propto \widetilde w_{t-1\mid\T}^i$.
    \STATE Set $x_{1:t}^i = ( x_{1:t-1}^{a_t^i}, x_t^i )$ for $i = \range{1}{\Np}$.
    \STATE Set $w_t^i = \wf{\theta}{t}(x_{1:t}^{i})$ for $i = \range{1}{\Np}$.
    \ENDFOR
    \STATE \label{row:k} Draw $k$ with $\Prb(k = i) \propto w_\T^i$.
    \RETURN $x_{1:\T}^\star = x_{1:\T}^k$.
  \end{algorithmic}
\end{algorithm}

\subsection{The effect of path degeneracy on \pg and on \pgas}\label{sec:pgas:degeneracy}
We have argued that ancestor sampling can considerably improve the mixing of \pg.
To illustrate this effect and to provide an explanation of its cause, we consider a simple numerical example.
Further empirical evaluation of \pgas is provided in Section~\ref{sec:eval}.
Consider the stochastic volatility model,
\begin{subequations}
  \label{eq:pgas:sv_model}
  \begin{align}
    x_{t+1} &= ax_t + v_t, & v_t &\sim \N(0, \sigma^2), \\
    \label{eq:pgas:sv_model_b}
    y_t &= e_t \exp\left(\half x_t \right), & e_t &\sim \N(0,1),
  \end{align}
\end{subequations}
where the state process $\process{x_t}$ is latent and observations are made only via
the measurement process $\process{y_t}$. Similar models have been used to generalize
the Black-Scholes option pricing equation to allow for the variance to change over time \cite{ChesneyS:1989,MelinoT:1990}.

For simplicity, the parameter $\theta = (a, \sigma) = (0.9, 0.5)$ is assumed to be known.
A batch of $\T = 400$ observations are simulated from the system. Given these, we
seek the joint smoothing density $p(x_{1:\T} \mid y_{1:\T})$. To generate samples
from this density we employ both \pg and \pgas with varying number of particles ranging
from $\Np = 5$ to $\Np = \thsnd{1}$. We simulate sample paths of length $\thsnd{1}$ for each algorithm.
To compare the mixing, we look at the update rate of $x_t$ versus $t$, which is defined
as the proportion of iterations where $x_t$ changes value. The results
are reported in Figure~\ref{fig:pgas:degeneracy_illustration}, which reveals that
ancestor sampling significantly increases the probability of updating $x_t$ for $t$ far from $\T$.

\begin{figure}[ptb]
  \centering
  \includegraphics[width=0.5\columnwidth]{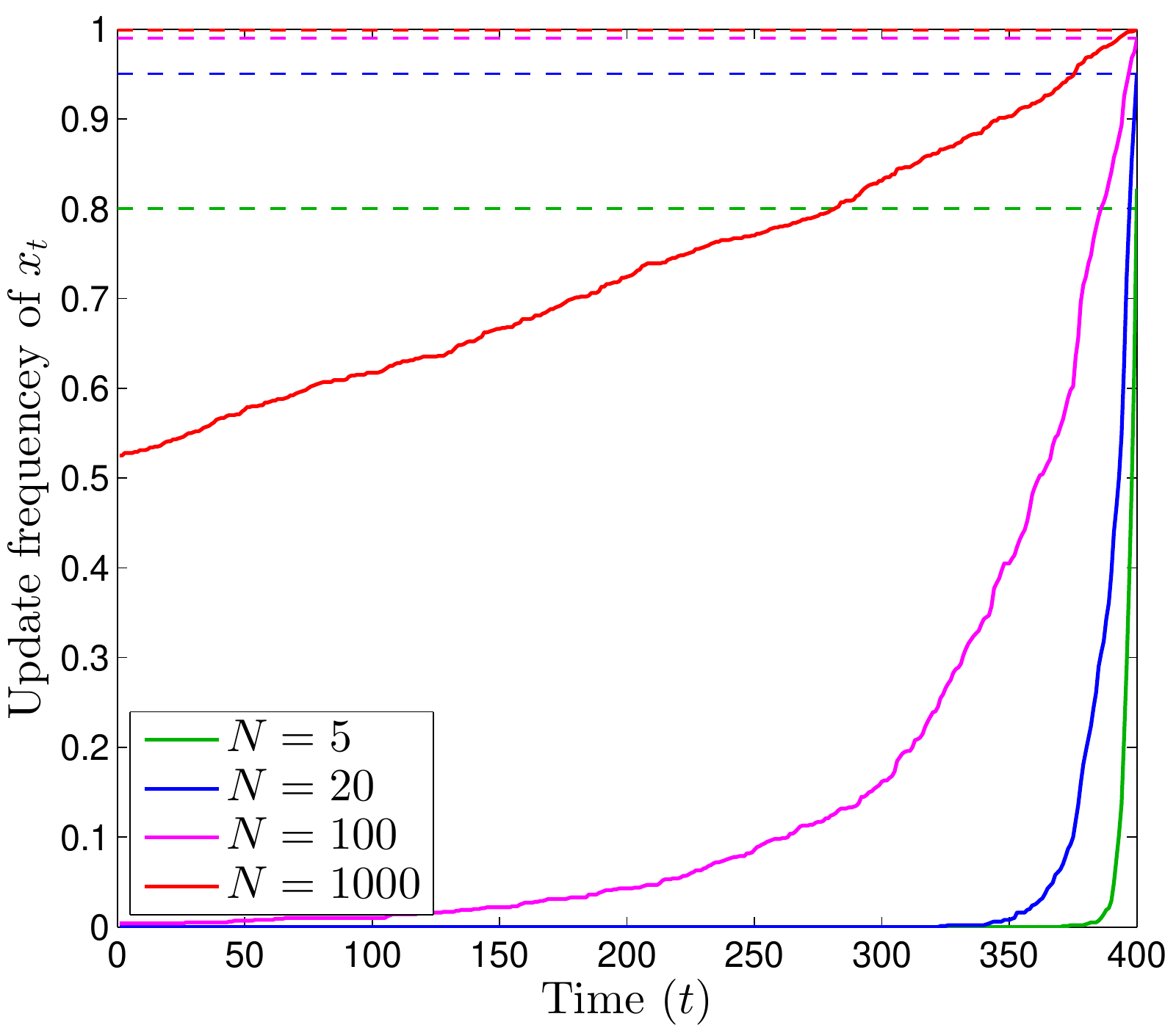}%
  \includegraphics[width=0.5\columnwidth]{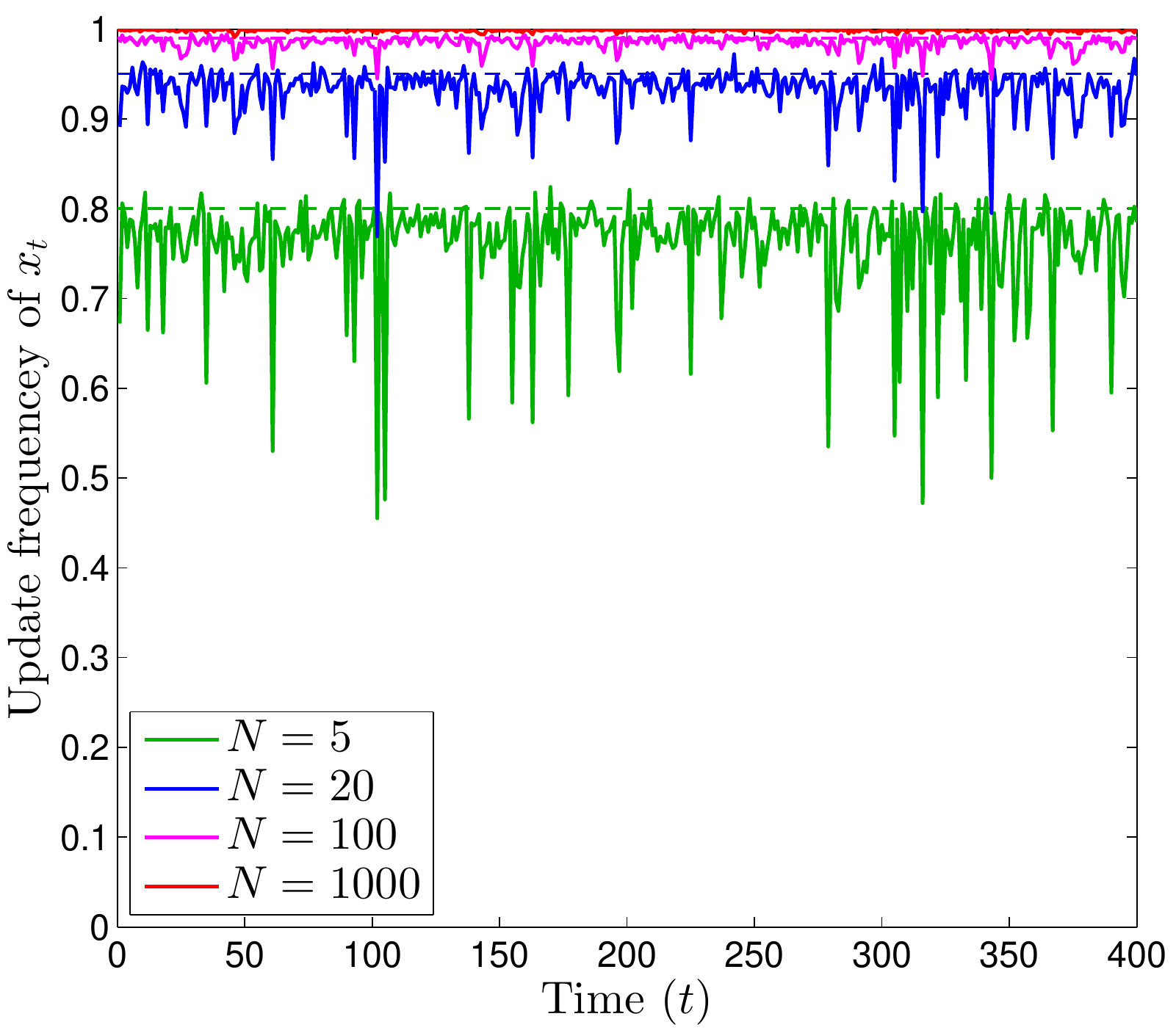}
  \caption{Update rates for $x_t$ versus $t \in \crange{1}{400}$ for \pg (left) and for \pgas (right). The dashed lines correspond to the ideal
  rates $(\Np-1)/\Np$.     (This figure is best viewed in color.)}
  \label{fig:pgas:degeneracy_illustration}
\end{figure}

The poor update rates for \pg is a manifestation of the well known path degeneracy problem of \smc samplers (see, \eg, \cite{DoucetJ:2011}).
Consider the process of sampling from the PG kernel for a fixed reference trajectory $x_{1:\T}^\prime$.
A particle system generated by the PG algorithm (corresponding to Algorithm~\ref{alg:pgas:pgas}, but with line~\ref{row:atN}
replaced with $a_t^\Np = \Np$) is shown in Figure~\ref{fig:pgas:pgex_consecutive} (left).
For clarity of illustration, we have used a small number of particles and time steps, $\Np = 20$ and $\T = 50$, respectively.
By construction the reference trajectory (shown by a thick blue line) is
retained throughout the sampling procedure. As a consequence, the particle system degenerates toward this trajectory
which implies that $x_{1:\T}^\star$ (shown as a red line) to a large extent will be identical to $x_{1:\T}^\prime$.

What is, perhaps, more surprising is that \pgas is so much more insensitive to the degeneracy issue.
To understand why this is the case, we analyze the procedure for sampling from
the PGAS kernel $\kPGAS(x_{1:\T}^\prime, \cdot)$ for the same reference trajectory $x_{1:\T}^\prime$ as above.
The particle system generated by Algorithm~\ref{alg:pgas:pgas} (with ancestor sampling)
is shown in Figure~\ref{fig:pgas:pgex_consecutive} (right). The thick blue lines are again
used to illustrate the reference particles, but now with updated ancestor indices. That is,
the blue line segments are drawn between $x_{t-1}^{a_t^\Np}$ and $x_t^\prime$ for $t \geq 2$.
It can be seen that the effect of ancestor sampling is that, informally, the reference trajectory is broken into pieces.
It is worth pointing out that the particle system still collapses; ancestor sampling does not prevent path degeneracy.
However, it causes the particle system to degenerate toward something
different than the reference trajectory. As a consequence, $x_{1:\T}^\star$ (shown as a red line in the figure) will with high probability be
substantially different from $x_{1:\T}^\prime$, enabling high update rates and thereby much faster mixing.

\begin{figure}[ptb]
  \includegraphics[width=0.5\columnwidth]{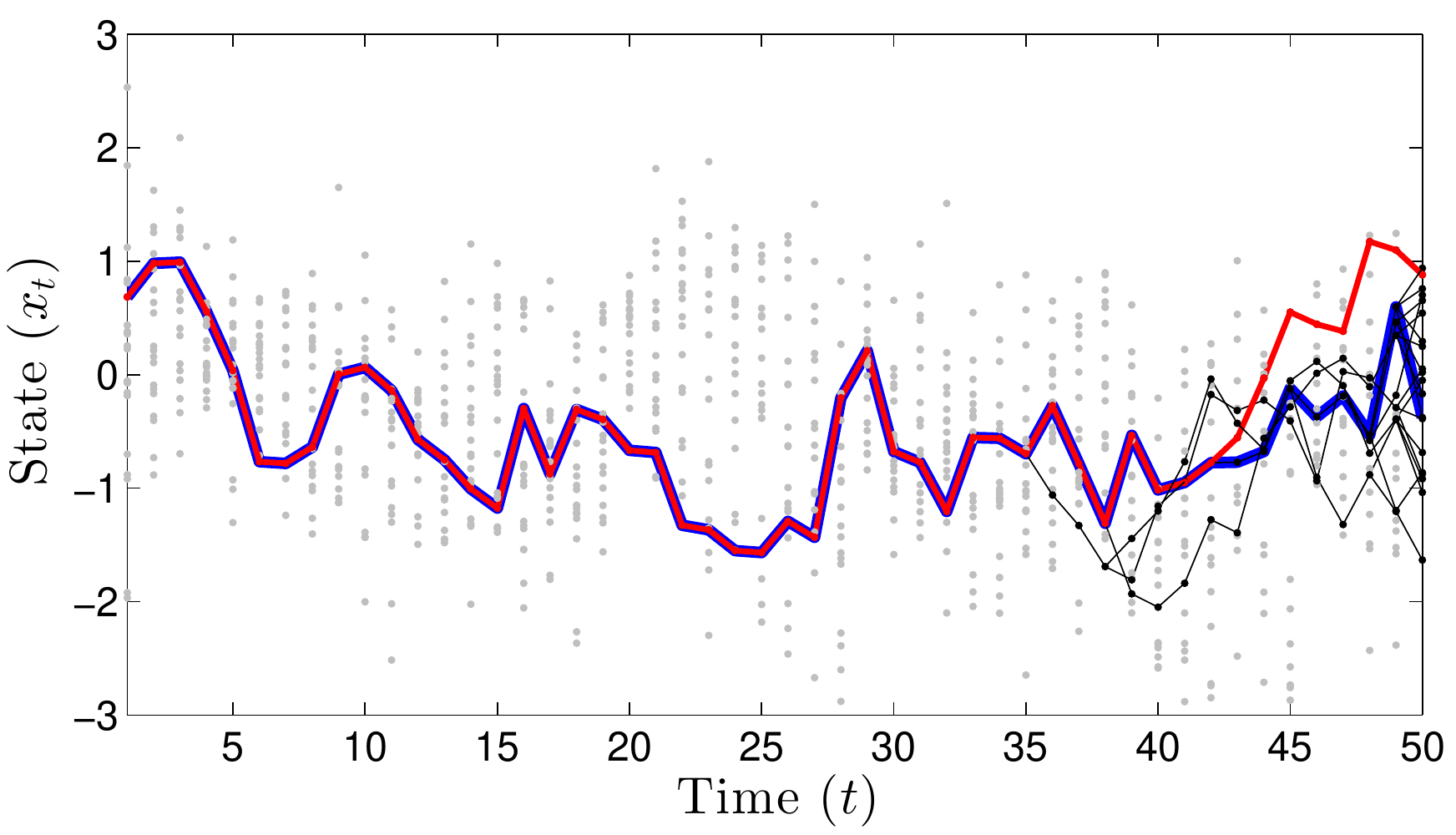}%
  \includegraphics[width=0.5\columnwidth]{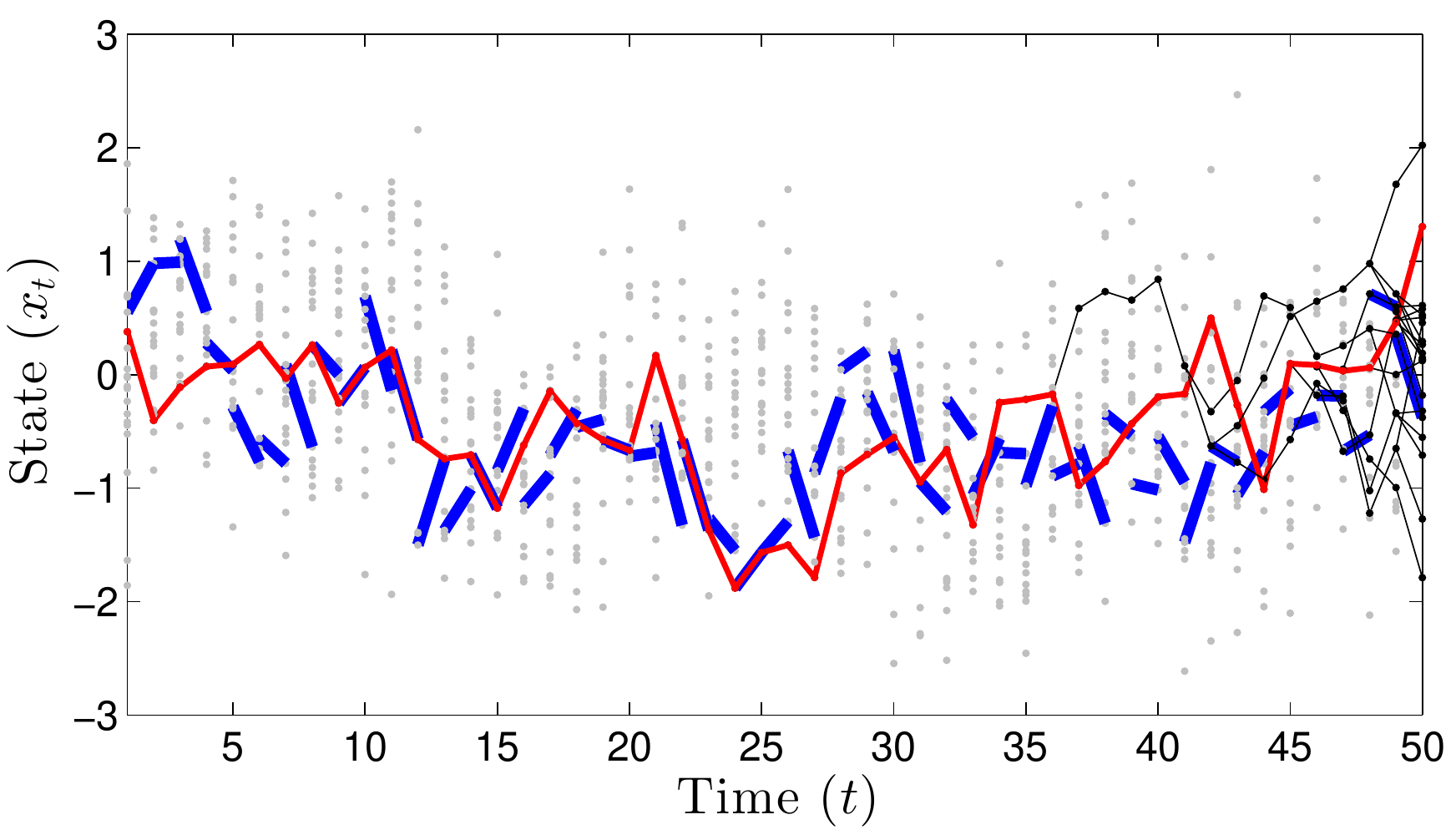}
 \caption{Particle systems generated by the \pg algorithm (left) and by the \pgas algorithm (right), for the same reference trajectory $x_{1:\T}^\prime$ (shown as a thick blue line in the left panel).
    The gray dots show the particle positions and the thin black lines show the ancestral dependencies of the particles. The extracted trajectory
    $x_{1:\T}^\star$ is illustrated with a red line. Note that, due to path degeneracy, the particles shown as grey dots are not
    reachable by tracing any of the ancestral lineages from time $\T$ and back. In the right panel, ancestor sampling has the effect of
    breaking the reference trajectory into pieces, causing the particle system to degenerate toward something different than $x_{1:\T}^\prime$.
    (This figure is best viewed in color.)}
  \label{fig:pgas:pgex_consecutive}
\end{figure}

\section{Theoretical justification}\label{sec:justification}%
\subsection{Stationary distribution}
We begin by stating a theorem, whose proof is provided later in this section,
which shows that the invariance property of \pg is not violated
by the ancestor sampling step.

\begin{theorem}
  \label{thm:pgas:invariance} For any $\Np \geq 1$ and $\theta \in \Theta$, the \pgas kernel $\kPGAS$ leaves $\ntDist{\theta}{\T}$ invariant:
  \begin{align*}
    \ntDist{\theta}{\T}(B) &= \int \kPGAS(x_{1:\T}^\prime, B) \ntDist{\theta}{\T}(dx_{1:\T}^\prime), & \forall B&\in \sigmaX^\T.
  \end{align*}
\end{theorem}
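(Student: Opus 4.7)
The plan is to follow the extended-target methodology of \cite{AndrieuDH:2010}. As a first step I would introduce, on the space of all random variables generated by Algorithm~\ref{alg:pgas:pgas}, the auxiliary target
\begin{equation*}
\phi_\theta(\xx_{1:\T}, \aa_{2:\T}, k) = \frac{\ntDist{\theta}{\T}(x_{1:\T}^{b^k})}{\Np^\T} \prod_{i \ne b_1^k} \r{\theta}{1}(x_1^i) \prod_{t=2}^\T \prod_{i \ne b_t^k} \M{\theta}{t}(a_t^i, x_t^i),
\end{equation*}
where $b_{1:\T}^k$ is the ancestral lineage of particle $(k,\T)$, given recursively by $b_\T^k = k$ and $b_{t-1}^k = a_t^{b_t^k}$. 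A direct bookkeeping calculation --- integrating out the non-reference proposal factors (each integrates to one) and summing over the $\Np^\T$ possible lineages $(k, b^k_{1:\T-1})$ --- confirms that the marginal of $x_{1:\T}^{b^k}$ under $\phi_\theta$ is $\ntDist{\theta}{\T}$. It therefore suffices to show that $\kPGAS$, lifted to the extended space, leaves $\phi_\theta$ invariant.

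Second, I would interpret Algorithm~\ref{alg:pgas:pgas} as a composition of $\phi_\theta$-invariant moves: (a) the draws of $\{a_t^i, x_t^i\}_{i=1}^{\Np-1}$ from $\M{\theta}{t}$ match the $\phi_\theta$-conditional of the non-reference coordinates given the reference lineage; (b) the ancestor sampling step on line~\ref{row:atN}; and (c) the terminal draw of $k$ with probability $\propto w_\T^k$, which is the $\phi_\theta$-conditional of $k$ given $(\xx,\aa)$. Parts (a) and (c) are precisely the moves used in the invariance proof of \pg in \cite[Theorem~5]{AndrieuDH:2010}, so the novel task is (b).

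The key computation is to verify that sampling $a_t^\Np$ with $\Prb(a_t^\Np = i) \propto \widetilde w_{t-1\mid\T}^i$ agrees with the $\phi_\theta$-conditional of $a_t^\Np$ after marginalizing the downstream non-reference variables that \pgas has not yet generated at this point of the sweep. Identifying the $a_t^\Np$-dependent factors of $\phi_\theta$ and using that every downstream proposal integrates to one, the dependence collapses to $\ntDist{\theta}{\T}(x_{1:\T}^{b^\Np})$ divided by the proposal factor at $(t,\Np)$ that would have placed the reference in the particle system. Expanding $\ntDist{\theta}{\T}$ via the weight recursion \eqref{eq:smc:weightfunction} and cancelling common proposal terms yields exactly
\begin{equation*}
w_{t-1}^i \frac{\tDist{\theta}{\T}((x_{1:t-1}^i, x_{t:\T}^\prime))}{\tDist{\theta}{t-1}(x_{1:t-1}^i)},
\end{equation*}
matching \eqref{eq:pgas:as_weights}.

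The main obstacle is making the downstream-marginalization argument in step (b) precise: because \pgas generates variables in a forward sweep, one must carefully track which product factors in $\phi_\theta$ change when $a_t^\Np$ varies --- including those indexed by the ancestral lineage $b^\Np_{1:t-1}$ at earlier times, which itself is altered by $a_t^\Np$ --- and verify that the telescoping cancellation between the target-density ratio and the removed proposal factors reproduces \eqref{eq:pgas:as_weights} cleanly. Once step (b) is in place, $\phi_\theta$-invariance of the composed extended-space kernel, combined with the marginalization identity of the first paragraph, immediately yields the desired invariance of $\ntDist{\theta}{\T}$ under $\kPGAS$.
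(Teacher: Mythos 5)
Your proposal follows essentially the same route as the paper: the same extended target $\phi_\theta$ on $\extendedspace$ with the same marginal/conditional factorization, the same reinterpretation of each step of Algorithm~\ref{alg:pgas:pgas} as a conditional of $\phi_\theta$ with the not-yet-generated downstream variables integrated out, and the same key computation showing that the ancestor-sampling conditional collapses to the weights \eqref{eq:pgas:as_weights} (this is the content of the paper's Lemma~\ref{lem:pgas:invariance}, culminating in \eqref{eq:pgas:cpf_reinterpreted_as} via the telescoping expansion \eqref{eq:pgas:target_factorization}).

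Two points you leave implicit are given explicit arguments in the paper and genuinely need them. First, your moves (a)--(c) are not ordinary Gibbs updates but \emph{partially collapsed} ones, and a composition of such steps preserves $\phi_\theta$-invariance only if the marginalization respects the order of the sweep, i.e.\ no variable that has been integrated out in one step is conditioned upon in a later step. The paper's Lemma~\ref{lem:pgas:partialcollapse} verifies exactly this (in the properly-collapsed sense of van Dyk and Park) by exhibiting an equivalent sweep of complete conditionals in which the discarded variables are never reused; without that check the ``immediately yields'' in your final sentence is not justified, since partial collapsing in the wrong order can destroy invariance. Second, the invariance of the extended-space sweep is naturally stated for a reference lineage $b_{1:\T}^\prime$ distributed uniformly under $\phi_\theta$, whereas Algorithm~\ref{alg:pgas:pgas} always places the reference at position $\Np$; the paper bridges this with a permutation-invariance argument (the law of $x_{1:\T}^\star$ does not depend on which slots the reference particles occupy), which you would also need before integrating over $x_{1:\T}^\prime \sim \ntDist{\theta}{\T}$ to conclude. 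Neither point changes your overall strategy, but both belong in a complete proof.
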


An apparent difficulty in establishing this result is that it is not possible to write down a simple,
closed-form expression for $\kPGAS$. In fact, the \pgas kernel is given by
\begin{align}
  \label{eq:pgas:kernel_as_expectation}
  \kPGAS(x_{1:\T}^\prime, B) = \E_{\theta,x_{1:\T}^\prime} \left[ \I_B(x_{1:\T}^k ) \right],
\end{align}
where $\I_B$ is the indicator function for the set $B$ and where $\E_{\theta,x_{1:\T}^\prime}$ denotes expectation with respect 
to all the random variables generated by Algorithm~\ref{alg:pgas:pgas}, \ie, all the particles  $\xx_{1:\T}$ and ancestor indices $\aa_{2:\T}$,
 as well as the index $k$.
Computing this expectation is not possible in general.
Instead of working directly with \eqref{eq:pgas:kernel_as_expectation}, however, we 
can adopt a strategy employed by \cite{AndrieuDH:2010}, treating all the random 
variables generated by Algorithm~\ref{alg:pgas:pgas}, $\{\xx_{1:\T}, \aa_{2:\T}, k \}$, 
as auxiliary variables, thus avoiding an intractable integration.
In the following, it is convenient to view $x_t^\Np$ as a random variable with distribution $\delta_{x_t^\prime}$.

Recall that the particle trajectory $x_{1:\T}^{k}$ is the ancestral path of the particle $x_\T^k$. That is, we can write
\begin{align}
  \label{eq:pgas:trajectorydef}
  x_{1:\T}^k = x_{1:\T}^{b_{1:\T}} \triangleq \prange{x_{1}^{b_1}}{x_\T^{b_\T}},
\end{align}
where the indices $b_{1:\T}$ are given recursively by the ancestor indices: $b_\T = k$ and $b_t = a_{t+1}^{b_{t+1}}$.
Let $\extendedspace \triangleq \setX^{\Np\T}\times\crange{1}{\Np}^{\Np(\T-1)+1}$ be the space of all random variables generated by Algorithm~\ref{alg:pgas:pgas}.
Following \cite{AndrieuDH:2010}, we then define a function $\phi_\theta: \extendedspace\mapsto\reals$ as follows:
\begin{align}
  \nonumber
  \phi_\theta(\xx_{1:\T}, \aa_{2:\T}, k)
  &= \phi_\theta(x_{1:\T}^{b_{1:\T}}, b_{1:\T})
  \phi_\theta(\xx_{1:\T}^{-b_{1:\T}}, \aa_{2:\T}^{-b_{2:\T}} \mid x_{1:\T}^{b_{1:\T}}, b_{1:\T})\\
  \label{eq:pgas:phidef}
  &\triangleq \underbrace{\vphantom{\prod_{\substack{i=1 \\ i\neq b_1 }}^\Np} 
      \frac{\ntDist{\theta}{\T}( x_{1:\T}^{b_{1:\T}})}{\Np^\T}}_{\text{marginal}}
    \underbrace{\prod_{\substack{i=1 \\ i\neq b_1 }}^\Np \r{\theta}{1}(x_1^i)\prod_{t = 2}^{\T} \prod_{\substack{i=1 \\ i\neq b_t }}^\Np \M{\theta}{t}(a_t^i, x_t^i)
    }_{\text{conditional}},
\end{align}
where we have introduced the notation
\begin{align*}
\xx_t^{-i} &= \{\range{x_t^1}{x_t^{i-1},\,\range{x_t^{i+1}}{x_t^\Np}}\},&
\xx_{1:\T}^{-b_{1:\T}} &= \crange{\xx_1^{-b_1}}{\xx_\T^{-b_\T}}
\end{align*}
and similarly for the ancestor indices.
By construction, $\phi_\theta$ is nonnegative and integrates to one, \ie, $\phi_\theta$ is a probability density function on $\Omega$.
 We refer to this density as the \emph{extended target density}.

The factorization into a marginal and a conditional density
is intended to reveal some of the structure inherent in the extended target density. In particular, the marginal density of the variables
$\{ x_{1:\T}^{b_{1:\T}}, b_{1:\T} \}$ is defined to be equal to the original target density
$\ntDist{\theta}{\T}( x_{1:\T}^{b_{1:\T}})$, up to a factor $\Np^{-\T}$ related to the index variables $b_{1:\T}$.
This has the important implication that if $\{ \xx_{1:\T}, \aa_{2:\T}, k \}$ are distributed according to $\phi_\theta$, then,
by construction, the marginal distribution of $x_{1:\T}^{b_{1:\T}}$ is $\ntDist{\theta}{\T}$.

By constructing an \mcmc kernel with invariant distribution $\phi_\theta$, we will thus
obtain a kernel with invariant distribution $\ntDist{\theta}{\T}$ (the \pgas kernel) as a byproduct.
To prove Theorem~\ref{thm:pgas:invariance} we will reinterpret all the steps of the \pgas algorithm as partially collapsed Gibbs steps for $\phi_\theta$.
The meaning of partial collapsing will be made precise in the proof of Lemma~\ref{lem:pgas:partialcollapse} below,
but basically it refers to the process of marginalizing out some of the variables of the model in the individual steps of the Gibbs sampler.
This is done in such a way that it does not violate the invariance property of the Gibbs kernel, 
\ie, each such Gibbs step will leave the extended target distribution invariant.
As a consequence, the invariance property of the \pgas kernel follows.
First we show that the \pgas algorithm in fact implements the following sequence of partially collapsed Gibbs steps for~$\phi_\theta$.

\begin{proc}[Instrumental reformulation of \pgas]
  \label{proc:pgas:pgas_instrumental}
  Given $x_{1:\T}^{\prime,b_{1:\T}^\prime} \in \setX^\T$  and $b_{1:\T}^\prime \in \crange{1}{\Np}^\T$:
  \begin{enumerate}
  \item[\gibbsnum{i}] Draw $\xx_{1}^{-b_{1}^\prime} \sim \phi_\theta( \,\cdot \mid  x_{1:\T}^{\prime,b_{1:\T}^\prime}, b_{1:\T}^\prime)$ and,
    for $t = 2$ to $\T$, draw:
    \begin{align*}
      \{\xx_t^{-b_t}, \aa_t^{ -b_t}\} \sim {}& \phi_\theta( \,\cdot \mid \xx_{1:t-1}^{-b_{1:t-1}^\prime}, \aa_{2:t-1}, x_{1:\T}^{\prime,b_{1:\T}^\prime}, b_{t-1:\T}^\prime), \\
      a_{t}^{b_{t}} \sim {}&\phi_\theta( \,\cdot \mid \xx_{1:t-1}^{-b_{1:t-1}^\prime}, \aa_{2:t-1}, x_{1:\T}^{\prime,b_{1:\T}^\prime}, b_{t:\T}^\prime),
    \end{align*}
  \item[\gibbsnum{ii}] Draw $k \sim {}\phi_\theta( \,\cdot \mid \xx_{1:\T}^{-b_{1:\T}}, \aa_{2:\T}, x_{1:\T}^{\prime,b_{1:\T}^\prime})$.
  \end{enumerate}
\end{proc}

\begin{lemma}\label{lem:pgas:invariance}%
  Algorithm~\ref{alg:pgas:pgas} is equivalent to the partially collapsed Gibbs sampler of Procedure~\ref{proc:pgas:pgas_instrumental},
  conditionally on $x_{1:\T}^{\prime,b_{1:\T}^\prime} = x_{1:\T}^\prime$ and $b_{1:\T}^\prime = \prange{\Np}{\Np}$.
\end{lemma}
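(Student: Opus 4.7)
The plan is to verify, step by step, that each elementary sampling step of Algorithm~\ref{alg:pgas:pgas} coincides with drawing from the corresponding conditional of $\phi_\theta$ specified in Procedure~\ref{proc:pgas:pgas_instrumental}, under the identification forced by setting $b_{1:\T}^\prime = (\Np,\ldots,\Np)$ and $x_{1:\T}^{\prime,b_{1:\T}^\prime} = x_{1:\T}^\prime$. A useful preliminary observation is that conditioning on $\aa_{2:t-1}$ together with $b_{t-1:\T}^\prime = (\Np,\ldots,\Np)$ already pins down the entire lineage $b_{1:\T}$: the indices $b_{t-1:\T}$ are forced to $\Np$ directly, while $b_{1:t-2}$ are recovered recursively from $b_{s-1}=a_s^{b_s}$ using the conditioned-on ancestor indices. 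In particular, $a_s^\Np = \Np$ for $s\geq t$ is a consequence of this, so the reference trajectory is identified with the $\Np$-th slot of the particle system throughout the algorithm.

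For the non-reference updates---the initialisation of $\xx_1^{-\Np}$ and each joint draw $\{\xx_t^{-\Np},\aa_t^{-\Np}\}$---the relevant conditional can be read off directly from the factored form of $\phi_\theta$. All factors from time indices $s\leq t-1$ are constants once the past is fixed, and so is $\ntDist{\theta}{\T}(x_{1:\T}^{b_{1:\T}})$ because the ancestral path itself is determined. The factors for $s\geq t+1$ are probability densities in $(\aa_s^{-\Np},\xx_s^{-\Np})$ that marginalise out to unity when integrated from $s=\T$ backwards. What remains is exactly $\prod_{j\neq\Np}\M{\theta}{t}(a_t^j,x_t^j)$ (or $\prod_{j\neq\Np}\r{\theta}{1}(x_1^j)$ at initialisation), which, since each factor is over a single $j$, matches the independent sampling performed in lines~\ref{row:x1} and~\ref{row:atxt} of the algorithm.

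The substantive calculation is the ancestor-sampling step. Hypothesising $a_t^\Np = i$ fixes $b_{t-1}=i$ and, by tracing backwards through the conditioned-on ancestor chain, determines a lineage $c_{1:t-1}(i)$ with $c_{t-1}=i$. The ancestral path thereby becomes $(x_{1:t-1}^i,x_{t:\T}^\prime)$, so the marginal factor contributes $\tDist{\theta}{\T}((x_{1:t-1}^i,x_{t:\T}^\prime))$ up to a constant. The remaining $i$-dependence comes from the denominators produced when the omitted $j=c_s$ factors are pulled out of the products $\prod_{j\neq b_s}(\cdot)$. Substituting the explicit form of $\M{\theta}{s}$ and invoking the standard telescoping identity for the SMC weight function, all intermediate weights cancel and the $i$-dependent denominator reduces to $\tDist{\theta}{t-1}(x_{1:t-1}^i)/w_{t-1}^i$ up to constants independent of $i$. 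Collecting terms gives $\phi_\theta(a_t^\Np = i\mid\cdot)\propto \widetilde w_{t-1\mid\T}^i$, which is exactly the ancestor weight of line~\ref{row:atN}. Applying the same telescoping across the full range $1\leq s\leq \T$ yields $\phi_\theta(k=i\mid\xx_{1:\T},\aa_{2:\T})\propto w_\T^i$, matching line~\ref{row:k}.

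The main obstacle, as I see it, is the bookkeeping around how the excluded indices $c_s = b_s(i)$ in $\prod_{j\neq b_s}(\cdot)$ shift with $i$, and the consequent need to isolate the $i$-dependence cleanly before the telescoping collapses everything into $w_{t-1}^i$. Once that manipulation is carried out, the remaining identifications between the $\phi_\theta$-conditionals and the algorithmic sampling prescriptions are mechanical.
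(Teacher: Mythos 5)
Your proposal is correct and follows essentially the same route as the paper: you identify each algorithmic step with a conditional of $\phi_\theta$ by marginalizing the future proposal factors, and the crux of your ancestor-sampling calculation---isolating the $i$-dependence of the omitted lineage factors and collapsing it via the telescoping weight identity to $\tDist{\theta}{t-1}(x_{1:t-1}^i)/w_{t-1}^i$---is precisely the paper's expansion \eqref{eq:pgas:target_factorization} leading to \eqref{eq:pgas:cpf_reinterpreted_as}. The same remark applies to your derivation of the final draw $k$ with probability proportional to $w_\T^k$.
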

\begin{proof}
  From \eqref{eq:pgas:phidef} we have, by construction,
  \begin{align}
    \phi_\theta(\xx_{1:\T}^{-b_{1:\T}}, \aa_{2:\T}^{-b_{2:\T}} \mid x_{1:\T}^{b_{1:\T}}, b_{1:\T}) =
    \prod_{\substack{i=1 \\ i\neq b_1 }}^\Np \r{\theta}{1}(x_1^i)\prod_{t = 2}^{\T} \prod_{\substack{i=1 \\ i\neq b_t }}^\Np \M{\theta}{t}(a_t^i, x_t^i).
  \end{align}
  By marginalizing this expression  over $\{ \xx_{t+1:\T}^{-b_{t+1:\T}}, \aa_{t+1:\T}^{-b_{t+1:\T}} \}$ we get
  \begin{align}
    \label{eq:pgad:marginalconditional}
    \phi_\theta(\xx_{1:t}^{-b_{1:t}}, \aa_{2:t}^{-b_{2:t}} \mid x_{1:\T}^{b_{1:\T}}, b_{1:\T})
    =\prod_{\substack{i=1 \\ i\neq b_1 }}^\Np \r{\theta}{1}(x_1^i)\prod_{s = 2}^{t} \prod_{\substack{i=1 \\ i\neq b_s }}^\Np \M{\theta}{s}(a_s^i, x_s^i),
  \end{align}
  It follows that
  \begin{subequations}
    \begin{align}
      \label{eq:pgas:cpf_reinterpreted_a}
      \phi_\theta(\xx_{1}^{-b_{1}} \mid x_{1:\T}^{b_{1:\T}}, b_{1:\T}) &= \prod_{\substack{i=1 \\ i\neq b_1 }}^\Np \r{\theta}{1}(x_1^i),
    \end{align}
    and, for $t = \range{2}{\T}$,
    \begin{align}
      \nonumber
      \phi_\theta{}&(\xx_{t}^{-b_{t}}, \aa_{t}^{-b_{t}} \mid \xx_{1:t-1}^{-b_{1:t-1}}, \aa_{2:t-1}^{-b_{2:t-1}}, x_{1:\T}^{b_{1:\T}}, b_{1:\T}) \\
      \label{eq:pgas:cpf_reinterpreted_b}
      &= \frac{ \phi_\theta(\xx_{1:t}^{-b_{1:t}}, \aa_{2:t}^{-b_{2:t}} \mid x_{1:\T}^{b_{1:\T}}, b_{1:\T})}
      {\phi_\theta(\xx_{1:t-1}^{-b_{1:t-1}}, \aa_{2:t-1}^{-b_{2:t-1}} \mid x_{1:\T}^{b_{1:\T}}, b_{1:\T})}
      = \prod_{\substack{i=1 \\ i\neq b_t }}^\Np \M{\theta}{t}(a_t^i, x_t^i).
    \end{align}
  \end{subequations}
  Hence, we can sample from \eqref{eq:pgas:cpf_reinterpreted_a} and \eqref{eq:pgas:cpf_reinterpreted_b} by
  drawing $x_{1}^i \sim \r{\theta}{1}(\cdot)$ for $i \in \crange{1}{\Np}\setminus b_1$ and
  $\{a_t^i, x_{t}^i\} \sim \M{\theta}{t}(\cdot)$ for $i \in \crange{1}{\Np}\setminus b_t$, respectively.
  Consequently, with the choice $b_{t} = \Np$ for $t=\range{1}{\T}$, the initialization at line~\ref{row:x1} and the particle propagation at line~\ref{row:atxt}
  of Algorithm~\ref{alg:pgas:pgas} correspond to sampling from \eqref{eq:pgas:cpf_reinterpreted_a} and \eqref{eq:pgas:cpf_reinterpreted_b}, respectively.

  Next, we consider the ancestor sampling step. Recall that $a_t^{b_t}$ identifies to $b_{t-1}$. We can thus write
  \begin{align}
    \nonumber
    \phi_\theta(a_t^{b_t} &{}\mid \xx_{1:t-1}, \aa_{2:t-1}, x_{t:\T}^{b_{t:\T}}, b_{t:\T})
    \propto \phi_\theta(\xx_{1:t-1}, \aa_{2:t-1}, x_{t:\T}^{b_{t:\T}}, b_{t-1:\T}) \\
    \nonumber
    &= \phi_\theta(x_{1:\T}^{b_{1:\T}}, b_{1:\T})
    \phi_\theta(\xx_{1:t-1}^{-b_{1:t-1}}, \aa_{2:t-1}^{-b_{2:t-1}} \mid x_{1:\T}^{b_{1:\T}}, b_{1:\T}) \\
    \label{eq:pgas:as_gibbsstep_expression}
    &=  \frac{\tDist{\theta}{\T}(x_{1:\T}^{b_{1:\T}})}{\tDist{\theta}{t-1}(x_{1:t-1}^{b_{1:t-1}})} 
    \frac{\tDist{\theta}{t-1}(x_{1:t-1}^{b_{1:t-1}}) }{ \normconst{\theta}{\T} \Np^\T} \prod_{\substack{i=1 \\ i\neq b_1 }}^\Np
    \r{\theta}{1}(x_1^i)\prod_{s = 2}^{t-1} \prod_{\substack{i=1 \\ i\neq b_s }}^\Np \M{\theta}{s}(a_s^i, x_s^i).
  \end{align}
  To simplify this expression, note first that we can write
  \begin{align}
    \tDist{\theta}{t-1}(x_{1:t-1}) = \tDist{\theta}{1}(x_{1}) \prod_{s=2}^{t-1} \frac{ \tDist{\theta}{s}(x_{1:s})  }{ \tDist{\theta}{s-1}(x_{1:s-1}) }.
  \end{align}
  By using the definition of the weight function \eqref{eq:smc:weightfunction}, this expression can be expanded according to
  \begin{align}
    \tDist{\theta}{t-1}(x_{1:t-1}) = \wf{\theta}{1}(x_1) \r{\theta}{1}(x_1) \prod_{s=2}^{t-1} \wf{\theta}{s} (x_{1:s}) \r{\theta}{s}(x_s \mid x_{1:s-1}).
  \end{align}
  Plugging the trajectory $x_{1:t-1}^{b_{1:t-1}}$ into the above expression, we get
  \begin{align}
    \nonumber
    \tDist{\theta}{t-1}(x_{1:t-1}^{b_{1:t-1}})
    &= w_1^{b_1} \r{\theta}{1} (x_1^{b_1}) \prod_{s = 2}^{t-1} w_{s}^{b_{s}} \r{\theta}{s} (x_s^{b_s} \mid x_{1:s-1}^{b_{1:s-1}} ) \\
    \nonumber
    &= \left( \prod_{s = 1}^{t-1} \sum_{l=1}^\Np w_s^l \right) \frac{w_1^{b_1}}{\normsum{l} w_1^l} \r{\theta}{1}(x_1^{b_1})
    \prod_{s = 2}^{t-1} \frac{w_{s}^{b_{s}}}{\normsum{l} w_s^l} \r{\theta}{s}(x_s^{b_s} \mid x_{1:s-1}^{b_{1:s-1}}) \\
    \label{eq:pgas:target_factorization}
    &= \frac{w_{t-1}^{b_{t-1}}}{\sum_{l} w_{t-1}^l} \left( \prod_{s = 1}^{t-1} \sum_{l=1}^\Np w_s^l \right) \r{\theta}{1}(x_1^{b_1}) \prod_{s = 2}^{t-1} \M{\theta}{s}(a_s^{b_s}, x_s^{b_s}).
    \end{align}
    Expanding the numerator in \eqref{eq:pgas:as_gibbsstep_expression} according to \eqref{eq:pgas:target_factorization} results in
    \begin{align}
      \nonumber
      \phi_\theta( &{}a_t^{b_t}\mid \xx_{1:t-1}, \aa_{2:t-1}, x_{t:\T}^{b_{t:\T}}, b_{t:\T}) \\
      \nonumber
      &\propto
      \frac{\tDist{\theta}{\T}(x_{1:\T}^{b_{1:\T}})}{\tDist{\theta}{t-1}(x_{1:t-1}^{b_{1:t-1}})}
      \frac{w_{t-1}^{b_{t-1}}}{\sum_{l} w_{t-1}^l}
    \frac{  \left( \prod_{s = 1}^{t-1} \sum_{l} w_s^l \right)  }{ \normconst{\theta}{\T} \Np^\T}
    \prod_{i=1}^\Np \r{\theta}{1}(x_1^i)\prod_{s = 2}^{t-1} \prod_{i=1}^\Np \M{\theta}{s}(a_s^i, x_s^i) \\
    \label{eq:pgas:cpf_reinterpreted_as}
    &\propto
    w_{t-1}^{b_{t-1}} \frac{\tDist{\theta}{\T}( (x_{1:t-1}^{b_{1:t-1}}, x_{t:\T}^{b_{t:\T}}) )}{\tDist{\theta}{t-1}(x_{1:t-1}^{b_{1:t-1}})}.
    \end{align}
    Consequently, with $b_t = \Np$ and $x_{t:\T}^{b_{t:\T}} = x_{t:\T}^\prime$, sampling from \eqref{eq:pgas:cpf_reinterpreted_as}
    corresponds to the ancestor sampling step of line \ref{row:atN} of Algorithm~\ref{alg:pgas:pgas}.
    Finally, analogously to \eqref{eq:pgas:cpf_reinterpreted_as}, it follows that
    $\phi_\theta( k \mid \xx_{1:\T}, \aa_{2:\T}) \propto w_\T^k$, which corresponds to line~\ref{row:k} of Algorithm~\ref{alg:pgas:pgas}.
\end{proof}

Next, we show that Procedure~\ref{proc:pgas:pgas_instrumental} leaves $\phi_\theta$ invariant. This is done by concluding that
the procedure is a properly collapsed Gibbs sampler; see \cite{DykP:2008}.
Marginalization, or collapsing, is commonly used within Gibbs sampling to improve the mixing and/or to simplify the sampling procedure.
However, it is crucial that the collapsing is carried out in the correct order to respect the dependencies between the variables of the model.

\begin{lemma}\label{lem:pgas:partialcollapse}
  The Gibbs sampler of Procedure~\ref{proc:pgas:pgas_instrumental} is properly collapsed and thus leaves $\phi_\theta$ invariant.
\end{lemma}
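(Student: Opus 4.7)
The plan is to justify Procedure~\ref{proc:pgas:pgas_instrumental} as a properly collapsed Gibbs sampler by appealing directly to the framework of Van Dyk and Park (2008) cited in the paragraph preceding the lemma. The idea is to start from a standard blocked Gibbs sampler for $\phi_\theta$, whose invariance is manifest, and exhibit Procedure~\ref{proc:pgas:pgas_instrumental} as the result of applying a sequence of permissible marginalization operations that preserve the invariant distribution.

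I would first fix the block schedule suggested by the procedure itself: $\xx_1^{-b_1}$; then, for each $t = 2, \ldots, \T$, the pair $\{\xx_t^{-b_t}, \aa_t^{-b_t}\}$ followed by $a_t^{b_t}$; and finally $k$. Treating the reference variables $\{x_{1:\T}^{\prime,b_{1:\T}^\prime}, b_{1:\T}^\prime\}$ as the current state from the previous sweep, the blocked Gibbs sampler that updates each of these blocks in turn from its full conditional under $\phi_\theta$ is a valid Markov kernel with $\phi_\theta$-invariance. Procedure~\ref{proc:pgas:pgas_instrumental} differs from this baseline only by replacing, in each step, the full conditional with the marginal obtained by integrating out variables that are yet to be visited later in the same sweep.

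The next step is to verify the compatibility condition of Van Dyk and Park for each such marginalization: a variable marginalized at step $s$ must be updated at some strictly later step of the sweep, so that its stale value does not re-enter any later conditioning in an inconsistent way. By direct inspection of the block order this holds throughout: at the step for $\{\xx_t^{-b_t}, \aa_t^{-b_t}\}$ the marginalized variables are the blocks at times $s > t$ together with $a_t^{b_t}$ and $k$, all of which are updated strictly later; the argument for the $a_t^{b_t}$ step and for the $k$ step is analogous and, in the latter case, trivial.

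The main obstacle, and the point requiring the most care, is the bookkeeping around the ancestor-sampling step. Since $a_t^{b_t}$ identifies with $b_{t-1}$, sampling this variable amounts to reassigning which of the previously generated particles serves as the reference ancestor at time $t-1$; accordingly, the conditioning set contains $b_{t-1:\T}^\prime$ in the $\{\xx_t^{-b_t}, \aa_t^{-b_t}\}$ step but only $b_{t:\T}^\prime$ in the subsequent $a_t^{b_t}$ step. I would verify explicitly that this switch corresponds to first marginalizing over the not-yet-updated $b_{t-1}$ together with the other future variables, and then, in the immediately following step, sampling $a_t^{b_t} = b_{t-1}$ from its marginal conditional; both are legitimate marginalizations of the full-conditional blocked Gibbs sampler. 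Once this bookkeeping is verified, each step of Procedure~\ref{proc:pgas:pgas_instrumental} leaves $\phi_\theta$ invariant, and so does the entire composition, completing the proof.
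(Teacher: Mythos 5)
Your overall strategy is the same as the paper's: both arguments realize Procedure~\ref{proc:pgas:pgas_instrumental} as a properly collapsed version (in the sense of Van Dyk and Park) of a blocked Gibbs sampler for $\phi_\theta$ whose invariance is manifest, and both reduce the verification to checking which variables may be marginalized at each step. The paper phrases this top-down (augment each block with ``underlined'' future variables drawn from the full conditional, then trim them because they are never conditioned on again), whereas you phrase it bottom-up (start from the block schedule of the procedure and justify each replacement of a full conditional by a marginal); these are equivalent formulations.

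However, there is a genuine gap at the ancestor-sampling sub-step, and it is precisely where you declare the argument ``analogous.'' Your compatibility criterion---every marginalized variable must be updated strictly later, before it is next conditioned on---is \emph{violated} by the draw of $a_t^{b_t}$: its conditional in Procedure~\ref{proc:pgas:pgas_instrumental} marginalizes out the block $\{\xx_t^{-b_t}, \aa_t^{-b_t}\}$ that was sampled in the immediately preceding sub-step, and that block is conditioned upon at iteration $t+1$ without ever being redrawn. So the general trimming/marginalization rules do not license this step on their own. What rescues it is a model-specific fact: by \eqref{eq:pgas:cpf_reinterpreted_b}, the conditional law of $\{\xx_t^{-b_t}, \aa_t^{-b_t}\}$ given the particle system up to time $t-1$ and $\{x_{t:\T}^{b_{t:\T}}, b_{t:\T}\}$ is $\prod_{i\neq b_t}\M{\theta}{t}(a_t^i,x_t^i)$, which does not depend on $b_{t-1}=a_t^{b_t}$. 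Hence $a_t^{b_t}$ and $\{\xx_t^{-b_t}, \aa_t^{-b_t}\}$ are conditionally independent given the earlier variables, the ``marginalization'' in the $a_t^{b_t}$ step is vacuous (it coincides with the full conditional), and the two sub-steps are exactly the factorization of a single joint block draw of $\{\xx_t^{-b_t},\aa_t\}$. The same independence is what makes it harmless that the first sub-step conditions on the stale value $b_{t-1}^\prime$. Without identifying and proving this conditional independence, the reordering claim in your last paragraph is only asserted, and the proof is incomplete.
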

\begin{proof}
  Consider the following sequence of \emph{complete} Gibbs steps:
  \begin{enumerate}
  \item[\gibbsnum{i}] Draw $\{ \xx_{1}^{-b_{1}^\prime}, \underline{\xx}_{2:\T}^{-b_{2:\T}^\prime}, \underline{\aa}_{2:\T}^{-b_{2:\T}^\prime} \} \sim \phi_\theta( \,\cdot \mid  x_{1:\T}^{\prime,b_{1:\T}^\prime}, b_{1:\T}^\prime)$ and,
    for $t = 2$ to $\T$, draw:%
    \begin{align*}
      \{\xx_t^{-b_t}, \aa_t, \underline{\xx}_{t+1:\T}^{-b_{t+1:\T}^\prime}, \underline{\aa}_{t+1:\T}^{-b_{t+1:\T}^\prime}  \} \sim
      \phi_\theta( \,\cdot \mid \xx_{1:t-1}^{-b_{1:t-1}^\prime}, \aa_{2:t-1}, x_{1:\T}^{\prime,b_{1:\T}^\prime}, b_{t:\T}^\prime).      
    \end{align*}
  \item[\gibbsnum{ii}] Draw $k \sim {}\phi_\theta( \,\cdot \mid \xx_{1:\T}^{-b_{1:\T}^\prime}, \aa_{2:\T}, x_{1:\T}^{\prime,b_{1:\T}^\prime})$.
  \end{enumerate}
  In the above, all the samples are drawn from conditionals under the full joint density $\phi_\theta(\xx_{1:\T}, \aa_{2:\T}, k)$.
  Hence, it is clear that the above procedure will leave $\phi_\theta$ invariant.
  Note that some of the variables above have been marked by an underline. It can be seen that these variables are in fact
  never conditioned upon in any subsequent step
  of the procedure. That is, the underlined variables are never used. Therefore, to obtain a valid sampler
  it is sufficient to sample all the non-underlined variables
  from their respective marginals. Furthermore, from \eqref{eq:pgas:cpf_reinterpreted_b} it can be seen that
  $\{ \xx_t^{-b_t}, \aa_t^{-b_t} \}$ are conditionally independent of $a_t^{b_t}$, \ie, it follows that the complete Gibbs
  sweep above is equivalent to the partially collapsed
  Gibbs sweep of Procedure~\ref{proc:pgas:pgas_instrumental}. Hence, the Gibbs sampler is properly collapsed and it will therefore
  leave $\phi_\theta$ invariant.
\end{proof}

\begin{proof}[Proof (Theorem~\ref{thm:pgas:invariance})]
  Let
  \begin{align}
    \mathcal{L}(d\xx_{1:\T}^{-b_{1:\T}^\prime}, d\aa_{2:\T}, dk \mid x_{1:\T}^{\prime}, b_{1:\T}^\prime)
  \end{align}
  denote the law of the random variables generated by Procedure~\ref{proc:pgas:pgas_instrumental},
  conditionally on $x_{1:\T}^{\prime, b_{1:\T}^\prime} = x_{1:\T}^\prime$ and on $b_{1:\T}^\prime$. 
  Using Lemma~\ref{lem:pgas:partialcollapse} and recalling
  that $\phi_\theta(x_{1:\T}^{b_{1:\T}}, b_{1:\T}) = \Np^{-\T} \ntDist{\theta}{\T}(x_{1:\T}^{b_{1:\T}})$ we have
  \begin{align}
    \nonumber
    \ntDist{\theta}{\T}(B) &= \int \I_B(x_{1:\T}^k) \mathcal{L}(d\xx_{1:\T}^{-b_{1:\T}^\prime}, d\aa_{2:\T}, dk \mid x_{1:\T}^{\prime}, b_{1:\T}^\prime) \\
    \label{eq:pgas:pfthm1_invariance1}
    &\hspace{2em}\times \delta_{x_1^\prime}(dx_1^{b_{1}^\prime}) \cdots \delta_{x_\T^\prime}(dx_\T^{b_{\T}^\prime}) \frac{\ntDist{\theta}{\T}(x_{1:\T}^\prime)}{\Np^\T}
    dx_{1:\T}^\prime db_{1:\T}^\prime,& \forall B&\in \sigmaX^\T.
  \end{align}
  By Lemma~\ref{lem:pgas:invariance} we know that Algorithm~\ref{alg:pgas:pgas}, which implicitly defines $\kPGAS$,
  is equivalent to Procedure~\ref{proc:pgas:pgas_instrumental}
  conditionally on $x_{1:\T}^{\prime, b_{1:\T}^\prime} = x_{1:\T}^\prime$ and $b_{1:\T}^\prime = \prange{\Np}{\Np}$.
  That is to say,
  \begin{align}
    \nonumber
    \kPGAS(x_{1:\T}^\prime, B) &= \int \I_B(x_{1:\T}^k) \mathcal{L}(d\xx_{1:\T}^{-\prange{\Np}{\Np}}, d\aa_{2:\T}, dk \mid x_{1:\T}^{\prime}, \prange{\Np}{\Np}) \\
    &\hspace{2em}\times \delta_{x_1^\prime}(dx_1^{\Np}) \cdots \delta_{x_\T^\prime}(dx_\T^{\Np}),
    \hspace{5em}\forall x_{1:\T}^\prime \in \setX^\T, \forall B\in \sigmaX^\T.
  \end{align}
  However, 
  the law of $x_{1:\T}^\star$ in Algorithm~\ref{alg:pgas:pgas} is invariant to permutations of the particle indices. That is,
  it does not matter if we place the reference particles on the $\Np$th positions, or on some other positions, when enumerating
  the particles\footnote{A formal proof of this statement is given for the \pg sampler in \cite{ChopinS:2013}.
    The same argument can be used also for \pgas.}. This implies that for any $b_{1:\T}^\prime \in \crange{1}{\Np}^\T$,
  \begin{align}
    \nonumber
    \kPGAS(x_{1:\T}^\prime, B) &= \int \I_B(x_{1:\T}^k) \mathcal{L}(d\xx_{1:\T}^{-b_{1:\T}^\prime}, d\aa_{2:\T}, dk \mid x_{1:\T}^{\prime}, b_{1:\T}^\prime) \\
    \label{eq:pgas:pfthm1_kernel2}
    &\hspace{2em}\times \delta_{x_1^\prime}(dx_1^{b_{1}^\prime}) \cdots \delta_{x_\T^\prime}(dx_\T^{b_{\T}^\prime}), \hspace{5em}\forall x_{1:\T}^\prime \in \setX^\T, \forall B\in \sigmaX^\T.
  \end{align}
  Plugging \eqref{eq:pgas:pfthm1_kernel2} into \eqref{eq:pgas:pfthm1_invariance1} gives the desired result,
  \begin{align}
    \nonumber
    \ntDist{\theta}{\T}(B) &= \int \kPGAS(x_{1:\T}^\prime, B) \ntDist{\theta}{\T}(x_{1:\T}^\prime)
    \underbrace{\left( \sum_{b_{1:\T}^\prime} \frac{1}{\Np^\T} \right)}_{=1} dx_{1:\T}^\prime, &\forall B&\in\sigmaX^\T.
  \end{align}
\end{proof}

%
%
%
%
\subsection{Ergodicity}

To show ergodicity of the \pgas kernel we need to characterize the support of the target and the proposal densities. Let,
\begin{subequations}
  \begin{align}
    \mathcal{S}_{\theta,t} &= \{x_{1:t} \in \setX^t : \ntDist{\theta}{t}(x_{1:t}) > 0 \}, \\
    \mathcal{Q}_{\theta,t} &= \{x_{1:t} \in \setX^t : \r{\theta}{t}(x_t \mid x_{1:t-1}) \ntDist{\theta}{t-1}(x_{1:t-1}) > 0 \},
  \end{align}
\end{subequations}
with obvious modifications for $t=1$. The following is a minimal assumption.
\begin{asmp}
  \label{asmp:pgas:proposals}
  For any $\theta \in \Theta$ and $t \in \crange{1}{\T}$ we have $\mathcal{S}_t^\theta \subseteq \mathcal{Q}_t^\theta$.
\end{asmp}

Assumption~(A\ref{asmp:pgas:proposals}) basically states that the support of the proposal density should cover the support of the target density.
Ergodicity of \pg under Assumption~(A\ref{asmp:pgas:proposals}) has been established by Andrieu et al. \cite{AndrieuDH:2010}. The same argument can be
applied also to \pgas.

\begin{theorem}[Andrieu et al. {\cite[Theorem~5]{AndrieuDH:2010}}]
  Assume (A\ref{asmp:pgas:proposals}). Then, for any $\Np \geq 2$ and $\theta \in \Theta$, $\kPGAS$ is $\ntDist{\theta}{\T}$-irreducible and aperiodic. Consequently,
  \begin{align*}
    \lim_{n \goesto \infty} \| (\kPGAS)^n(x_{1:\T}^\prime, \cdot) - \ntDist{\theta}{\T}(\cdot) \|_{\TV} &= 0,
    &&\forall x_{1:\T}^\prime \in \setX^\T.
  \end{align*}
\end{theorem}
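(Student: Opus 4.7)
Since Theorem~\ref{thm:pgas:invariance} already guarantees that $\ntDist{\theta}{\T}$ is invariant under $\kPGAS$, the plan is to separately establish $\ntDist{\theta}{\T}$-irreducibility and aperiodicity for $N \geq 2$. The stated total-variation convergence then follows from a standard Markov chain convergence theorem for $\ntDist{\theta}{\T}$-invariant, irreducible, aperiodic kernels (for example, the convergence theorem of Tierney).

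For irreducibility, I would follow the strategy of \cite{AndrieuDH:2010} and exhibit, for any reference trajectory $x_{1:\T}^\prime$, a lower bound on $\kPGAS(x_{1:\T}^\prime, B)$ for every $B \in \sigmaX^\T$ with $\ntDist{\theta}{\T}(B) > 0$. The key observation is that, with the reference occupying slot $N$, the particles at indices $i \in \crange{1}{\Np-1}$ are drawn exactly from the proposals $\r{\theta}{1}$ and $\M{\theta}{t}$ of the unconditional SMC sampler of Algorithm~\ref{alg:smc:smc}. Under Assumption~(A\ref{asmp:pgas:proposals}) the joint law of $x_{1:\T}^{\Np-1}$ produced by the ancestral recursion \eqref{eq:smc:trajectory} has support covering $\mathcal{S}_{\theta,\T}$. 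Because $\Np \geq 2$, the output index $k$ selected on line~\ref{row:k} equals $\Np-1$ with positive probability, and on the event $\{ x_{1:\T}^{\Np-1} \in B \}$ this forces $x_{1:\T}^\star \in B$. Integrating the resulting bound over the free particles yields $\kPGAS(x_{1:\T}^\prime, B) > 0$ as required.

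Aperiodicity is comparatively easy. On the event $\{ k = \Np \}$ in line~\ref{row:k}, which has strictly positive probability because $w_\T^\Np > 0$ by Assumption~(A\ref{asmp:pgas:proposals}), the algorithm returns $x_{1:\T}^\star = x_{1:\T}^\prime$. Hence $\kPGAS(x_{1:\T}^\prime, \cdot)$ has a positive atom at its starting point, which combined with irreducibility rules out any cyclic decomposition of the chain.

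The main obstacle is making the irreducibility bound genuinely rigorous: one must track the joint proposal density of the $\Np-1$ free particle trajectories in the presence of the deterministic reference and the ancestor sampling step, show that $w_\T^{\Np-1}$ is strictly positive on a set of positive probability over which $x_{1:\T}^{\Np-1}$ can land in $B$, and verify measurability of the resulting lower bound. Following the approach used for \pg in \cite{AndrieuDH:2010}, this can be handled by conditioning on a fixed sequence of ancestor indices for the free particles, reducing the argument to absolute continuity of $\ntDist{\theta}{\T}$ with respect to the SMC joint density $\psi_\theta$ under Assumption~(A\ref{asmp:pgas:proposals}). Once those steps are in place, the same chain of implications used for \pg applies verbatim to \pgas, since ancestor sampling only reassigns $a_t^\Np$ and never modifies the free particles or the selection step.
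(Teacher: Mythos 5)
Your approach matches the paper's: the paper offers no proof of this theorem, merely citing Andrieu et al.~\cite{AndrieuDH:2010} and asserting that their irreducibility/aperiodicity argument for \pg carries over to \pgas, and your sketch (invariance from Theorem~\ref{thm:pgas:invariance}, a positive lower bound on $\kPGAS(x_{1:\T}^\prime,B)$ via the $\Np-1$ freely proposed trajectories under (A\ref{asmp:pgas:proposals}), and aperiodicity from the positive atom at the reference path) is exactly that adaptation. One caution on your closing claim that the \pg argument applies ``verbatim'' because ancestor sampling ``never modifies the free particles'': it does perturb their law indirectly, since the reference particle's weight $w_t^\Np = \wf{\theta}{t}(x_{1:t}^\Np)$ depends on its resampled ancestral path and hence enters the resampling probabilities in $\M{\theta}{t+1}$; this does not break the proof, because the lower bound only needs the favorable ancestor choices and particle locations to occur with positive probability, but it is the one place where the \pg and \pgas arguments genuinely differ and deserves an explicit remark rather than a verbatim appeal.
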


To strengthen the ergodicity results for the \pgas kernel, we use a
boundedness condition for the importance weights, given in assumption~(A\ref{asmp:pgas:bounds}) below. Such a condition is typical
also in classical importance sampling and is, basically, a slightly stronger version of assumption~(A\ref{asmp:pgas:proposals}).

\begin{asmp} \label{asmp:pgas:bounds}
 For any $\theta\in\Theta$ and $t \in \crange{1}{\T}$, there exists a constant $\ergokappa < \infty$ such that
 $\supnorm{ \wf{\theta}{t} } \leq \ergokappa$.
\end{asmp}

\begin{theorem} \label{thm:pgas:uniform}
  Assume (A\ref{asmp:pgas:bounds}). Then, for any  $\Np\geq2$ and $\theta\in\Theta$, $\kPGAS$ is uniformly ergodic. That is,
  there exist constants $\ergoR < \infty$ and $\ergorho \in [0,1)$ such that
  \begin{align*}
    \| (\kPGAS)^n(x_{1:\T}^\prime, \cdot) - \ntDist{\theta}{\T}(\cdot) \|_{\TV} \leq \ergoR \ergorho^{n},
    &&\forall x_{1:\T}^\prime \in \setX^\T.
  \end{align*}
\end{theorem}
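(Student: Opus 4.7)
The plan is to establish a one-step Doeblin minorization of the form
\begin{align*}
\kPGAS(x_{1:\T}', \cdot) \geq \ergoepsilon\, \ntDist{\theta}{\T}(\cdot)
\end{align*}
for some $\ergoepsilon > 0$ independent of $x_{1:\T}' \in \setX^\T$. Combined with the $\ntDist{\theta}{\T}$-invariance already proved in Theorem~\ref{thm:pgas:invariance}, this immediately yields uniform ergodicity with $\ergoR = 1$ and $\ergorho = 1 - \ergoepsilon$ via the classical coupling/renewal argument for Markov chains satisfying a Doeblin condition. Thus the entire task reduces to exhibiting the minorizing constant.

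To obtain it, I would use the explicit expression of the kernel from the proof of Theorem~\ref{thm:pgas:invariance},
\begin{align*}
\kPGAS(x_{1:\T}', B) = \sum_{b_{1:\T} \in \crange{1}{\Np}^\T} \E_{\theta, x_{1:\T}'}\!\bigl[\I_B(x_{1:\T}^{b_{1:\T}})\, p(b_{1:\T}\mid \xx_{1:\T}, \aa_{2:\T})\bigr],
\end{align*}
and lower-bound it by restricting the sum to ancestries $b_{1:\T} \in \crange{1}{\Np-1}^\T$ that never visit the reference slot. On this subset the selected path consists exclusively of freely-sampled particles (line~\ref{row:atxt}), so the ancestor sampling step at line~\ref{row:atN} does not enter its law. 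Under (A\ref{asmp:pgas:bounds}) the weights obey $w_t^i \leq \ergokappa$, hence each normalized selection probability satisfies $w_t^i / \normsum{l} w_t^l \geq w_t^i / (\Np\ergokappa)$, and the probability of any specific non-reference ancestry is bounded below by $\prod_{t=1}^\T w_t^{b_t}/(\Np\ergokappa)$. Plugging this bound into the expectation and invoking the telescoping identity \eqref{eq:pgas:target_factorization}, the weight product merges with the proposal densities that generated the particles along the ancestry to form $\tDist{\theta}{\T}(x_{1:\T}^{b_{1:\T}})$. Integrating out the remaining variables and summing over the $(\Np-1)^\T$ admissible ancestries then recovers $\normconst{\theta}{\T}\ntDist{\theta}{\T}(B)$, yielding $\ergoepsilon$ of order $\normconst{\theta}{\T}/(\Np\ergokappa)^\T$.

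The main obstacle is that the weights $w_t^{b_t}$ along a non-reference ancestry are not strictly independent of $x_{1:\T}'$: intermediate particles at other indices may have the reference as an ancestor, and the ancestor sampling step feeds one of the non-reference particles' histories into $x_t^\Np$ at the next time step. Both effects contaminate the normalizers $\normsum{l} w_{t-1}^l$ that appear in the proposals. The cleanest workaround is to further restrict to the event $F$ that \emph{no} particle at \emph{any} time has $\Np$ in its ancestry; on $F$ the reduced $(\Np-1)$-particle system evolves as a standard particle filter independent of $x_{1:\T}'$, and the classical unbiasedness identity for particle-filter estimates of $\normconst{\theta}{\T}$ applies directly. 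Lower-bounding $\Prb(F)$ under (A\ref{asmp:pgas:bounds}) requires only the bounded-weight assumption and absorbs an additional multiplicative factor into $\ergoepsilon$ without destroying its uniformity in $x_{1:\T}'$, which is all that is needed to conclude uniform ergodicity.
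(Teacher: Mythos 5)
Your core strategy is the same as the paper's: establish a one-step Doeblin minorization $\kPGAS(x_{1:\T}^\prime,\cdot)\geq\ergoepsilon\,\ntDist{\theta}{\T}(\cdot)$ by restricting to selected trajectories whose ancestry avoids the reference slot, bounding each normalizer $\sum_l w_t^l$ above by $\Np\ergokappa$, and telescoping the weight functions against the proposals via \eqref{eq:smc:weightfunction} to recover $\normconst{\theta}{\T}\ntDist{\theta}{\T}(B)$; this yields the same $\ergoepsilon=\bigl(\tfrac{\Np-1}{\Np\ergokappa}\bigr)^{\T}\normconst{\theta}{\T}$ as the paper. (The paper organizes the computation as a backward recursion with auxiliary functions $h_{\theta,t}$ and uses exchangeability of the $\Np-1$ free trajectories to collapse the sum onto index $1$, rather than summing over explicit ancestries $b_{1:\T}\in\crange{1}{\Np-1}^\T$, but that is only bookkeeping.)

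The problem is your last paragraph: the ``obstacle'' you identify is not a real one, and the workaround you propose would break the proof. First, no further conditioning is needed. Given that the selected ancestry avoids $\Np$, the conditional law of the selected path is exactly the product of proposals $\r{\theta}{t}(\cdot\mid x_{1:t-1}^{b_{t-1}})$ along that ancestry; the rest of the particle system (and hence $x_{1:\T}^\prime$) enters only through the resampling normalizers $\sum_l w_{t-1}^l$, and since you want a \emph{lower} bound and these are bounded \emph{above} by $\Np\ergokappa$ pathwise under (A\ref{asmp:pgas:bounds}), the dependence is harmless --- this is precisely why the paper's recursion closes without any event restriction. Second, the event $F$ that no particle ever has $\Np$ in its ancestry does \emph{not} admit a lower bound on $\Prb(F)$ that is uniform in $x_{1:\T}^\prime$ under (A\ref{asmp:pgas:bounds}) alone: that assumption bounds the weights only from above, so the free particles may receive arbitrarily small (even zero) weight relative to $w_{t-1}^{\Np}$, making $\Prb(a_t^i\neq\Np)=\sum_{l\neq\Np}w_{t-1}^l/\sum_{l}w_{t-1}^l$ arbitrarily close to $0$ for unfavorable reference trajectories, and $\Prb(F)$ with it. Since uniformity of the minorization constant in $x_{1:\T}^\prime$ is the entire content of the theorem, this step is fatal as stated. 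Drop the restriction to $F$ and execute your first plan directly; the argument then goes through.
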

\begin{proof}
  We show that $\kPGAS$ satisfies a Doeblin condition,
  \begin{align}
    \label{eq:pgas:uni:doeblin}
    \kPGAS(x_{1:\T}^\prime, B) &\geq \ergoepsilon \ntDist{\theta}{\T}(B),
    & &\forall x_{1:\T}^\prime \in \setX^\T, \forall B \in \sigmaX^\T,
  \end{align}
  for some constant $\ergoepsilon > 0$.
  Uniform ergodicity 
  then follows from \cite[Proposition~2]{Tierney:1994}.
  To prove \eqref{eq:pgas:uni:doeblin} we use the representation of the \pgas kernel in \eqref{eq:pgas:kernel_as_expectation},
  \begin{align}
    \nonumber
    &\kPGAS(x_{1:\T}^\prime, B) = \E_{\theta,x_{1:\T}^\prime} \left[ \I_B(x_{1:\T}^k ) \right]
    = \sum_{j=1}^\Np \E_{\theta,x_{1:\T}^\prime} \left[ \frac{w_\T^j}{ \sum_l w_\T^l }\I_B(x_{1:\T}^j ) \right] \\
    \label{eq:pgas:uni:first}
    &\geq \frac{1}{\Np\ergokappa}  \sum_{j=1}^{\Np-1} \E_{\theta,x_{1:\T}^\prime} \left[ w_\T^j \I_B(x_{1:\T}^j ) \right]
    = \frac{\Np-1}{\Np\ergokappa}\E_{\theta,x_{1:\T}^\prime} \left[ \wf{\theta}{\T} (x_{1:\T}^1) \I_B(x_{1:\T}^1 ) \right].
  \end{align}
  Here, the inequality follows from bounding the weights in the normalization by $\ergokappa$ and by simply
  discarding the $\Np$th term of the sum  (which is clearly nonnegative). The last equality
  follows from the fact that the particle trajectories $\{x_{1:\T}^i \}_{i=1}^{\Np-1}$ are equally distributed
  under Algorithm~\ref{alg:pgas:pgas}.

  Let $h_{\theta,t} : \setX^t \mapsto \nonnegatives$ and consider
  \begin{align}
    \nonumber
    \E_{\theta,x_{1:\T}^\prime}&{} \left[ h_{\theta,t}( x_{1:t}^1 ) \right]
    = \E_{\theta,x_{1:\T}^\prime} \left[ \E_{\theta,x_{1:\T}^\prime} \left[ h_{\theta,t}( x_{1:t}^1 )\mid \xx_{1:t-1}, \aa_{2:t-1} \right] \right] \\
    \nonumber
    &=\E_{\theta,x_{1:\T}^\prime}
    \left[ \sum_{j=1}^\Np \int h_{\theta,t}( (x_{1:t-1}^j, x_{t}) ) \frac{w_{t-1}^j}{\sum_l w_{t-1}^l} \r{\theta}{t}(x_t \mid x_{1:t-1}^j)\,dx_t \right] \\
    \label{eq:pgas:uni:second}
    &\geq \frac{\Np-1}{\Np\ergokappa}
    \E_{\theta,x_{1:\T}^\prime} \left[  \int h_{\theta,t}( (x_{1:t-1}^1, x_{t}) ) \wf{\theta}{t-1}( x_{1:t-1}^1) \r{\theta}{t}(x_t \mid x_{1:t-1}^1)\,dx_t \right],
  \end{align}
  where the inequality follows analogously to \eqref{eq:pgas:uni:first}.
  Now, let
  \begin{align*}
    h_{\theta,\T}(x_{1:\T}) &= \wf{\theta}{\T}(x_{1:\T}) \I_B(x_{1:\T}),&&\\
    h_{\theta,t-1}(x_{1:t-1}) &= \int h_{\theta,t}(x_{1:t}) \wf{\theta}{t-1}(x_{1:t-1}) \r{\theta}{t}(x_t \mid x_{1:t-1})\,dx_t, & t&\leq\T.
  \end{align*}
  Then, by iteratively making use of \eqref{eq:pgas:uni:second} and changing the order of integration, we can bound \eqref{eq:pgas:uni:first}
  according to
  \begin{align}
    \nonumber
    &\left(\frac{\Np-1}{\Np\ergokappa}\right)^{-\T} \kPGAS(x_{1:\T}^\prime, B) \geq \E_{\theta,x_{1:\T}^\prime} \left[ h_{\theta,1}(x_1^1) \right]\\
    \nonumber
    &\hspace{2em}= \int \wf{\theta}{1}(x_1) \r{\theta}{1}(x_1)
    \prod_{t=2}^\T \left( \wf{\theta}{t}(x_{1:t}) \r{\theta}{t}(x_t \mid x_{1:t-1})\right) \I_B(x_{1:\T}) \,dx_{1:\T} \\
    \nonumber
    &\hspace{2em}=   \int \tDist{\theta}{1}(x_1) \prod_{t=2}^\T \left(
    \frac{ \tDist{\theta}{t}(x_{1:t})}{\tDist{\theta}{t-1}(x_{1:t-1}) } \right)\I_B(x_{1:\T}) \,dx_{1:\T} \\
    &\hspace{2em}= \int \tDist{\theta}{\T}(x_{1:\T})\I_B(x_{1:\T}) \,dx_{1:\T}
    = \normconst{\theta}{\T} \ntDist{\theta}{\T}(B).
  \end{align}
  With $\Np\geq2$ and since $\normconst{\theta}{\T}>0$ the result follows.
\end{proof}

\section{\pgas for state-space models}\label{sec:ssm}%

\subsection{Learning of state-space models with \pgas}
\ssm{s} comprise an important special case of the model class treated above. In this section, we
illustrate how \pgas can be used for inference and learning of these models.
We consider the nonlinear/non-Gaussian \ssm
\begin{subequations}
  \label{eq:ssm:model}
  \begin{align}
    x_{t+1} &\sim f_\theta(x_{t+1} \mid x_t), \\
    y_t &\sim g_\theta(y_t \mid x_t),
  \end{align}
\end{subequations}
and $x_1 \sim \mu_\theta(x_1)$, where $\theta \in \Theta$ is a static parameter, $x_t$ is the
latent state and $y_t$ is the observation at time $t$, respectively.
Given a batch of measurements $y_{1:\T}$, we wish to make inferences about $\theta$ and/or about the latent states $x_{1:\T}$.

Consider first the Bayesian setting where a prior distribution $\pi(\theta)$ is assigned to $\theta$.
We seek the parameter posterior $p(\theta \mid x_{1:\T})$ or, more generally, the joint state and parameter posterior
${p(\theta, x_{1:\T} \mid y_{1:\T})}$. Gibbs sampling can be used to simulate from this distribution by sampling
the state variables $\{x_t\}$ one at a time and the parameters $\theta$ from their respective posteriors.
However, it has been recognized that this can result in poor mixing, due to the often high autocorrelation of the
state sequence. The \pgas kernel offers a different approach, namely to sample the complete state
trajectory $x_{1:\T}$ in one block. This can considerably improve the mixing of the sampler \cite{JongS:1995}.
Due to the invariance property of the kernel (Theorem~\ref{thm:pgas:invariance}),
the validity of the Gibbs sampler is not violated. We summarize the procedure in Algorithm~\ref{alg:ssm:pgas}.

\begin{algorithm}[ptb]
  \caption{\pgas for Bayesian learning of \ssm{s}}
  \label{alg:ssm:pgas}
  \begin{algorithmic}[1]
    \STATE Set $\theta[0]$ and $x_{1:\T}[0]$ arbitrarily.
    \FOR{$n \geq 1$}
    \STATE Draw $x_{1:\T}[n] \sim \kPGAS[\theta{[n-1]}](x_{1:\T}[n-1], \cdot)$. \textit{/* By running Algorithm~\ref{alg:pgas:pgas} */}
    \STATE Draw $\theta[n] \sim p(\theta \mid x_{1:\T}[n], y_{1:\T})$.
    \ENDFOR
  \end{algorithmic}
\end{algorithm}

\pgas is also useful for maximum-likelihood-based learning of \ssm{s}.
A popular strategy for computing the maximum likelihood estimator
\begin{align}
  \widehat\theta_{\text{ML}} = \argmax_{\theta\in\Theta} \log p_\theta(y_{1:\T})
\end{align}
is to use the expectation maximization (\exm) algorithm \cite{DempsterLR:1977,McLachlanK:2008}.
\exm is an iterative method, which maximizes $\log p_\theta(y_{1:\T})$
by iteratively maximizing an auxiliary quantity:
$  \theta[n] = \argmax_{\theta\in\Theta} Q(\theta, \theta[n-1])$,
where
\begin{align}
  Q(\theta, \theta[n-1]) = \int \log p_\theta(x_{1:\T},y_{1:\T}) p_{\theta[n-1]}(x_{1:\T} \mid y_{1:\T}) \,dx_{1:\T}.
\end{align}
When the above integral is intractable to compute, one can use a Monte Carlo approximation or a stochastic approximation
of the intermediate quantity, leading to the \mcem \cite{WeiT:1990} and the \saem \cite{DelyonLM:1999} algorithms, respectively.
When the underlying Monte Carlo simulation is computationally involved, SAEM is particularly useful since it makes efficient
use of the simulated values.
The SAEM approximation of the auxiliary quantity is given by
\begin{align}
  \label{eq:ssm:Qsaem}
  \widehat Q_n(\theta) = (1-\steplength_n)\widehat Q_{n-1}(\theta) + \steplength_n \left( \log p_\theta(x_{1:\T}[n], y_{1:\T}) \right),
\end{align}
where $\steplength_n$ is the step size and, in the vanilla form of SAEM,
 $x_{1:\T}[n]$ is drawn from the joint smoothing density $p_{\theta[n-1]}(x_{1:\T} \mid y_{1:\T})$.
In practice, the stochastic approximation update \eqref{eq:ssm:Qsaem} is typically made
on some sufficient statistic for the complete data log-likelihood; see \cite{DelyonLM:1999} for details.
While the joint smoothing density is intractable for a general nonlinear/non-Gaussian \ssm,
it has been recognized that it
is sufficient to sample from a uniformly ergodic Markov kernel, leaving the joint smoothing distribution invariant \cite{BenvenisteMP:1990,AndrieuMP:2005}.
A practical approach is therefore to compute the auxiliary quantity according to the stochastic approximation \eqref{eq:ssm:Qsaem},
but where $x_{1:\T}[n]$ is simulated from the \pgas kernel $\kPGAS[\theta{[n-1]}](x_{1:\T}[n-1], \cdot)$.
This particle SAEM algorithm, previously presented in \cite{Lindsten:2013}, is summarized in Algorithm~\ref{alg:ssm:psaem}.

\begin{algorithm}[ptb]
  \caption{\pgas for frequentist learning of \ssm{s}}
  \label{alg:ssm:psaem}
  \begin{algorithmic}[1]
    \STATE Set $\theta[0]$ and $x_{1:\T}[0]$ arbitrarily. Set $\widehat Q_0(\theta) \equiv 0$.
    \FOR{$n \geq 1$}
    \STATE Draw $x_{1:\T}[n] \sim \kPGAS[\theta{[n-1]}](x_{1:\T}[n-1], \cdot)$.  \textit{/* By running Algorithm~\ref{alg:pgas:pgas} */}
    \STATE Compute $\widehat Q_n(\theta)$ according to \eqref{eq:ssm:Qsaem}. 
    \STATE Compute $\theta[n] = \argmax_{\theta\in\Theta} \widehat Q_n(\theta)$.
    \IF{convergence criterion is met}
    \RETURN $\theta[n]$.
    \ENDIF
    \ENDFOR
  \end{algorithmic}
\end{algorithm}

\subsection{Sampling from the \pgas kernel}

Sampling from the \pgas kernel, \ie, running Algorithm~\ref{alg:pgas:pgas}, is similar to running a PF.
The only non-standard (and nontrivial) operation is the ancestor sampling step. For the learning
algorithms discussed above, the distribution of interest is the joint smoothing distribution.
Consequently, the unnormalized target density is given by
$\tDist{\theta}{\T}(x_{1:\T}) = p_\theta(x_{1:\T}, y_{1:\T})$. The ancestor sampling weights in \eqref{eq:pgas:as_weights}
are thus given by
\begin{align}
  \label{eq:ssm:weights}
  \widetilde w_{t-1\mid\T}^i = w_{t-1}^i p_\theta(x_{t:\T}^\prime, y_{t:\T} \mid x_{1:t-1}^i, y_{1:t-1}) \propto
  w_{t-1}^i f_\theta(x_{t}^\prime \mid x_{t-1}^i).
\end{align}
This expression can be understood as an application of Bayes' theorem. The importance weight $w_{t-1}^i$ is the prior probability
of the particle $x_{t-1}^i$ and the factor $f_\theta(x_{t}^\prime \mid x_{t-1}^i)$ is the likelihood of
moving from $x_{t-1}^i$ to $x_t^\prime$. The product of these two factors is thus proportional to the
posterior probability that $x_t^\prime$ originated from $x_{t-1}^{i}$.

The expression \eqref{eq:ssm:weights} can also be recognized as a
one-step backward simulation; see \cite{GodsillDW:2004,LindstenS:2013}.
This highlights the close relationship between \pgas and \pgbs. The latter method
is conceptually similar to \pgas, but it make use of an explicit backward simulation pass; see \cite{Whiteley:2010,WhiteleyAD:2010} or \cite[Section~5.4]{LindstenS:2013}.
More precisely, to generate a draw from the \pgbs kernel, we first run a particle filter with reference trajectory $x_{1:\T}^\prime$
\emph{without} ancestor sampling (\ie, in Algorithm~\ref{alg:pgas:pgas}, we replace line~\ref{row:atN}
with $a_t^\Np = \Np$, as in the basic \pg sampler). Thereafter, we extract a new trajectory by running a backward simulator.
That is, we draw $j_{1:\T}$ with $\Prb(j_\T = i) \propto w_\T^i$ and then, for $t = \T-1$ to $1$,
\begin{align}
  \label{eq:equiv:bsi}
  \Prb(j_t = i \mid j_{t+1}) \propto w_t^i f_{\theta}(x_{t+1}^{j_{t+1}} \mid x_t^{i}),
\end{align}
and take $x_{1:\T}^\star = x_{1:\T}^{j_{1:\T}}$ as the output from the algorithm. In the above, the conditioning
on the forward particle system $\{ \xx_{1:\T}, \aa_{2:\T} \}$ is implicit.

Let the Markov kernel on $(\setX^\T, \sigmaX^\T)$ defined by this procedure be denoted as $\kPGBS$.
An interesting question to ask is whether or not the \pgas kernel $\kPGAS$ and the \pgbs kernel $\kPGBS$ are probabilistically equivalent.
It turns out that, in some specific scenarios, this is indeed the case.

\begin{proposition}\label{prop:equiv}%
  Assume that \pgas and \pgbs both target the joint smoothing distribution for an \ssm and that both methods use
  the bootstrap proposal kernel in the internal particle filters,
  \ie, $\r{\theta}{t}(x_t \mid x_{1:t-1}) = f_\theta(x_t \mid x_{t-1})$. Then, for any $x_{1:\T}^\prime\in\setX^\T$ and $B \in \sigmaX^\T$,
  $\kPGAS(x_{1:\T}^\prime, B) = \kPGBS(x_{1:\T}^\prime, B)$.
\end{proposition}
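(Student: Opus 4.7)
My plan is to couple the two algorithms by matching the randomness that generates the particle positions, then show that (i) these positions are equidistributed under \pgas and \pgbs, and (ii) conditional on the positions, the extracted trajectory has the same law under both kernels. The proof hinges on a Bayes-inversion observation that is special to the bootstrap proposal.

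For step (i), in both algorithms the non-reference particles $(a_t^i, x_t^i)$ for $i < \Np$ are drawn from the same conditional kernel $\M{\theta}{t}$, and $x_t^\Np = x_t^\prime$ is set deterministically. The only forward-sampling discrepancy is the index $a_t^\Np$: drawn from the ancestor-sampling weights \eqref{eq:ssm:weights} in \pgas, fixed to $\Np$ in \pgbs. Since $a_t^\Np$ never feeds into the generation of any subsequent particle --- the weights on line 7 of Algorithm~\ref{alg:pgas:pgas} depend only on $\xx_{1:t-1}$ and $x_{t:\T}^\prime$ --- the joint distribution of $\xx_{1:\T}$ is unaffected and is therefore identical in the two algorithms.

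For step (ii), the key claim is that, conditional on $\xx_{1:\T}$, the ancestor indices generated by Algorithm~\ref{alg:pgas:pgas} are mutually independent across $(t,i)$ with
\begin{align*}
  \Prb\bigl(a_t^i = j \bigm| \xx_{1:\T}\bigr) \;\propto\; w_{t-1}^j\, f_\theta(x_t^i \mid x_{t-1}^j).
\end{align*}
For $i < \Np$, this is Bayes' rule applied to $\M{\theta}{t}$: under the bootstrap proposal, $(a_t^i, x_t^i)$ has joint density $\propto w_{t-1}^{a_t^i} f_\theta(x_t^i \mid x_{t-1}^{a_t^i})$, so the conditional law of $a_t^i$ given its offspring is exactly of the claimed form. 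For $i = \Np$, I would start from \eqref{eq:ssm:weights} and use the Markov structure of the \ssm to simplify the ratio $\tDist{\theta}{\T}((x_{1:t-1}^j, x_{t:\T}^\prime)) / \tDist{\theta}{t-1}(x_{1:t-1}^j) = p_\theta(x_{t:\T}^\prime, y_{t:\T} \mid x_{t-1}^j)$; the factors corresponding to $s > t$ are independent of $j$ and drop out in the normalization, leaving $\Prb(a_t^\Np = j \mid \xx_{1:\T}) \propto w_{t-1}^j f_\theta(x_t^\prime \mid x_{t-1}^j)$, which agrees with the claim since $x_t^\Np = x_t^\prime$. Mutual independence across $(t,i)$ is inherited from the product structure of the sampling densities in Algorithm~\ref{alg:pgas:pgas}, noting in particular that $a_t^\Np$ enters no other generated quantity.

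Granted this, the \pgas output is the ancestral path $x_{1:\T}^{b_{1:\T}}$ of the selected index $b_\T = k$, where $b_{t-1} = a_t^{b_t}$. Read right-to-left, $b_{1:\T}$ is, conditionally on $\xx_{1:\T}$, a Markov chain initialised by $\Prb(b_\T = i) \propto w_\T^i$ with transition $\Prb(b_{t-1} = j \mid b_t) \propto w_{t-1}^j f_\theta(x_t^{b_t} \mid x_{t-1}^j)$, which is precisely the backward-simulation kernel \eqref{eq:equiv:bsi} used by \pgbs. Combined with step (i), this yields $\kPGAS(x_{1:\T}^\prime, B) = \kPGBS(x_{1:\T}^\prime, B)$. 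The main technical subtlety is the Bayes inversion in step (ii): it lets me reinterpret \emph{forward-sampled} ancestors as draws from the \emph{backward-simulation} distribution posterior to observing the particles, and it is the reason the equivalence hinges on the bootstrap assumption $\r{\theta}{t} = f_\theta$ --- for a general proposal the posterior of $a_t^i$ given $x_t^i$ picks up an extra factor and no longer matches \eqref{eq:equiv:bsi}.
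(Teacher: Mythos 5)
Your proof is correct, and it rests on the same two pillars as the paper's: (a) under the bootstrap proposal the weights are ancestor-free, so the particle \emph{positions} $\xx_{1:\T}$ have the same law with or without ancestor sampling, and (b) a Bayes inversion showing that $\Prb(a_t^i = j \mid \cdots) \propto w_{t-1}^j f_\theta(x_t^i \mid x_{t-1}^j)$ holds both for the forward-sampled indices $i<\Np$ and, by construction of \eqref{eq:ssm:weights}, for $i=\Np$. Where you differ is in how you convert (b) into equality of the kernels. The paper follows \cite{OlssonR:2011}: it restricts to multiplicative test functions $h = \prod_t h_t$, invokes the $\pi$-$\lambda$ theorem, and runs a backward induction in which one only ever conditions on the selected future path $x_{t+1:\T}^{b_{t+1:\T}}, b_{t+1:\T}$; the Bayes inversion then produces the same one-step functional $\Lambda_t$ for both samplers. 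You instead condition on the entire position array $\xx_{1:\T}$ and assert that all ancestor indices are then mutually independent with the backward-simulation conditionals, so that $b_{1:\T}$ is, given the positions, exactly the \pgbs backward chain. Your route is more direct and avoids both the induction and the reduction to multiplicative functionals, but it shifts the burden onto the conditional-independence claim: you must argue that conditioning on \emph{future} positions does not perturb the law of $a_t^i$, which holds here precisely because, under the bootstrap proposal, $x_{t+1}^j$ depends on the past only through $\xx_t$ and the weights carry no ancestor information. You do flag this ("$a_t^\Np$ never feeds into the generation of any subsequent particle"; "the weights depend only on $\xx_{1:t-1}$"), and for a fully rigorous write-up I would make that factorization of the joint density of $\{\xx_{1:\T},\aa_{2:\T}\}$ explicit, since it is the one place where the argument could silently fail for a non-bootstrap proposal. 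The paper's induction sidesteps this by never conditioning on unselected particles. Both proofs correctly isolate the bootstrap assumption as the reason the equivalence holds, and your closing remark about the extra factor appearing for general proposals is accurate.
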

\begin{proof}
  See Appendix~\ref{app:proof}.
\end{proof}

Proposition~\ref{prop:equiv} builds upon \cite[Proposition~5]{OlssonR:2011}, where the equivalence between a
(standard) bootstrap \pf and a backward simulator is established. In Appendix~\ref{app:proof}, we adapt their argument to handle
the case with conditioning on a reference trajectory and ancestor sampling.
The conditions of Proposition~\ref{prop:equiv} imply that the weight functions
\eqref{eq:smc:weightfunction} in the internal particle filters are independent of the ancestor indices.
This is key in establishing the above result and we emphasize that the equivalence between the samplers
does not hold in general for models outside the class of \ssm{s}. In particular,
for the class of non-Markovian latent variable models, discussed in the subsequent section,
we have found that the samplers have quite different properties.

\section{\pgas for non-Markovian models}\label{sec:nonmarkov}%

\subsection{Non-Markovian latent variable models}\label{sec:nonmarkov:applications}
A very useful generalization of \ssm{s} is the class of non-Markovian latent variable models,
\begin{subequations}
  \label{eq:nonmarkov:model}
  \begin{align}
    x_{t+1} &\sim f_\theta(x_{t+1} \mid x_{1:t}), \\
    y_t &\sim g_\theta(y_t \mid x_{1:t}).
  \end{align}
\end{subequations}
Similarly to the \ssm \eqref{eq:ssm:model}, this model is characterized by
a latent process $x_t \in \setX$ and an observed process $y_t \in \setY$.
However, it does not share the conditional independence properties that are central to \ssm{s}. Instead,
both the transition density $f_\theta$ and the measurement density $g_\theta$ may depend on the entire past history of the latent process.
In Sections~\ref{sec:nonmarkov:mh}~and~\ref{sec:nonmarkov:truncation}, we discuss the
ancestor sampling step of the \pgas algorithm specifically for these non-Markovian models. We
consider two approaches for efficient implementation of this step, first by using Metropolis-Hastings
within \pgas and then by using a truncation strategy for the ancestor sampling weights.
First, however, to motivate the present development we review some application areas in which this type of models arise.

In Bayesian nonparametrics~\cite{HjortHMW:2010} the
latent random variables of the classical Bayesian model are replaced by latent stochastic processes, which
are typically non-Markovian.
This includes popular models based on the Dirichlet process, \eg, \cite{TehJBB:2006,EscobarW:1995}, and Gaussian process regression
and classification models \cite{RasmussenW:2006}. These processes are also commonly used as components in
hierarchical Bayesian models, which then inherit their non-Markovianity. An example is the Gaussian
process \ssm \cite{TurnerD:2010,FrigolaLSR:2013}, a flexible nonlinear dynamical systems model, for which
\pgas has been successfully applied \cite{FrigolaLSR:2013}.

Another typical source of non-Markovianity is by marginalization over part of the state vector
(\ie, Rao-Blackwellization \cite{ChenL:2000,WhiteleyAD:2010,LindstenBGS:2013}) or by a change of variables in an \ssm.
This type of operations typically results in a loss of the Markov property, but
they can, however, be very useful. For instance, by expressing an \ssm in terms of its ``innovations'' (\ie, the driving noise
of the state process), it is possible to use backward and ancestor sampling in models for which the
state transition density is not available to us. This includes many models for which the transition is implicitly given by a simulator
\cite{GanderS:2007,FearnheadPR:2008,GolightlyW:2008,MurrayJP:2012}
or degenerate models where the transition density does not even exist \cite{RisticAG:2004,GustafssonGBFJKN:2002}.
We illustrate these ideas in Section~\ref{sec:eval}. See also \cite[Section~4]{LindstenS:2013} for
a more in-depth discussion on reformulations of \ssm{s} as non-Markovian models.

Finally, it is worth to point out that many statistical models which are not sequential ``by nature'' can be
conveniently viewed as non-Markovian latent variable models. This includes, among others, probabilistic graphical
models such as Markov random fields; see \cite[Section~4]{LindstenS:2013}.

\subsection{Forced move Metropolis-Hastings}\label{sec:nonmarkov:mh}
To employ \pgas (or in fact any backward-simulation-based method; see \cite{LindstenS:2013}) we need to evaluate the ancestor sampling weights \eqref{eq:pgas:as_weights} which depend on the ratio,
\begin{align}
  \label{eq:nonmarkov:ratio}
  \frac{ \tDist{\theta}{\T}(x_{1:\T}) }{ \tDist{\theta}{t-1}(x_{1:t-1}) } = \frac{p_\theta(x_{1:\T}, y_{1:\T})}{p_\theta(x_{1:t-1}, y_{1:t-1})} 
  = \prod_{s = t}^\T g_\theta(y_{s} \mid x_{1:s}) f_\theta(x_s \mid x_{1:s-1}).
\end{align}
Assuming that $g_\theta$ and $f_\theta$ can both be evaluated in constant time,
the computational cost of computing the backward sampling weights \eqref{eq:pgas:as_weights} will thus
be $\Ordo(\Np\T)$. This step can easily become the computational bottleneck when applying the PGAS algorithm to a non-Markovian model.

A simple way to reduce the complexity is to
employ Metropolis-Hastings (MH) within \pgas. Let
\begin{align}
  \label{eq:nonmarkov:asprob}
  \asprob(k) = \frac{ \widetilde w_{t-1\mid\T}^k }{ \sum_{l=1}^\Np \widetilde w_{t-1\mid\T}^l}
\end{align}
denote the law of the ancestor index $a_t^\Np$, sampled at line~\ref{row:atN} of Algorithm~\ref{alg:pgas:pgas}.
From Lemma~\ref{lem:pgas:invariance}, we know that this step of the algorithm in fact corresponds to a Gibbs step for the
extended target distribution \eqref{eq:pgas:phidef}. To retain the correct limiting distribution
of the \pgas kernel, it is therefore sufficient that $a_t^\Np$ is sampled from a Markov kernel leaving \eqref{eq:nonmarkov:asprob}
invariant (resulting in a standard combination of \mcmc kernels; see, \eg, \cite{Tierney:1994}).

Let $q(k^\prime \mid k)$ be an MH proposal kernel on $\crange{1}{\Np}$. We can thus propose a move for the ancestor index $a_t^\Np$,
from $\Np$ to $k^\prime$, by simulating $k^\prime \sim q(\cdot \mid \Np)$. With probability
\begin{align}
  \label{eq:nonmarkov:acceptprob}
  1\wedge \frac{\widetilde w_{t-1\mid\T}^{k^\prime}  }{ \widetilde w_{t-1\mid\T}^\Np  }\frac{ q(\Np \mid k^\prime) }{ q(k^\prime \mid \Np) }
\end{align}
the sample is accepted and we set $a_t^\Np = k^\prime$, otherwise we keep the ancestry $a_t^\Np = \Np$.
Using this approach, we only need to evaluate the ancestor sampling weights for the proposed values, bringing the total
computational cost down from $\Ordo(\Np\T^2)$ to $\Ordo(\Np\T + \T^2)$. While still quadratic in $\T$, this reduction can be very useful
whenever $\Np$ is moderately large.

Since the variable $a_t^\Np$ is discrete-valued, it is recommended to use a \emph{forced move} proposal
in the spirit of \cite{Liu:1996a}. That is, $q$ is constructed so that $q(k\mid k) = 0, \forall k$, ensuring
that the current state of the chain is not proposed anew, which would be a wasteful operation.
One simple choice is to let $q(k^\prime \mid k)$ be uniform over $\crange{1}{\Np}\setminus k$.
In the subsequent section, we discuss a different strategy for reducing the complexity of the
ancestor sampling step, which can also be used to design a better proposal for the forced move MH sampler.

\subsection{Truncation of the ancestor sampling weights}\label{sec:nonmarkov:truncation}
The quadratic computational complexity in $\T$ for the forced move MH sampler may still be
prohibitive if $\T$ is large. To make progress, we consider non-Markovian models in which there is a
decay in the influence of the past on the present, akin to that in 
Markovian models but without the strong Markovian assumption. Hence, it is possible
to obtain a useful approximation of the ancestor sampling weights by truncating the product \eqref{eq:nonmarkov:ratio}
to a smaller number of factors, say $\ell$. We can thus replace \eqref{eq:pgas:as_weights} with the approximation
\begin{align}
  \nonumber
  \widetilde w_{t-1\mid\T}^{\ell,i} &\triangleq w_{t-1}^i \frac{\tDist{\theta}{t-1+\ell}( (x_{1:t-1}^i, x_{t:t-1+\ell}^\prime ) )}{ \tDist{\theta}{t-1} (x_{1:t-1}^i) } \\  \label{eq:nonmarkov:truncated_weights}
&=  w_{t-1}^i \prod_{s = t}^{t-1+\ell} g_\theta(y_{s} \mid x_{1:t-1}^i, x_{t:s}^\prime) f_\theta(x_s^\prime \mid x_{1:t-1}^i, x_{t:s-1}^\prime).
\end{align}
Let $\asprobtrunc[\ell](k)$ be the probability distribution defined by the truncated ancestor sampling weights \eqref{eq:nonmarkov:truncated_weights},
analogously to \eqref{eq:nonmarkov:asprob}. The following proposition formalizes our assumption.
\begin{proposition}\label{prop:truncation_kld}%
  Let
  \begin{align*}
    h_s(k) = g_\theta(y_{t-1+s} \mid x_{1:t-1}^k, x_{t:t-1+s}^\prime) f_\theta(x_{t-1+s}^\prime \mid x_{1:t-1}^k, x_{t:t-1+s}^\prime)
  \end{align*}
  and assume that
  $\max_{k,l} \left( h_s(k) / h_s(l) -1 \right) \leq A\exp(-cs)$,
  for some constants $A$ and $c > 0$. Then, $\KLD(\asprob \| \asprobtrunc) \leq C\exp(-c\ell)$ for some constant $C$, where $\KLD$ is the Kullback-Leibler (KL) divergence.
\end{proposition}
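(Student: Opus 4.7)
The plan is to write $\asprob(k)/\asprobtrunc[\ell](k)$ in closed form, use the decay hypothesis to control this ratio uniformly in $k$, and then integrate against $\asprob$. Comparing \eqref{eq:nonmarkov:asprob} with the truncated analogue defined from \eqref{eq:nonmarkov:truncated_weights}, the extra factors that $\asprob$ carries beyond $\asprobtrunc[\ell]$ form the tail product $R(k) \triangleq \prod_{s = \ell+1}^{\T-t+1} h_s(k)$. A direct computation yields
\begin{align*}
  \frac{\asprob(k)}{\asprobtrunc[\ell](k)} = \frac{R(k)}{Z}, \qquad Z \triangleq \sum_{j=1}^{\Np} \asprobtrunc[\ell](j) R(j),
\end{align*}
so bounding the KL divergence reduces to controlling the uniform oscillation of $R(k)$ across $k$.

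The second step is to turn the multiplicative decay bound on $h_s$ into an additive bound on $\log R$. The hypothesis gives $h_s(k)/h_s(l) \leq 1 + A\exp(-cs)$; swapping the roles of $k$ and $l$ also yields $h_s(l)/h_s(k) \leq 1 + A\exp(-cs)$, so $|h_s(k)/h_s(l) - 1| \leq A\exp(-cs)$. For $s$ large enough that $A\exp(-cs) \leq 1/2$, this linearizes to $|\log(h_s(k)/h_s(l))| \leq 2A\exp(-cs)$, and summing a geometric series yields
\begin{align*}
  \max_{k,l} \left| \log \frac{R(k)}{R(l)} \right| \leq \sum_{s = \ell+1}^\infty 2A\exp(-cs) \leq C'\exp(-c\ell),
\end{align*}
with $C'$ depending only on $A$ and $c$, valid once $\ell$ exceeds a threshold $\ell_0 = \ell_0(A,c)$.

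The third step converts this into the KL bound. Since $Z$ is a convex combination of the $R(j)$'s it lies in $[\min_j R(j), \max_j R(j)]$, so by monotonicity of the logarithm $|\log R(k) - \log Z| \leq C'\exp(-c\ell)$ uniformly in $k$. Averaging against $\asprob$ gives
\begin{align*}
  \KLD(\asprob \| \asprobtrunc[\ell]) = \sum_k \asprob(k) \log \frac{R(k)}{Z} \leq C'\exp(-c\ell).
\end{align*}
The finitely many values $\ell \leq \ell_0$ for which the linearization fails are absorbed into the final constant $C$: the KL is always bounded by $\log \Np$ in this discrete setting (the two distributions share support, both being proportional to $w_{t-1}^k$ times strictly positive factors), and in the region $\ell \leq \ell_0$ the quantity $\exp(-c\ell)$ is bounded below by $\exp(-c\ell_0)$, so inflating $C$ by the factor $\log(\Np)\exp(c\ell_0)$ covers this regime.

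I expect the main obstacle to be the linearization step: replacing $\log(1 + A\exp(-cs))$ by its first-order surrogate $2A\exp(-cs)$ requires $A\exp(-cs)$ to be below a universal threshold, which is precisely why the proposition is stated with an unspecified constant $C$ and why the argument naturally splits into an asymptotic large-$\ell$ regime (where the geometric sum gives the clean exponential rate) and a finite leftover (absorbed by the uniform bound $\log\Np$).
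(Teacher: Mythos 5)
Your decomposition is the same as the paper's: write $\asprob(k)/\asprobtrunc(k)$ as the tail product $R(k)=\prod_{s=\ell+1}^{M}h_s(k)$ divided by a convex combination $Z$ of the $R(j)$'s, bound the oscillation of $\log R$ using the decay hypothesis, and sum a geometric series. That part is correct. Where you diverge is in how the logarithm is controlled: you symmetrize the hypothesis and linearize $\log(1+A e^{-cs})$, which forces the restriction $A e^{-cs}\le 1/2$ and hence the split into a large-$\ell$ regime and a finite leftover. This split is unnecessary. Since the KL divergence only requires an \emph{upper} bound on $\log(R(k)/Z)\le\log\bigl(\max_k R(k)/\min_j R(j)\bigr)\le\sum_{s=\ell+1}^{M}\log(1+\varepsilon_s)$ with $\varepsilon_s=\max_{k,l}(h_s(k)/h_s(l)-1)$, the one-sided elementary inequality $\log(1+x)\le x$ (valid for all $x\ge 0$) immediately gives $\KLD(\asprob\|\asprobtrunc)\le\sum_{s=\ell+1}^{M}\varepsilon_s\le A e^{-c(\ell+1)}/(1-e^{-c})$ for \emph{every} $\ell$, with no threshold $\ell_0$ and no case analysis. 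This is exactly what the paper does.

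The genuine error is in your patch for the regime $\ell\le\ell_0$: the claim that the KL divergence between two distributions with common support on $\crange{1}{\Np}$ is ``always bounded by $\log\Np$'' is false. For instance, on two points, $P=(1-\delta,\delta)$ and $Q=(\delta,1-\delta)$ give $\KLD(P\|Q)=(1-2\delta)\log\frac{1-\delta}{\delta}\to\infty$ as $\delta\to 0$, while $\log\Np=\log 2$. So as written, the absorption of small $\ell$ into the constant $C$ is unjustified. The gap is repairable without new ideas---your own oscillation bound already yields the uniform-in-$\ell$ estimate $\log(R(k)/Z)\le\sum_{s\ge 1}\log(1+A e^{-cs})<\infty$, a constant depending only on $A$ and $c$---but the cleanest fix is simply to drop the linearization and use $\log(1+x)\le x$ throughout, which removes the two-regime structure entirely.
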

\begin{proof}
  See Appendix~\ref{app:proof}.
\end{proof}
Using the approximation given by \eqref{eq:nonmarkov:truncated_weights}, the
ancestor sampling weights can be computed in constant time within the \pgas framework.
The resulting approximation can be quite useful; indeed, in our experiments we have
seen that even $\ell = 1$ can lead to very accurate inferential results. In general,
however, it will not be known \emph{a priori} how to set the truncation level $\ell$.
To address this problem, we propose to use an adaption of the truncation 
level. Since the approximative weights \eqref{eq:nonmarkov:truncated_weights} 
can be evaluated sequentially, the idea is to start with $\ell = 1$ and then increase 
$\ell$ until the weights have, in some sense, converged. In particular, in our
experimental work, we have used the following simple approach.

At time $t$, let $\varepsilon_\ell = \DTV(\asprobtrunc[\ell], \asprobtrunc[\ell-1])$
be the total variation (TV) distance between the approximative ancestor sampling distributions for two consecutive truncation levels.
We then compute the exponentially decaying moving average
of the sequence $\varepsilon_\ell$, with forgetting factor $\upsilon \in [0,\,1]$, and stop when this falls below some threshold $\tau \in [0,\,1]$.
This adaption scheme removes the requirement to specify $\ell$ directly, but instead introduces the design parameters $\upsilon$ and $\tau$.
However, these parameters are much easier to reason about---a small value for $\upsilon$ gives a rapid response to changes in $\varepsilon_\ell$
whereas a large value gives
a more conservative stopping rule, improving the accuracy of the approximation at the cost of higher computational complexity.
A similar tradeoff holds for the threshold $\tau$ as well. Most importantly, we have found that the same values
for $\upsilon$ and $\tau$ can be used for a wide range of models, with very different mixing properties.

To illustrate the effect of the adaption rule, and how the distribution $\asprobtrunc$ typically evolves as we increase $\ell$, we provide
two examples in Figure~\ref{fig:degen_Papprox}. These examples are taken from the simulation study provided in Section~\ref{sec:eval:degen_rndlgss}.
Note that the untruncated distribution $\asprob$ is given for the maximal value of $\ell$, \ie, furthest to the right in the figures.
By using the adaptive truncation, we can stop the evaluation of the weights at a much earlier stage, and still obtain an accurate
approximation of $\asprob$.

\begin{figure}[ptb]
  \centering
  \includegraphics[height = \Papproxplotheight]{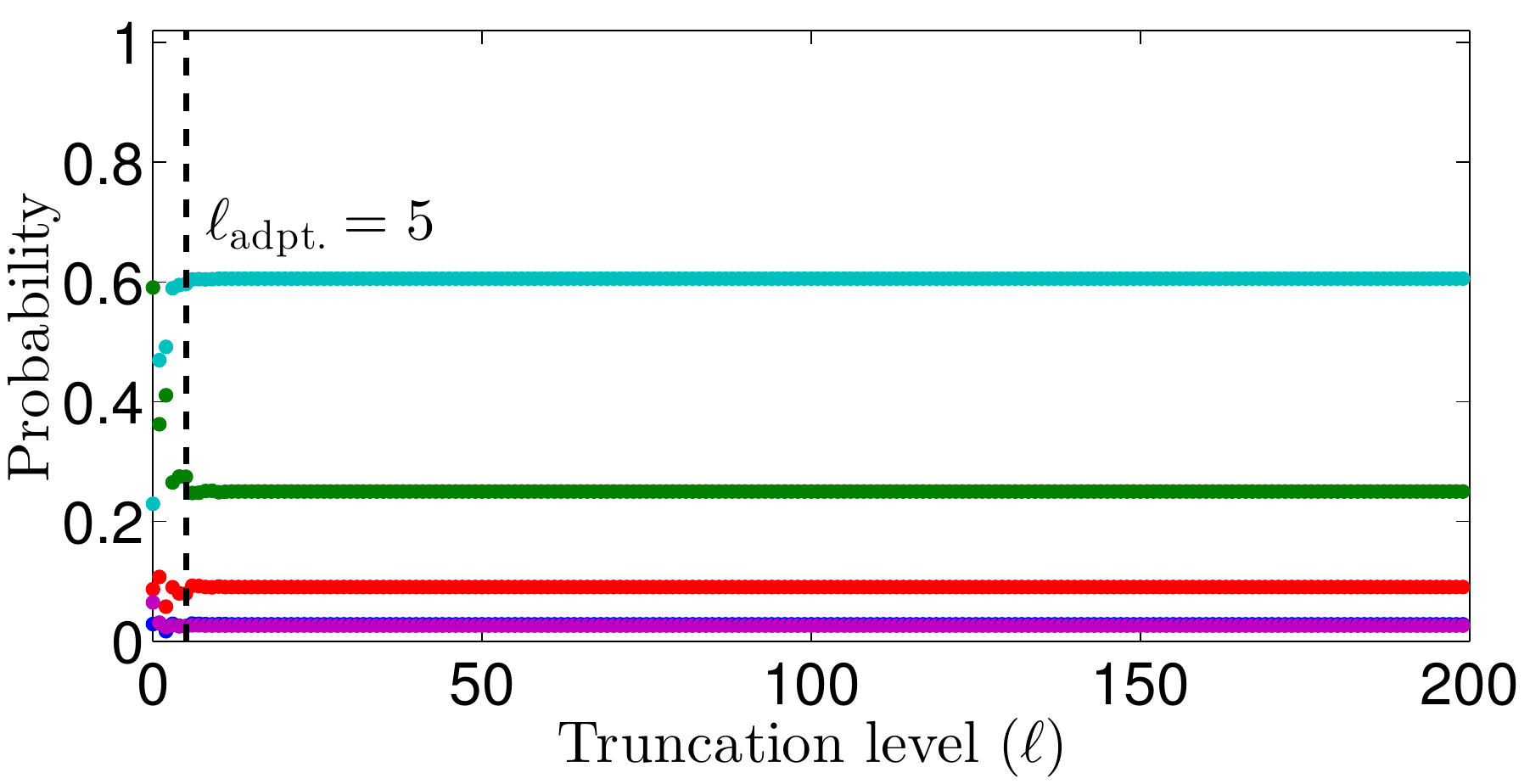}
  \includegraphics[height = \Papproxplotheight]{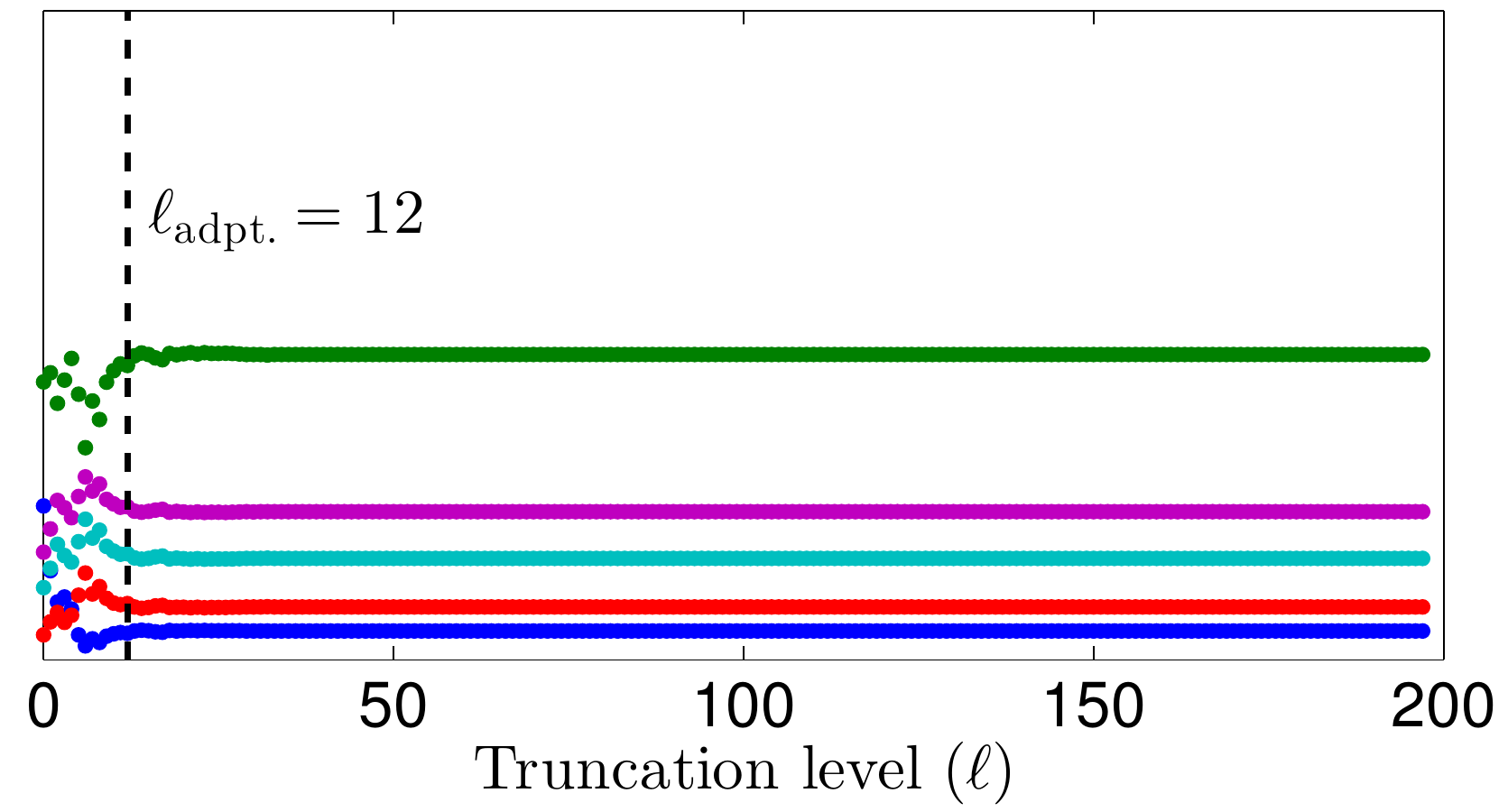}
  \caption{Probability under $\asprobtrunc$ as a function of the truncation level $\ell$
    for two different systems; one 5 dimensional (left) and one 20 dimensional (right).
    The $\Np = 5$ dotted lines correspond to $\asprobtrunc(k)$ for $k \in \crange{1}{\Np}$, respectively (\NB two of the lines overlap in the left figure).
    The dashed vertical lines show the
    value of the truncation level $\ell_{\mathrm{adpt.}}$, resulting from the adaption scheme with $\upsilon = 0.1$ and $\tau = 10^{-2}$.
    See Section~\ref{sec:eval:degen_rndlgss} for details on the experiments.}
  \label{fig:degen_Papprox}
\end{figure}

The approximation \eqref{eq:nonmarkov:truncated_weights} can be used in a few different ways. First, as discussed above,
we can simply replace $\asprob$ with $\asprobtrunc$ in the \pgas algorithm, resulting in a total computational cost of $\Ordo(\Np\T\ell)$.
This is the approach that we have favored, owing to its simplicity and the fact that we have found the truncation to
lead to very accurate approximations. Another approach, however, is to use $\asprobtrunc$ as an efficient proposal distribution
for the MH algorithm suggested in Section~\ref{sec:nonmarkov:mh}. The MH accept/reject decision will then compensate
for the approximation error caused by the truncation. A third approach is to use the MH algorithm, but
to make use of the approximation \eqref{eq:nonmarkov:truncated_weights} when evaluating the acceptance probability \eqref{eq:nonmarkov:acceptprob}.
By doing so, the algorithm can be implemented with $\Ordo(\Np\T + \T\ell)$ computational complexity.

\section{Numerical evaluation}\label{sec:eval}%
In this section we illustrate the properties of \pgas in a simulation study.
First, in Section~\ref{sec:eval:lg1d} we consider a simple linear Gaussian \ssm and investigate
the improvement in mixing offered by ancestor sampling when \pgas is compared with \pg.
We do not consider \pgbs in this example since, by Proposition~\ref{prop:equiv},
\pgas and \pgbs are probabilistically equivalent in this scenario.

When applied to non-Markovian models, however, Proposition~\ref{prop:equiv} does not apply
since the weight function will depend on the complete
history of the particles. \pgas and \pgbs will then have different properties as is
illustrated empirically in Section~\ref{sec:eval:degen_rndlgss}
where we consider inference in degenerate \ssm{s} reformulated as non-Markovian models.
Finally, in Section~\ref{sec:eval:flu} we use a similar reformulation and apply \pgas for identification
of an epidemiological model for which the transition kernel is not available.

\subsection{1st order \lgss model}\label{sec:eval:lg1d}
Consider a first-order linear Gaussian state-space (\lgss) model,
\begin{subequations}
  \label{eq:eval:lg1d_sys}
  \begin{align}
    x_{t+1} &= ax_t + v_t, & v_t&\sim\N(0,q), \\
    y_t &= x_t + e_t, & e_t&\sim\N(0,r),
  \end{align}
\end{subequations}
with initial state $p(x_1) = \N(x_1 ; 0, q/(1-a^2))$ and unknown parameters $\theta = (a,q,r)$.
This system is used as a proof of concept to illustrate the superior mixing of \pgas when compared to \pg.
For this system it is possible to implement an \emph{ideal Gibbs} sampler, \ie, by iteratively
sampling from the posterior parameter distribution $p(\theta \mid x_{1:\T}, y_{1:\T})$ and
from the full joint smoothing distribution $p_\theta(x_{1:\T} \mid y_{1:\T})$. This is useful
for comparison, since the ideal Gibbs sampler is the baseline for both \pg samplers.

We simulate the system \eqref{eq:eval:lg1d_sys} for $T = 100$ time steps with $\theta = (0.8, 1, 0.5)$.
We then run the \pg \cite{AndrieuDH:2010} and the \pgas (Algorithm~\ref{alg:ssm:pgas}) samplers with different number of particles, $\Np \in \{5,20,100,1000\}$,
as well as the ideal Gibbs sampler. All methods are initialized at $\theta[0] = (-0.8, 0.5, 1)$ and
simulated for \thsnd{50} iterations, whereafter the first \thsnd{10} samples are discarded as burn-in.
To evaluate the mixing of the samplers, we compute the autocorrelation functions (\acf{s}) for
the sequences $\theta[n] - \E[\theta \mid y_{1:\T}]$\footnote{The ``true'' posterior mean is computed as the sample
mean obtained from the ideal Gibbs sampler.}. The results for the parameter $q$ are reported in Figure~\ref{fig:lg1d_acf} (top row).
Similar results hold for $a$ and $r$ as well. We see that the \pg sampler requires a large $\Np$ to obtain good mixing.
For $\Np = 100$ the \acf drops off much slower than for the ideal sampler and for $\Np \leq 20$ the \acf
is more or less constant. For \pgas, on the other hand, the \acf is much more robust to the choice of $\Np$.
Indeed, we obtain a mixing which is comparable to that of the ideal Gibbs sampler for any number of particles $\Np \geq 5$.

To further investigate the robustness of \pgas we repeat the same experiment with a larger data batch consisting of $\T = \thsnd{2}$ samples. The results
are given in Figure~\ref{fig:lg1d_acf} (bottom row). The effect can be seen even more clearly in this more challenging scenario.
The big difference in mixing between the two samplers can be understood as a manifestation of how
they are affected by path degeneracy. These results are in agreement with the discussion in Section~\ref{sec:pgas:degeneracy}.

\begin{figure}[ptb]
  \centering
  \includegraphics[width=0.5\columnwidth]{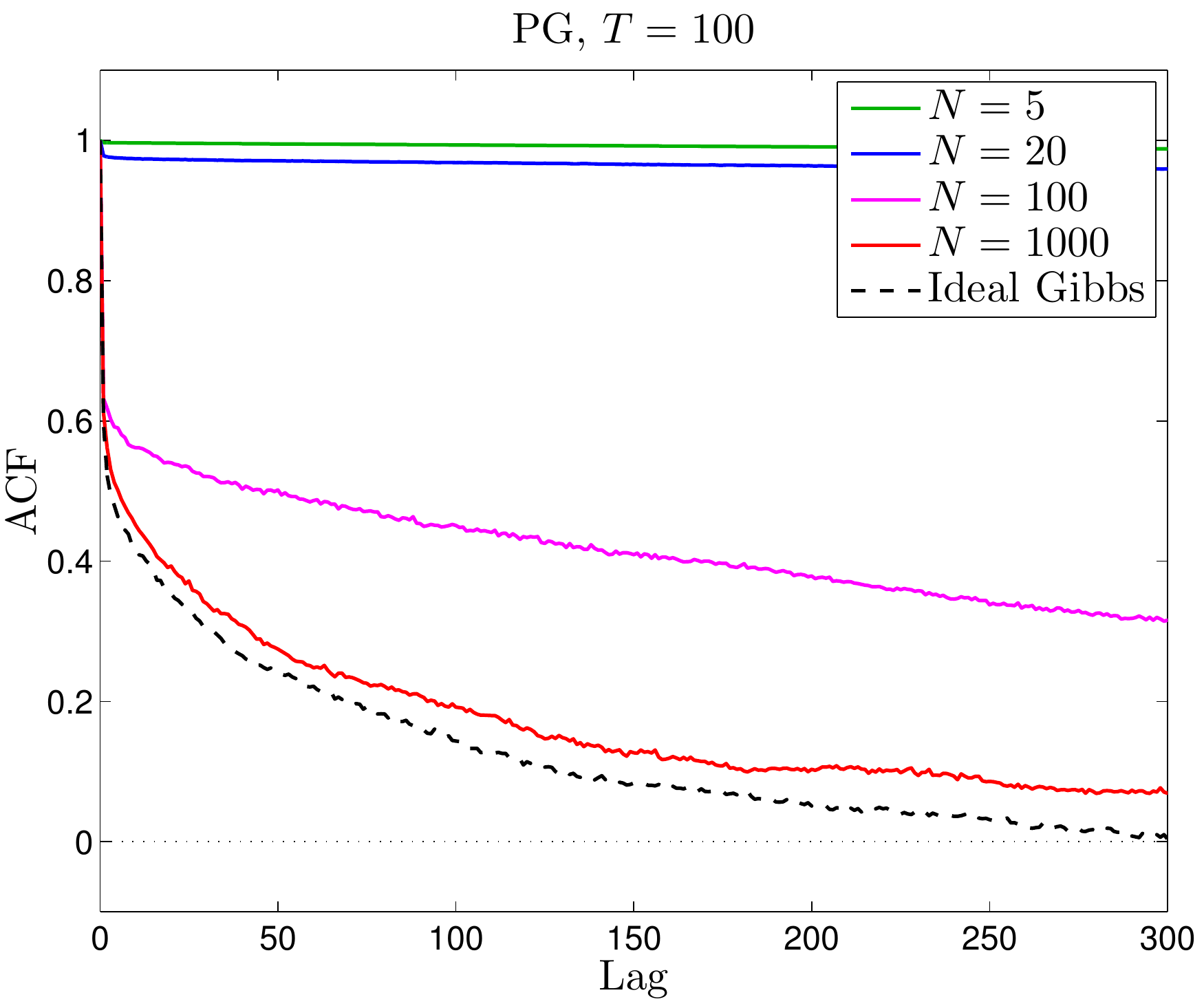}%
  \includegraphics[width=0.5\columnwidth]{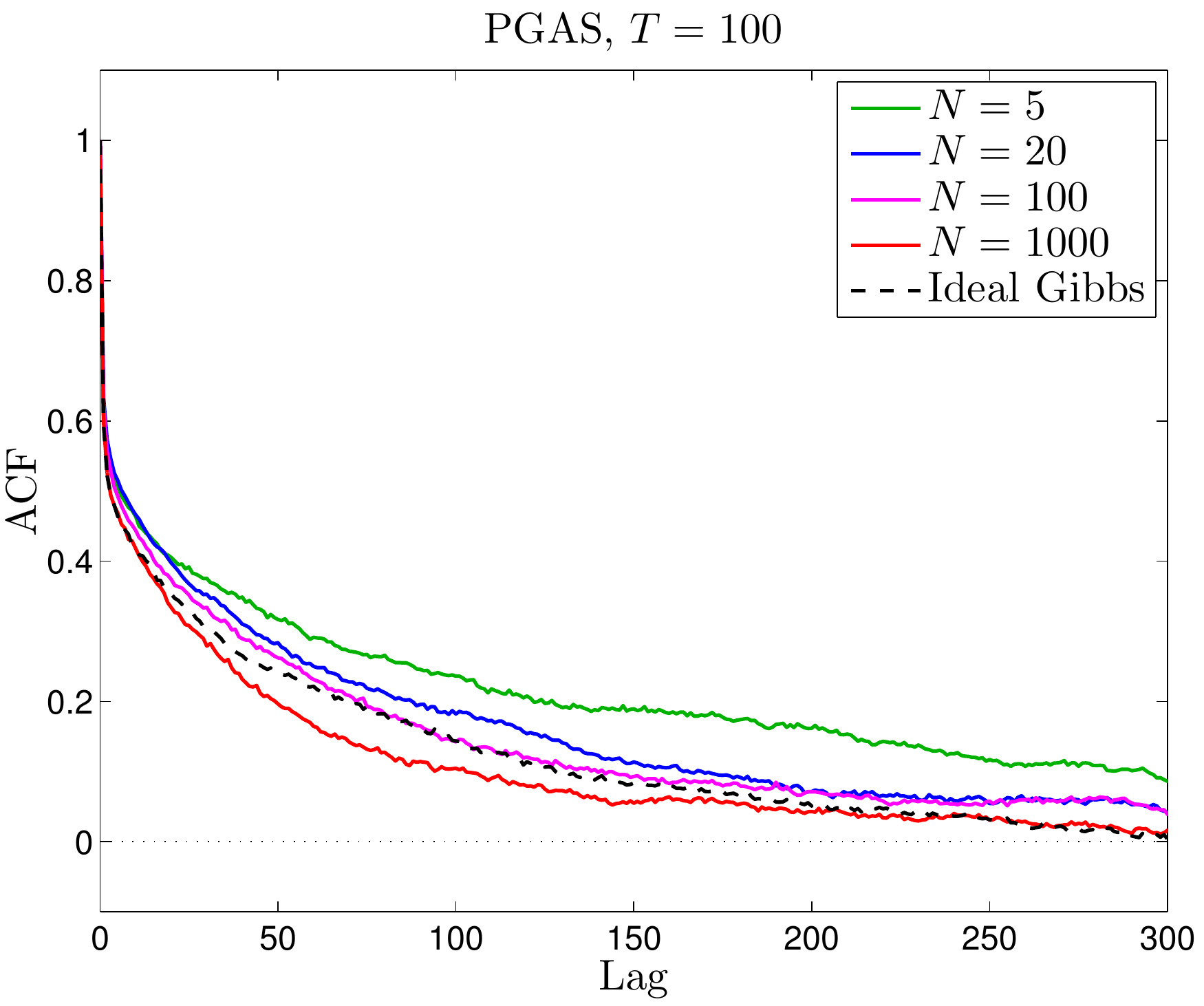}\\
  \includegraphics[width=0.5\columnwidth]{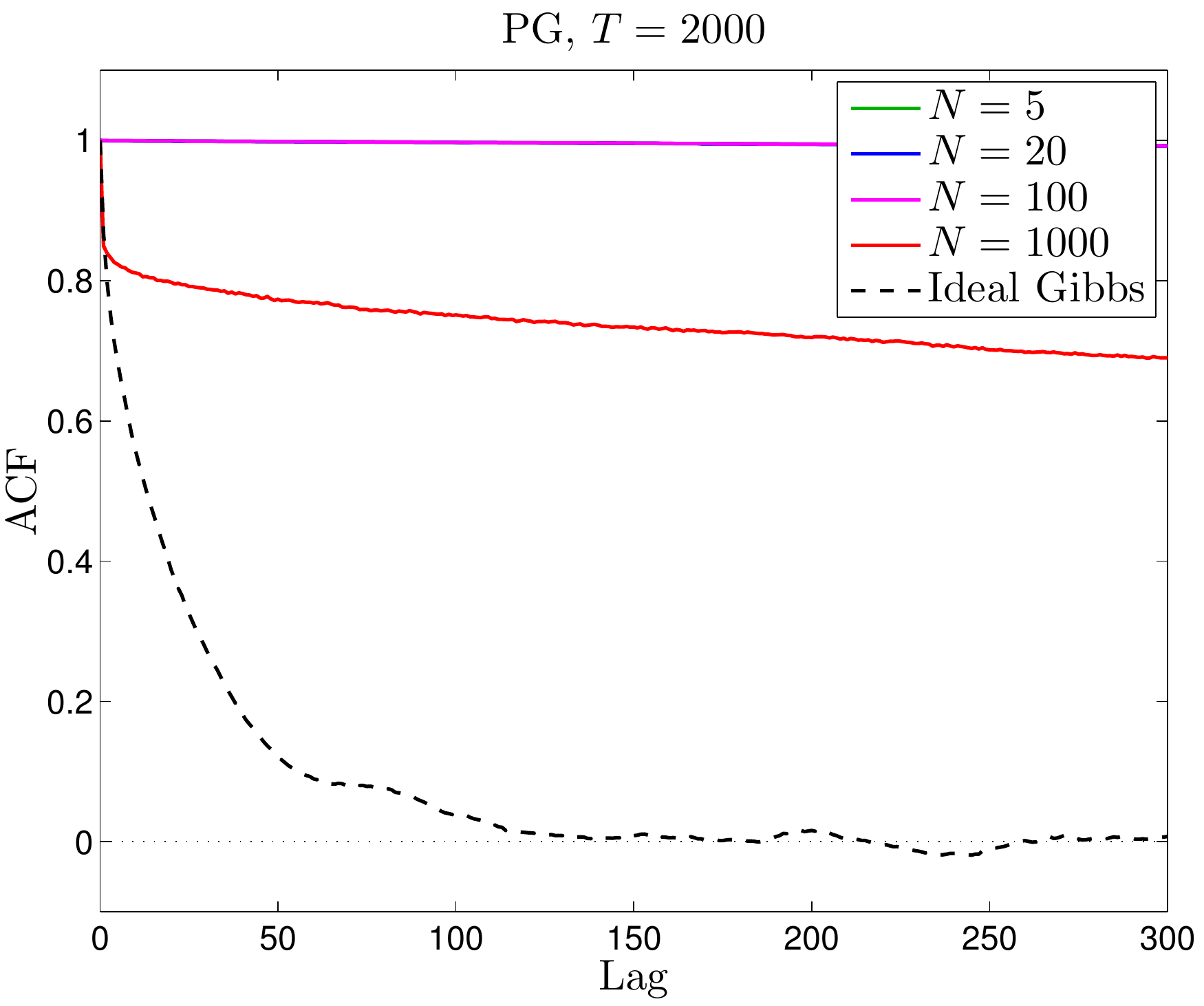}%
  \includegraphics[width=0.5\columnwidth]{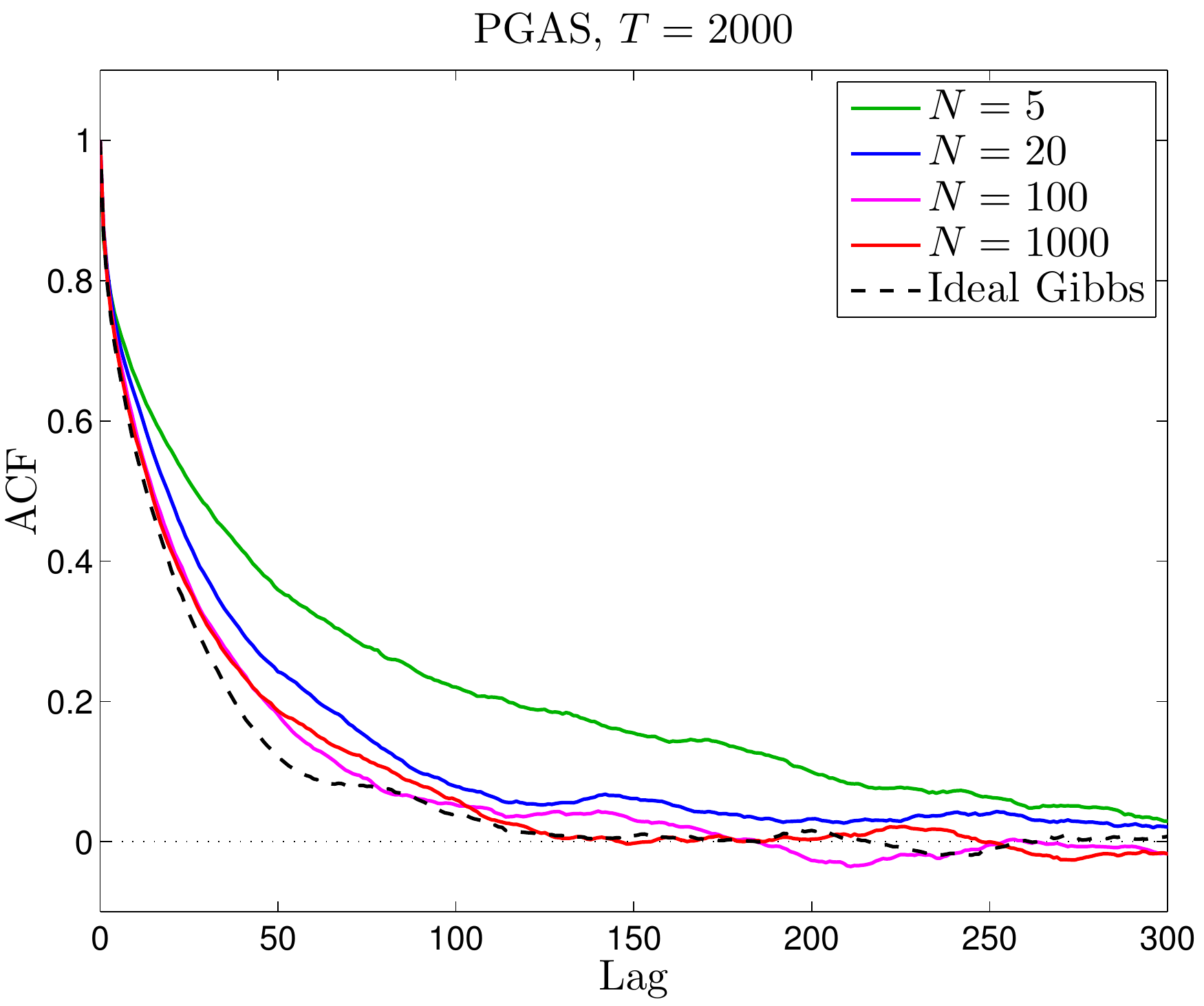}%
  \caption{\acf{s} the parameter $q$ for \pg (left column) and for \pgas (right column) for $T = 100$ (top row) and $T = \thsnd{2}$ (bottom row).
  The results are reported for different number of particles $\Np$, as well as for the ideal Gibbs sampler.      (This figure is best viewed in color.)}
  \label{fig:lg1d_acf}
\end{figure}

\subsection{Degenerate \lgss models}\label{sec:eval:degen_rndlgss}
Many dynamical systems are most naturally modeled as \emph{degenerate} in the sense that
the transition kernel of the state-process does not admit any density
with respect to a dominating measure. It is problematic to use
(particle-filter-based) backward sampling methods for these models, owing to the fact that the backward kernel of the state process
will also be degenerate. As a consequence, it is not possible to approximate the backward kernel
using the forward filter particles.

To illustrate how this difficulty can be remedied by a change of variables,
consider an \lgss model of the form
\begin{subequations}
  \label{eq:eval:lg_degen_sys}
  \begin{align}
    \begin{pmatrix}
      x_{t+1} \\ z_{t+1}
    \end{pmatrix} &=
    \begin{pmatrix}
      A_{11} & A_{12} \\ A_{21} & A_{22}
    \end{pmatrix}
    \begin{pmatrix}
      x_t \\ z_t
    \end{pmatrix} +
    \begin{pmatrix}
      v_t \\ 0
    \end{pmatrix}, & v_t&\sim\N(0,Q), \\
    y_t &= C
    \begin{pmatrix}
      x_t \\ z_t
    \end{pmatrix} + e_t, & e_t&\sim\N(0,R).
  \end{align}
\end{subequations}
Since the Gaussian process noise enters only on the first part of the state vector
(or, equivalently, the process noise covariance matrix is rank deficient)
the state transition kernel is degenerate. However, for the same reason, the state component $z_t$ is $\sigma(x_{1:t})$-measurable
and we can write $z_t = z_t(x_{1:t})$. Therefore, it is possible to rephrase \eqref{eq:eval:lg_degen_sys}
as a non-Markovian model with latent process given by $\{ x_t \}_{t\geq 1}$.

As a first illustration, we simulate $\T = 200$ samples from a fourth-order, single output system with poles at 
$-0.65$, $-0.12$, and $0.22 \pm 0.10i$. We let $\dim(x_t) = 1$ and $Q = R = 0.1$. For simplicity,
we assume that the system parameters are known and seek the joint smoothing distribution $p(x_{1:\T} \mid y_{1:\T})$.
In the non-Markovian formulation it is possible to apply backward-simulation-based methods, such as \pgas and \pgbs, as described in Section~\ref{sec:nonmarkov}.
The problem, however, is that the non-Markovianity gives rise to an $\Ordo(\T^2)$ computational complexity.
To obtain more practical inference algorithms we employ the weight truncation strategy \eqref{eq:nonmarkov:truncated_weights}.

First, we consider the coarse approximation $\ell = 1$.
We run \pgas and \pgbs, both with $\Np = 5$ particles for \thsnd{10} iterations (with the first \thsnd{1} discarded as burn-in).
We then compute running means of the latent variables $x_{1:\T}$ and,
from these, we compute the running root mean squared errors (RMSEs) $\epsilon_n$ \emph{relative to the true posterior means} (computed with a modified Bryson-Frazier smoother \cite{Bierman:1973}).
Hence, if no approximation would have been made, we would expect $\epsilon_n \rightarrow 0$, so any static error can be seen as the effect
of the truncation. The results for five independent runs are shown in Figure~\ref{fig:degen_4d_rmse}.
First, we note that both methods give accurate results. Still, the error for \pgas is significantly lower than for \pgbs.
For further comparison, we also run an \emph{untruncated} forward filter/backward simulator (\ffbsi)
particle smoother \cite{GodsillDW:2004}, using $\Np = \thsnd{10}$ particles and
$\Mp = \thsnd{1}$ backward trajectories (with a computational cost of $\Ordo(\Np\Mp\T^2)$). The resulting RMSE value is shown as a thick gray line in
Figure~\ref{fig:degen_4d_rmse}. This result suggest that \pgas can be a serious competitor to more ``classical'' particle smoothers,
even when there are no unknown parameters of the model. Already with $\ell=1$, \pgas outperforms \ffbsi in terms of accuracy and,
due to the fact that ancestor sampling allows us to use as few as $\Np = 5$ particles at each iteration, at a much lower computational cost.

\begin{figure}[ptb]
  \centering
  \includegraphics[width = 0.8\columnwidth]{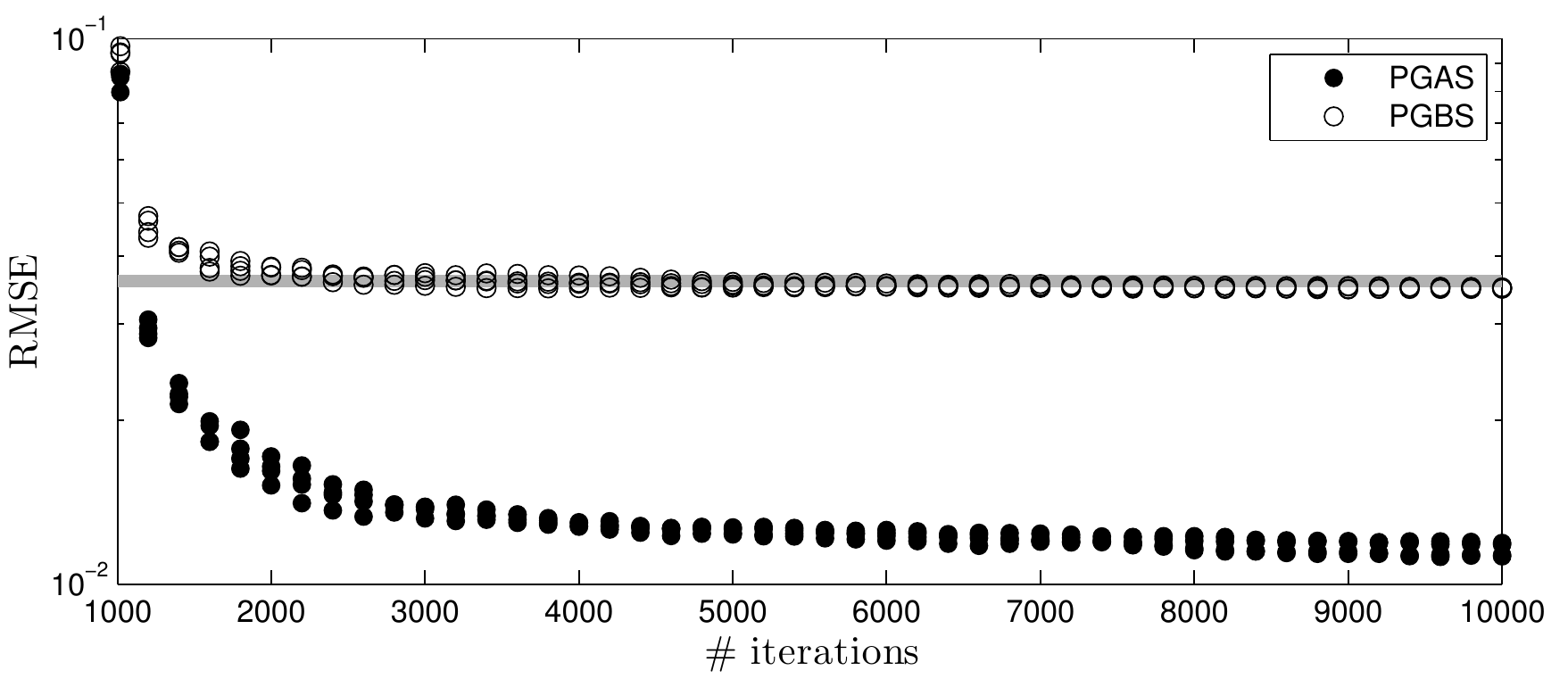}
  \caption{Running RMSEs for $x_{1:\T}$ for five independent runs of \pgas ($\bullet$) and \pgbs ($\circ$), respectively. The truncation level is set to $\ell = 1$.
    The thick gray line corresponds to a run of an untruncated \ffbsi particle smoother.}
  \label{fig:degen_4d_rmse}
\end{figure}


To see how the samplers are affected by the choice of truncation level $\ell$ and by the mixing properties of the system,
we consider randomly generated systems of the form \eqref{eq:eval:lg_degen_sys} of different orders.
We generate 150 random systems, using the \matlab function
\verb+drss+ from the Control Systems Toolbox, with model orders 2, 5 and 20 (50 systems for each model order).
The number of outputs are taken as 1, 2 and 4 for the different model orders, respectively.
We consider different fixed truncation levels
($\ell\in\{1,2,3\}$ for 2nd order systems and $\ell\in\{1,5,10\}$ for 5th and 20th order systems),
as well as an adaptive level with $\upsilon = 0.1$ and $\tau = 10^{-2}$ (see Section~\ref{sec:nonmarkov:truncation}).
All other settings are as above.

Again, we compute the posterior means of $x_{1:\T}$ (discarding \thsnd{1} samples) and RMSE values relative the true posterior mean.
Box plots over the different systems are shown in Figure~\ref{fig:eval:degen_rndlgss}.
Since the process noise only enters on one of the state components, the mixing tends to deteriorate as we increase the model order.
Figure~\ref{fig:degen_Papprox} shows how the probability distributions on $\crange{1}{\Np}$ change as we increase the truncation level,
in two representative cases for a 5th and a 20th order system, respectively.
By using an adaptive level, we can obtain accurate results
for systems of different dimensions, without having to change any settings between the runs.

\begin{figure}[ptb]
  \centering
  \includegraphics[height = \boxplotheight]{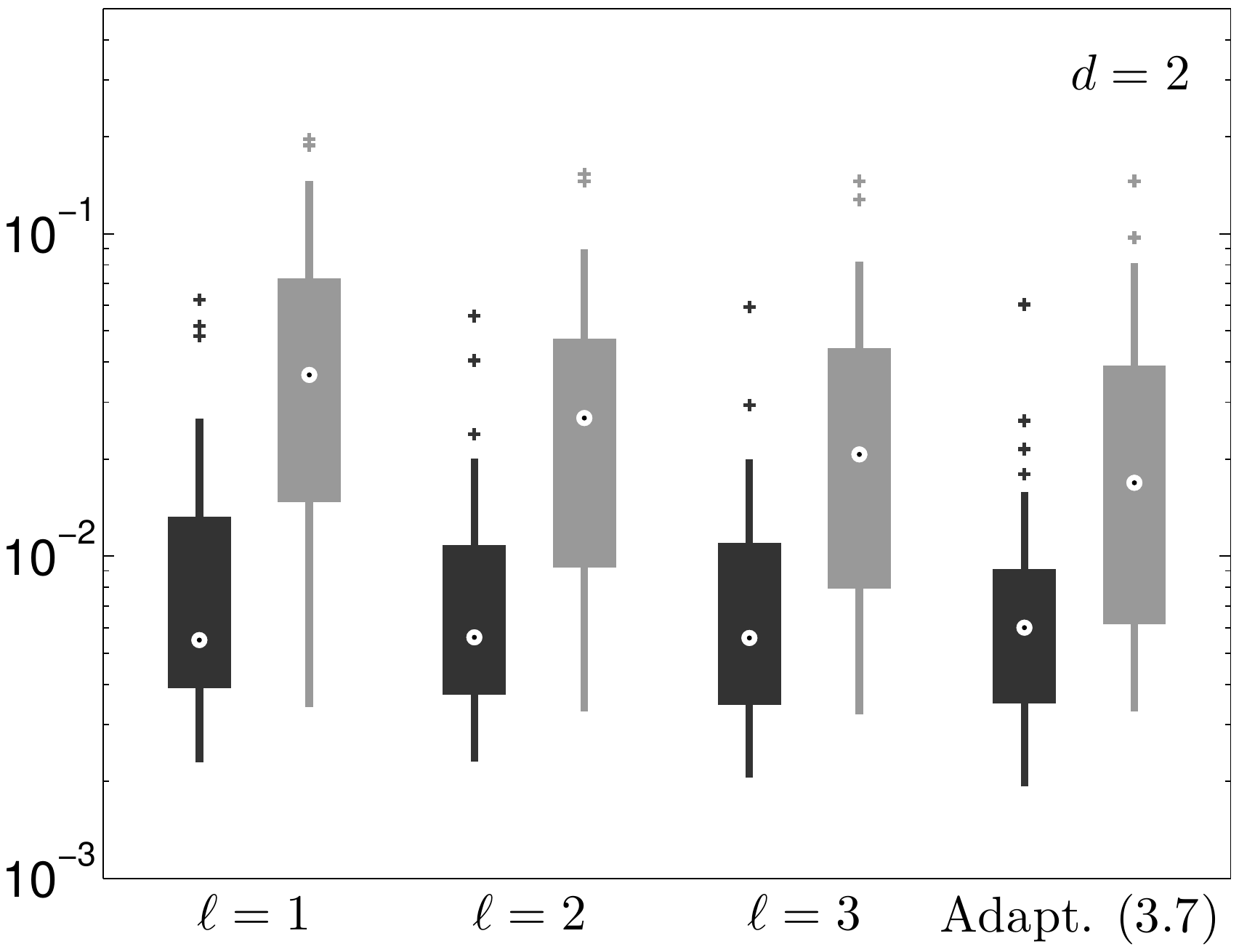}
  \includegraphics[height = \boxplotheight]{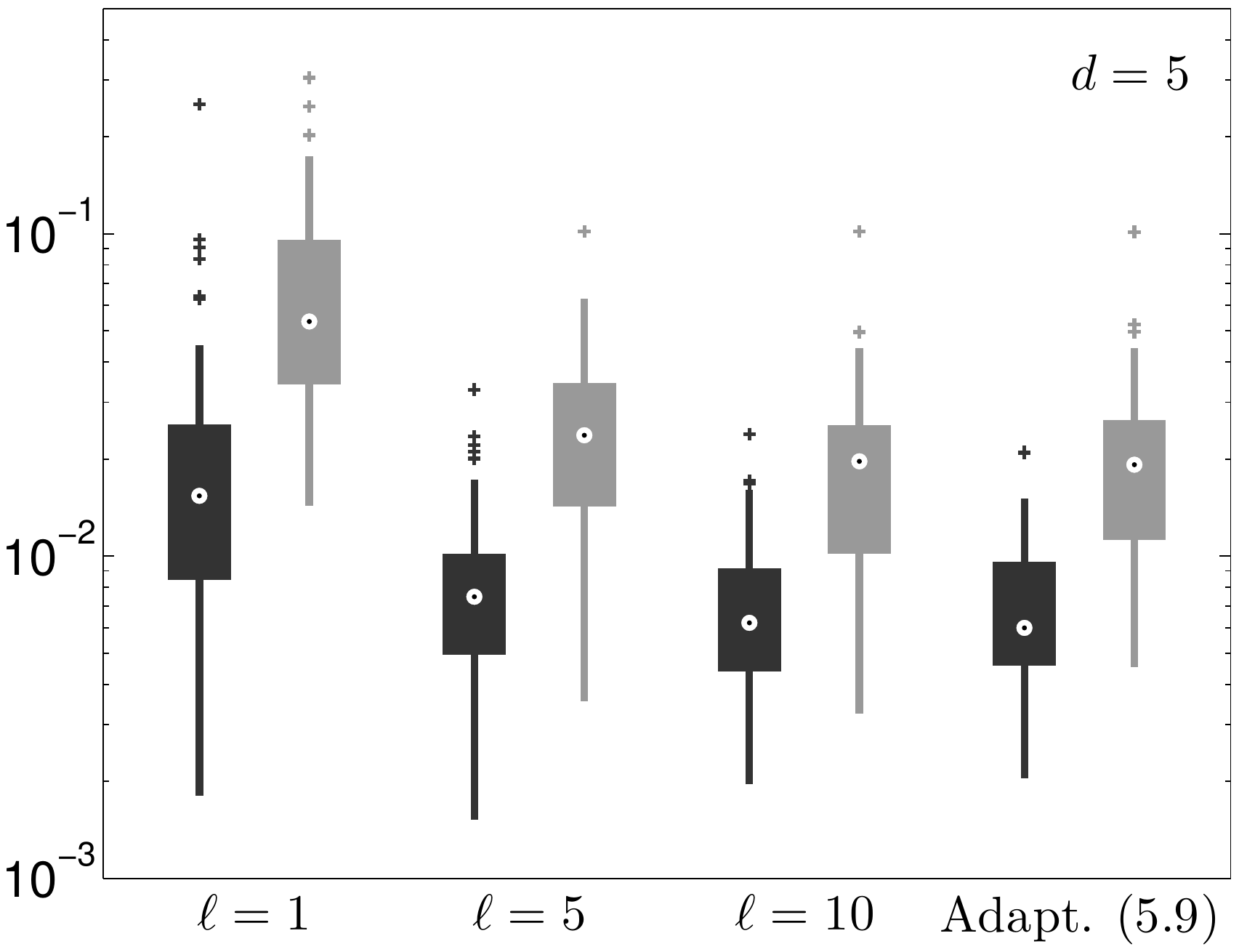}
  \includegraphics[height = \boxplotheight]{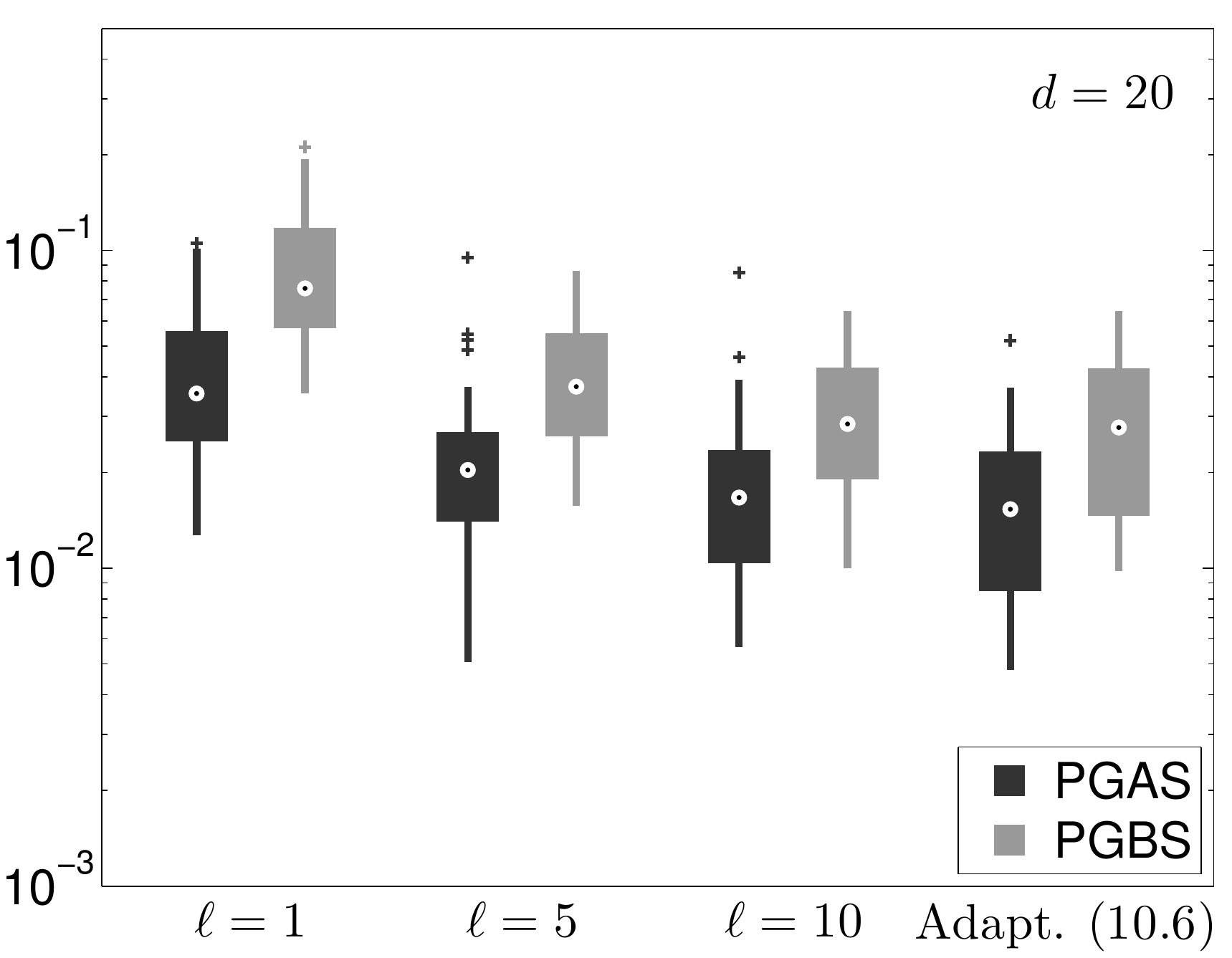}
  \caption{Box plots of the RMSE errors for \pgas (black) and \pgbs (gray),
    for 150 random systems of different dimensions $d$
    (left, $d = 2$; middle, $d = 5$; right, $d = 20$). Different values for the truncation level $\ell$ are considered. The rightmost boxes correspond
    to an adaptive truncation and the values in parentheses are the average truncation levels over all systems and MCMC iterations (the same for both methods).
    The dots within the boxes show the median errors.}
  \label{fig:eval:degen_rndlgss}
\end{figure}

\subsection{Epidemiological model}\label{sec:eval:flu}
As a final numerical illustration, we consider identification of an epidemiological
model using \pgas.
Seasonal influenza epidemics each year cause millions of severe illnesses and hundreds of thousands of deaths
world-wide \cite{GinsbergMPBSB:2009}. Furthermore, new strains of influenza viruses can possibly
cause pandemics with very severe effects on the public health. The ability to accurately predict disease activity
can enable early response to such epidemics, which in turn can reduce their impact.

We consider a susceptible/infected/recovered (SIR) model with environmental noise and seasonal
fluctuations \cite{KeelingR:2007,RasmussenRK:2011}. The model, specified by a stochastic differential equation,
is discretized according to the Euler-Maruyama method, yielding
\begin{subequations}
  \label{eq:epi_SIR}
  \begin{align}
    S_{t+\sampletime} &= S_t + \mu \populationsize \sampletime - \mu S_t \sampletime -
    \left( 1 + F v_t \right)\beta_t\frac{S_t}{\populationsize}I_t \sampletime, \\
    I_{t+\sampletime} &= I_t - (\gamma + \mu)I_t \sampletime + \left( 1 + F v_t \right)\beta_t\frac{S_t}{\populationsize}I_t \sampletime, \\
    R_{t+\sampletime} &= R_t + \gamma I_t \sampletime - \mu R_t \sampletime,
  \end{align}
\end{subequations}
where $v_t \sim \N(0, 1/\sqrt{\sampletime} )$ and $\sampletime$ is the sampling time.
Here, $S_t$, $I_t$ and $R_t$ represent the number of susceptible, infected and recovered individuals at time $t$ (months), respectively.
The total population size $\populationsize = 10^6$ and the host birth/death rate $\mu = 0.0012$ are assumed known.
The seasonally varying transmission rate is given by
$\beta_t = R_0(\gamma+\mu) (1+ \alpha \sin (2\pi t/12))$ where $R_0$ is the basic reproductive ratio, $\gamma$ is the rate of recovery
and $\alpha$ is the strength of seasonality.%

Furthermore, we consider an observation model which is inspired by the Google Flu Trends project \cite{GinsbergMPBSB:2009}.
The idea is to use the frequency of influenza-related search engine queries to infer knowledge of the dynamics
of the epidemic. Let $Q_k$ be the proportion of influenza-related queries counted during a time interval $(\Delta (k-1), \Delta k]$.
Following \cite{GinsbergMPBSB:2009}, we use a linear relationship between the log-odds of the relative query counts
and the log-odds of the proportion of infected individuals,
\begin{align}
  \label{eq:epi_obs}
  y_k &\triangleq \logit(Q_k) = \rho \logit( \bar I_k /\populationsize) + e_k, & e_k &\sim \N(0,\sigma^2),
\end{align}
where $\bar I_k$ is the mean value of $I_t$ during the time interval  $(\Delta (k-1), \Delta k]$
and $\logit(p) = \log(p/(1-p))$.
As in \cite{GinsbergMPBSB:2009} we consider weekly query counts, \ie, $\Delta = 7/30$ (assuming for simplicity that we have 30 days in each month).
Using this value of $\Delta$ as sampling time will, however, result in an overly large discretization errors. Instead, we sample the
model \eqref{eq:epi_SIR} $m=7$ times per week: $\sampletime = \Delta/m$.

In \cite{RasmussenRK:2011}, the particle marginal MH sampler \cite{AndrieuDH:2010}
is used to identify a similar SIR model, though with a different observation model.
A different Monte Carlo strategy, based on a particle filter with an augmented state space,
for identification of an SIR model is proposed in \cite{SkvortsovR:2011}.
We suggest to use the \pgas algorithm for joint state and parameter inference in the model \eqref{eq:epi_SIR}--\eqref{eq:epi_obs}.
However, there are two difficulties in applying \pgas  directly to this model.
Firstly, the transition kernel of the state process, as defined between consecutive observation time points $\Delta(k-1)$ and $\Delta k$,
is not available in closed form. Secondly, since the state is three-dimensional, whereas
the driving noise $v_t$ is scalar, the transition kernel is degenerate. To cope with these difficulties we (again)
suggest collapsing the model to the driving noise variables.
Let $V_k =
\begin{pmatrix}
  v_{\Delta(k-1)} &   v_{\Delta(k-1)+\sampletime} & \cdots & v_{\Delta k - \sampletime}
\end{pmatrix}^\T$. It follows that the model \eqref{eq:epi_SIR}--\eqref{eq:epi_obs} can be equivalently expressed as
the non-Markovian latent variable model,%
\begin{subequations}
  \label{eq:epi_SIR_collapsed}
  \begin{align}
    V_k &\sim \N(0, I_{m}/\sqrt{\sampletime}), \\
    y_k &\sim g_\theta(y_k \mid V_{1:k}),
  \end{align}
\end{subequations}
for some likelihood function $g_\theta$ (see \eqref{eq:epi_collapsed_likelihood}).
A further motivation for using this reformulation is that the latent variables $V_k$ are \emph{a priori}
independent of the model parameters $\theta$. This can result in a significant improvement in mixing
of the Gibbs sampler, in particular when there are strong dependencies between the system state
the parameters \cite{GolightlyW:2008,PapaspiliopoulosRS:2003}.

The parameters of the model are $\parameter = (\gamma, R_0, \alpha, F, \rho, \sigma)$,
with the true values given by
$\gamma = 3$, $R_0 = 10$, $\alpha = 0.16$, $F = 0.03$ , $\rho = 1.1$ and $\sigma  = 0.224$.
We place a normal-inverse-gamma
prior on the pair $( \rho, \sigma^2 )$, \ie, $p(\rho, \sigma^2) = \N(\rho ; \mu_\rho, c_\rho \sigma^2) \mathcal{IG}(\sigma^2 ; a_\sigma, b_\sigma)$.
The hyperparameters are chosen as $\mu_\rho = 1$, $c_\rho = 0.5$ and $a_\sigma = b_\sigma = 0.01$. For the remaining parameters,
we use improper flat priors on $\reals_+$.

We generate 8 years of data with weekly observations. The number of infected individuals $I_t$ over this time period is
shown in Figure~\ref{fig:epi_I}. The first half of the data batch is used for estimation of the model parameters. We
run \pgas with $\Np = 10$ for \thsnd{50} iterations (discarding the first \thsnd{10}).
For sampling the system parameters $(\gamma, R_0, \alpha, F)$, we use Metropolis-Hastings steps with a Gaussian random
walk proposal, tuned according to an initial trial run. For $(\rho, \sigma^2)$, we exploit the conjugacy
of the normal-inverse-gamma prior to the likelihood \eqref{eq:epi_obs} and sample
the variables from their true posterior. The innovation variables $V_{1:\T}$ are sampled
from the \pgas kernel by Algorithm~\ref{alg:pgas:pgas} (no truncation is used for the ancestor sampling weights).
Since the latter step is the computational bottleneck of the algorithm, we execute ten MH steps for $\theta$,
for each draw from the \pgas kernel.

It is worth pointing out that, while the sampler effectively targets the collapsed model \eqref{eq:epi_SIR_collapsed},
it is most straightforwardly implemented using the original state variables from \eqref{eq:epi_SIR}.
With $x_k = (S_{\Delta k}, I_{\Delta k}, R_{\Delta k})^\+$ we can simulate $x_{k+1}$ given $x_k$ according to \eqref{eq:epi_SIR}
which is used in the underlying particle filter. The innovation variables $V_k$ need only be taken into account
for the ancestor sampling step. Let $V_{1:\T}^\prime$ be the reference innovation trajectory.
To compute the ancestor sampling weights \eqref{eq:pgas:as_weights} we need to evaluate the ratios,
\begin{align}
  \frac{p_\theta( (V_{1:k-1}^i, V_{k:\T}^\prime), y_{1:\T})}{p_\theta(V_{1:k-1}^i, y_{1:k-1})}
  \propto \prod_{\ell = k}^\T g_\theta(y_{\ell} \mid V_{1:k-1}^i, V_{k:\ell}^\prime).
\end{align}
Using \eqref{eq:epi_obs}, the observation likelihood can be written as
\begin{align}
  \label{eq:epi_collapsed_likelihood}
  g_\theta(y_{\ell} \mid V_{1:k-1}^i, V_{k:\ell}^\prime) = \N(y_\ell \mid \rho \logit( \bar I_\ell\{ x_{k-1}^i, V_{k:\ell}^{\prime} \} /\populationsize), \sigma^2),
\end{align}
where $I_\ell\{ x_{k-1}^i, V_{k:\ell}^{\prime} \}$ is obtained by simulating the system \eqref{eq:epi_SIR}
from time ${\Delta(k-1)}$ to time $\Delta\ell$, initialized at $x_{k-1}^i$ and using the innovation sequence $ V_{k:\ell}^{\prime}$.

Histograms representing the estimated posterior parameter distributions are
shown in Figure~\ref{fig:epi_hist}. As can be seen, the true system parameters fall well within the credible regions.
Finally, the identified model is used to make one-month-ahead predictions of the disease activity for the subsequent four years,
as shown in Figure~\ref{fig:epi_I}. The predictions are made by sub-sampling the Markov chain and, for each sample, running
a particle filter on the validation data using 100 particles. As can be seen, we obtain an accurate prediction of the
disease activity, which falls within the estimated 95 \% credibility intervals, one month in advance.

\begin{figure}[ptb]
  \centering
  \includegraphics[width = 0.7\columnwidth]{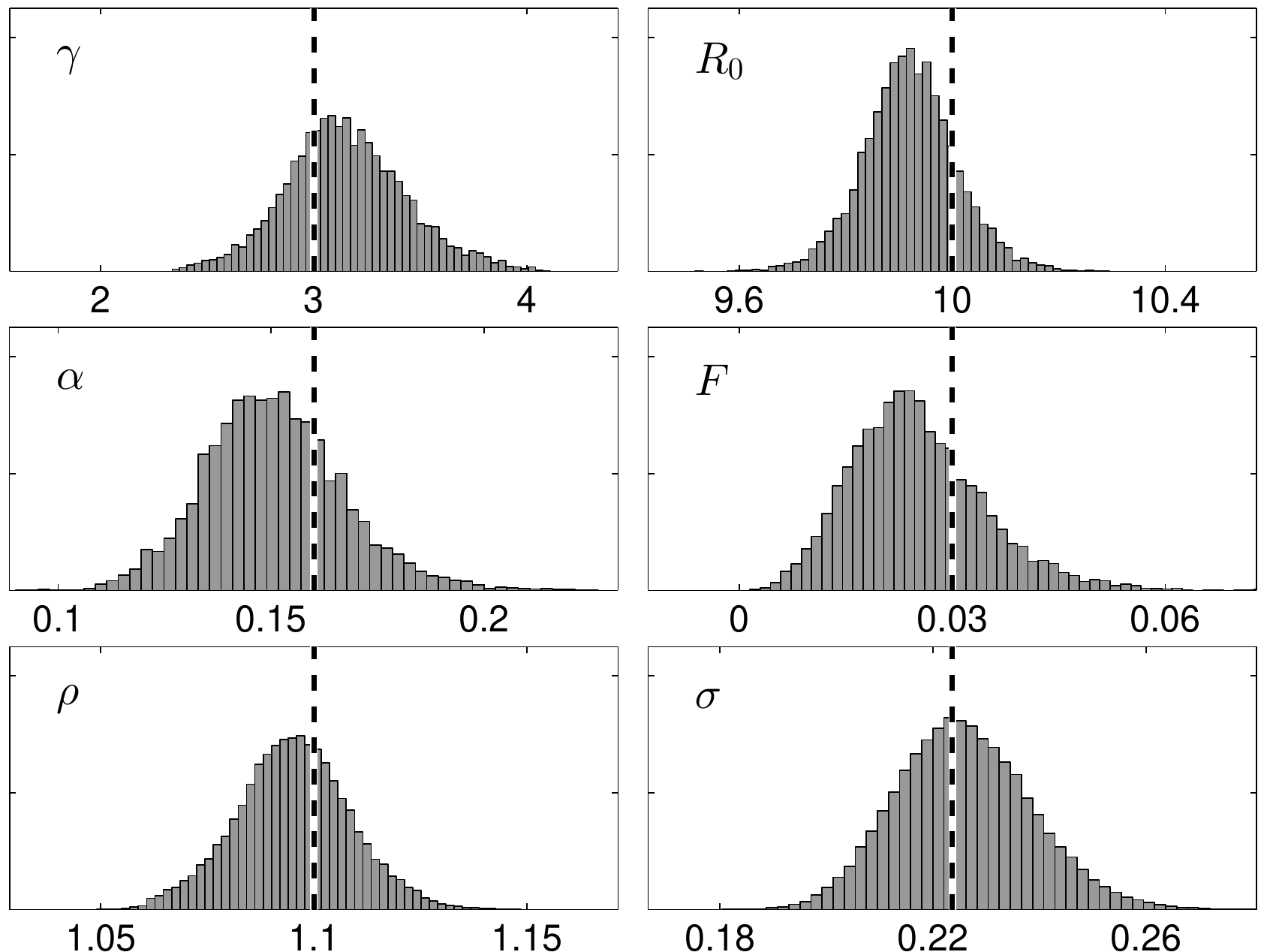}
  \caption{Posterior densities for the parameters of model \eqref{eq:epi_SIR}--\eqref{eq:epi_obs}.
    The true values are marked by vertical dashed lines.}
  \label{fig:epi_hist}
\end{figure}

\begin{figure*}[ptb]
  \centering
  \includegraphics[width = 1\linewidth]{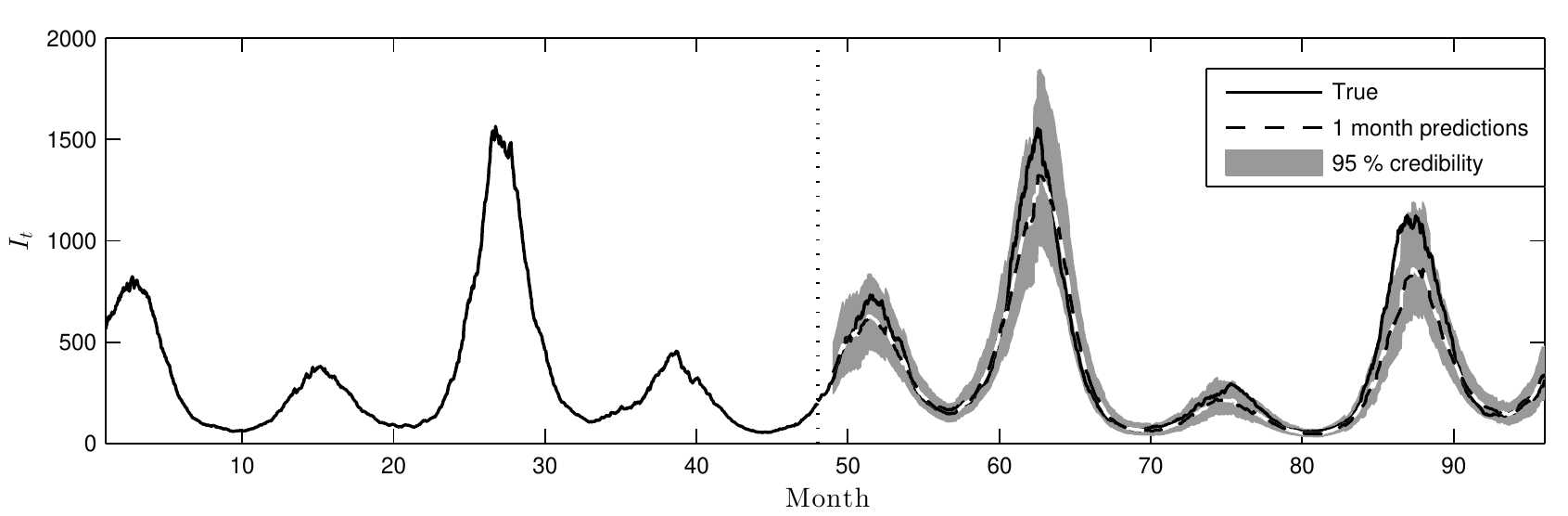}
  \caption{Disease activity (number of infected individuals $I_t$) over an eight year period. The first four years are used as estimation data, to find the unknown parameters of the model.  For the consecutive four years, one-month-ahead predictions are computed using the estimated model.}
  \label{fig:epi_I}
\end{figure*}

\section{Discussion}\label{sec:discussion}%
\pgas is a novel approach to \pmcmc that provides the statistician with an off-the-shelf class
of Markov kernels which can be used to simulate, for instance, the typically high-dimensional and highly autocorrelated state trajectory
in a state-space model. This opens up for using \pgas as a key component in different inference algorithms,
enabling both Bayesian and frequentist parameter inference as well as state inference.
However, \pgas by no means limited to inference in state-space models. Indeed,
we believe that the method can be particularly useful for models with more complex dependencies,
such as non-Markovian, nonparametric, and graphical models.

The \pgas Markov kernels are built upon two main ideas. First, by conditioning the underlying \smc sampler
on a reference trajectory the correct stationary distribution of the kernel is enforced.
Second, \emph{ancestor sampling} enables movement around the reference trajectory which drastically improves
the mixing of the sampler. In particular, we have shown empirically that ancestor sampling makes
the mixing of the \pgas kernels robust to a small number of particles as well as to large data records.

Ancestor sampling is basically a way of exploiting backward simulation ideas without needing an explicit backward
pass. Compared to \pgbs, a conceptually similar method that does require an explicit backward pass, \pgas has
several advantages, most notably for inference in non-Markovian models. When using the proposed truncation
of the backward weights, we have found \pgas to be more robust to the approximation error than \pgbs, yielding
up to an order-of-magnitude improvement in accuracy.
An interesting topic for future work is to further investigate the effect on these samplers
by errors in the backward weights, whether these errors arise from a truncation or some other
approximation of the transition density function.
It is also worth pointing out that for non-Markovian model \pgas is simpler to implement than \pgbs as it requires less bookkeeping.
It can also be more memory efficient since it does not require storage of intermediate
quantities that are needed for a separate backward simulation pass. See \cite{JacobMR:2013} for a related
discussion on path storage in the particle filter.

Other directions for future work include further analysis of the ergodicity of \pgas.
While the established uniform ergodicity result is encouraging, it does not provide
information about how fast the mixing rate improves with the number of particles. Finding
informative rates with an explicit dependence on $\Np$ is an interesting, though challenging,
topic for future work. It would also be interesting to further investigate empirically the convergence rate of \pgas
for different settings, such as the number of particles, the amount of data, and the dimension of the latent process.

\appendix

\section{Proofs}\label{app:proof}

\subsection{Proof of Proposition~\ref{prop:equiv}}%
In this appendix we prove Proposition~\ref{prop:equiv}.
For improved readability we provide the complete proof,
though it should be noted that the main part is due to \cite{OlssonR:2011}.
For ease of notation, we write $\E$ for $\E_{\theta,x_{1:\T}^\prime}$.
First, note that
for a bootstrap proposal kernel, the weight function \eqref{eq:smc:weightfunction} is given by
$\wf{\theta}{t}(x_t) = g_\theta(y_t \mid x_t)$, \ie, it depend only on the current state and not on its ancestor.
As a consequence, the law of the forward particle system is independent of the ancestor variables
$\{ a_t^\Np \}_{t=2}^\T$, meaning that the particle systems (excluding $\{ a_t^\Np \}_{t=2}^\T$) are
equally distributed for \pgas and for \pgbs.

Let $B\in\sigmaX^\T$ be a measurable rectangle: $B = \times_{t=1}^\T B_t$
with $B_t \in \sigmaX$ for $t = \range{1}{\T}$. Then,
\begin{align*}
  \kPGAS(x_{1:\T}^\prime, B) &= \E \left[ \prod_{t=1}^\T \I_{B_t}(x_t^{b_t}) \right], &&\text{and} &
  \kPGBS(x_{1:\T}^\prime, B) &= \E \left[ \prod_{t=1}^\T \I_{B_t}(x_t^{j_t}) \right].
\end{align*}
Since the measurable rectangles form a $\pi$-system generating $\sigmaX^\T$, it is by the $\pi$-$\lambda$ theorem sufficient to show that
$\E [ h(x_{1:\T}^{b_{1:\T}}) ] = \E [ h(x_{1:\T}^{j_{1:\T}}) ]$
for all bounded, multiplicative functionals, $h(x_{1:\T}) = \prod_{t=1}^\T h_t(x_t)$. As in \cite{OlssonR:2011}, we establish this result
by induction. Hence, for $t<\T$, assume that
\begin{align}
  \E \left[ \prod_{s=t+1}^\T h_{s}(x_s^{b_s}) \right]
  = \E \left[ \prod_{s=t+1}^\T h_s(x_s^{j_s}) \right].
\end{align}
For $t = \T-1$, the induction hypothesis holds since $b_\T$ and $j_\T$ are equally distributed
(both are drawn from the discrete distribution induced by the weights $\{w_\T^i \}_{i=1}^\Np$).
Let
\begin{align}
  \nonumber
  \Lambda_t( x_{t+1}^{j_{t+1}}, h) &\triangleq \E\left[ h( x_{t}^{j_{t}}) \mid  x_{t+1}^{j_{t+1}}\right]
  = \E\left[ \E\left[ h( x_{t}^{j_{t}}) \mid \xx_t,  x_{t+1}^{j_{t+1}}\right]  \mid  x_{t+1}^{j_{t+1}}\right] \\
  &= \E\left[ \sum_{i=1}^\Np  h( x_{t}^{i}) \frac{w_t^i f_{\theta}(x_{t+1}^{j_{t+1}} \mid x_t^i) }{ \normsum{l} w_t^l f_{\theta}(x_{t+1}^{j_{t+1}} \mid x_t^l) }  \mid  x_{t+1}^{j_{t+1}} \right],
\end{align}
where we recall that $w_t^i = \wf{\theta}{t}(x_{t}^i)$ and where the last equality follows from~\eqref{eq:equiv:bsi}.
Consider,
\begin{align}
  \label{eq:equiv:outerexp}
  \E \left[ \prod_{s=t}^\T h_{s}(x_s^{b_s}) \right] = 
  \E \left[ \E\left[ h_t(x_t^{b_t}) \mid x_{t+1:\T}^{b_{t+1:\T}}, b_{t+1:\T} \right] \prod_{s=t+1}^\T h_{s}(x_s^{b_s}) \right].
\end{align}
Using the Markov property of the generated particle system and the tower property of conditional expectation, we have
\begin{align}
  \label{eq:equiv:innerexp}
  \E\left[ h_t(x_t^{b_t}) \mid x_{t+1:\T}^{b_{t+1:\T}}, b_{t+1:\T} \right] =
  \E\left[ \E\left[ h_t(x_t^{b_t}) \mid \xx_t, x_{t+1}^{b_{t+1}}, b_{t+1} \right] \mid  x_{t+1}^{b_{t+1}}, b_{t+1} \right].
\end{align}
Recall that $b_{t} = a_{t+1}^{b_{t+1}}$. Consider first the case $b_{t+1} < \Np$. From \eqref{eq:smc:joint_proposal},
we have that $\Prb( b_t = i \mid \xx_t) \propto w_t^i$ and $x_{t+1}^{b_{t+1}} \mid x_t^{b_t} \sim f_\theta(\,\cdot \mid x_t^{b_t})$.
If follows from Bayes' theorem that $\Prb( b_t = i \mid \xx_t, x_{t+1}^{b_{t+1}}) \propto w_t^i f_\theta(x_{t+1}^{b_{t+1}} \mid x_t^{b_t})$.
However, by the ancestor sampling procedure (Algorithm~2, line~\ref{row:atN}), the same expression holds also for $b_{t+1} = \Np$.
We can thus write~\eqref{eq:equiv:innerexp} as
\begin{align}
  \nonumber
  \E\left[ h_t(x_t^{b_t}) \mid x_{t+1:\T}^{b_{t+1:\T}}, b_{t+1:\T} \right] &=
  \E\left[ \sum_{i=1}^\Np  h_t( x_{t}^{i}) \frac{w_t^i f_{\theta}(x_{t+1}^{b_{t+1}} \mid x_t^i) }{ \normsum{l} w_t^l f_{\theta}(x_{t+1}^{b_{t+1}} \mid x_t^l) }  \mid  x_{t+1}^{b_{t+1}}, b_{t+1} \right] \\
  &= \Lambda_t(x_{t+1}^{b_{t+1}}, h_t),
\end{align}
Hence, since the function $x_{t+1} \mapsto \Lambda_t(x_{t+1}, h_t)$ is bounded, we can use the induction hypothesis to write
\eqref{eq:equiv:outerexp} as
\begin{align*}
  \E &\left[ \prod_{s=t}^\T h_{s}(x_s^{b_s}) \right] = \E \left[  \Lambda_t(x_{t+1}^{b_{t+1}}, h_t) \prod_{s=t+1}^\T h_{s}(x_s^{b_s}) \right]
  = \E \left[  \Lambda_t(x_{t+1}^{j_{t+1}}, h_t) \prod_{s=t+1}^\T h_{s}(x_s^{j_s}) \right] \\
  &\hspace{1em}= \E \left[  \E\left[ h_t(x_t^{j_t}) \mid x_{t+1:\T}^{j_{t+1:\T}}, j_{t+1:\T} \right]  \prod_{s=t+1}^\T h_{s}(x_s^{j_s}) \right]
  = \E \left[ \prod_{s=t}^\T h_{s}(x_s^{j_s}) \right].
\end{align*}
\hfill$\square$

\subsection{Proof of Proposition~\ref{prop:truncation_kld}}
  With $M = \T - t + 1$ and $w(k) = w_{t-1}^k$, the distributions of interest are given by
  \begin{align*}
    \asprob(k) &= \frac{w(k)\prod_{s = 1}^M h_s(k)} {\normsum{l} w(l)\prod_{s = 1}^M h_s(l)} & &\text{ and } &
    \asprobtrunc(k) &= \frac{w(k)\prod_{s = 1}^\ell h_s(k)} {\normsum{l} w(l)\prod_{s = 1}^\ell h_s(l)},
  \end{align*}
  respectively. Let
  $\varepsilon_s \triangleq \max_{k,l} \left( h_s(k) / h_s(l)-1 \right) \leq A\exp(-cs)$
  and consider
  \begin{align*}
    \left( \sum_l w(l) \prod_{s=1}^\ell h_s(l) \right) \prod_{s = \ell+1}^M h_s(k) &\leq 
    \sum_l  \left( w(l) \prod_{s=1}^\ell h_s(l) \prod_{s = \ell+1}^M h_s(l)(1+\varepsilon_s) \right) \\
    &= \left(  \sum_l  w(l) \prod_{s=1}^M h_s(l)  \right) \prod_{s = \ell+1}^M (1+\varepsilon_s) .
  \end{align*}
  It follows that the KL divergence is bounded according to,
  \begin{align*}
    \nonumber
    \KLD&(\asprob \| \asprobtrunc) = \sum_k \asprob(k) \log \frac{\asprob(k)}{\asprobtrunc(k)} \\
    \nonumber
    &= \sum_k \asprob(k) \log \left( \frac{ \prod_{s = \ell+1}^M h_s(k) \left( \sum_l w(l) \prod_{s=1}^\ell h_s(l) \right) }{  \sum_l w(l) \prod_{s=1}^M h_s(l) } \right) \\
    \nonumber
    &\leq \sum_k \asprob(k) \sum_{s = \ell+1}^M \log(1+\varepsilon_s) \leq \sum_{s = \ell+1}^M \varepsilon_s \leq A \sum_{s = \ell+1}^M \exp(-cs) \\
    &= A \frac{e^{-c(\ell+1)} - e^{-c(M+1)}}{1-e^{-c}}.
  \end{align*}
  \hfill$\square$

\bibliographystyle{IEEEtran}
\bibliography{references}

\end{document}